\numberwithin{equation}{section}
\newtheorem{Theorem}{Theorem}[section]
\newtheorem*{Theorem*}{Theorem}
\newtheorem*{Corollary*}{Corollary}
\newtheorem{Claim}[Theorem]{Claim}
\newtheorem{Proposition}[Theorem]{Proposition}
 { \theoremstyle{definition}
\newtheorem{Definition}[Theorem]{Definition}

\newtheorem{Construction}[Theorem]{Construction}
 }
\def \bla{{\boldsymbol{\lambda}}}
\def \bmu{{\boldsymbol{\mu}}}
\def \bth{{\boldsymbol{\theta}}}
\def \bdl{{\boldsymbol{\delta}}}
\def \bnu{{\boldsymbol{\nu}}}
\def \bet{{\boldsymbol{\eta}}}
\def \bpi{\boldsymbol\pi}
\def \bphi{\boldsymbol\phi}
\def \bro{\boldsymbol\rho}
\def \bLam{\boldsymbol\Lambda}
\def \si{\sigma}
\def \la{\lambda}
\def \be{\beta}
\def \al{\alpha}
\def \ta{\theta}
\def \dl{\delta}
\def \Dl{\Delta}
\def \CA{\mathcal A}
\def \CK{\mathcal M}
\def \CM{\mathcal M}
\def \CN{\mathcal N}
\def \CV{\mathcal V}
\def \CP{\mathcal P}
\def \CT{\mathcal T}
\def \BC{\mathbb{C}}
\def \BN{\mathbb{N}}
\def \BM{\mathbb{M}}
\def \BI{\mathbb{I}}
\def \jum{{\sf H}}
\begin{document}
\allowdisplaybreaks

\renewcommand{\thefootnote}{}

\renewcommand{\PaperNumber}{106}

\FirstPageHeading

\ShortArticleName{How to Draw a Correlation Function}

\ArticleName{How to Draw a Correlation Function\footnote{This paper is a~contribution to the Special Issue on Mathematics of Integrable Systems: Classical and Quantum in honor of Leon Takhtajan.

~~\,The full collection is available at \href{https://www.emis.de/journals/SIGMA/Takhtajan.html}{https://www.emis.de/journals/SIGMA/Takhtajan.html}}}

\Author{Nikolay BOGOLIUBOV and Cyril MALYSHEV}

\AuthorNameForHeading{N.~Bogoliubov and C.~Malyshev}

\Address{St.-Petersburg Department of Steklov Institute of Mathematics, RAS, \\ Fontanka 27, St.-Petersburg, Russia}
\Email{\href{mailto:email@address}{nmbogoliubov@gmail.com}, \href{mailto:malyshev@pdmi.ras.ru}{malyshev@pdmi.ras.ru}}

\ArticleDates{Received June 05, 2021, in final form December 02, 2021; Published online December 09, 2021}

\Abstract{We discuss connection between the $XX0$ Heisenberg spin chain and some aspects of enumerative combinatorics. The representation of the Bethe wave functions via the Schur functions allows to apply the theory of symmetric functions to the calculation of the correlation functions. We provide a combinatorial
derivation of the dynamical auto-correlation functions and visualise them in terms of nests of self-avoiding lattice paths. Asymptotics of the auto-correlation functions are obtained in the double scaling limit provided that the evolution parameter is large.}

\Keywords{$XX0$ Heisenberg spin chain; correlation functions; enumerative combinatorics}

\Classification{05A19; 05E05; 82B23}

\begin{flushright}
\begin{minipage}{53mm}
\it Dedicated to L.A.~Takhtajan\\
on occasion of his 70$\,{}^{th}$ birthday
\end{minipage}
\end{flushright}

\renewcommand{\thefootnote}{\arabic{footnote}}
\setcounter{footnote}{0}

\section{Introduction}

The quantum inverse scattering method turned out to be one of the most effective approaches to the solution of the quantum integrable interacting
many-body systems in low dimen\-sions~\cite{f, stf}. This method allows to obtain important results on the spin dynamics of the quantum
Heisenberg chain \cite{tf2, ft2, t, tf1}.

The zero anisotropy limit of the paradigmatic spin-$1/2$ $XXZ$ Heisenberg model is known as the $XX0$ chain. The correlation functions of the model are of considerable interest. Their behavior was intensively investigated for the system in the thermodynamic limit.
Connection between the $XX0$ chain and the low-energy QCD, as well as a possibility of third order phase transition \cite{gross} in the spin chain, are discussed in \cite{tier, zah, zah1, rzz}. The asymptotics of the partition functions, as well as the phase diagrams, for $XX0$ chain and related models are studied in~\cite{zah, zah1, rzz}.

The $XX0$ chain may be considered as a special case of free fermions \cite{col, its, lieb}.
Mathematical methods related with free fermions are useful in the study, for instance, of the Schur functions~\cite{buf, macd}, of random walks~\cite{ges}, of plane partitions \cite{bres} and three-dimensional Young diagrams~\cite{oko1}, of enumerative combinatorics \cite{stan1,stan2} and probability theory~\cite{borod, vers}.

Our paper is about the calculation of the dynamical auto-correlation functions of the $XX0$ chain \cite{nmb, bmnph, bmumn}. The representation of $N$-particles Bethe state-vectors in terms of the Schur functions and the well-developed theory of the symmetric functions allow us to express the transition amplitudes in the determinantal form. The introduction of
the off-shell Bethe state-vectors naturally connects the transition
amplitudes with the combinatorial objects. The vicious walkers were introduced in \cite{fish} and describe the situation in which two or more walkers arriving at the same lattice site annihilate each other. Two essentially
different types of the walkers paths are distinguished as lock steps and random turns models \cite{fish, forr1, 2}. Two main topologies of lock steps paths are known as stars and watermelons \cite{4, 5}. The graphical description of
the Schur functions is due to the bijection between semi-standard Young tableaux and stars \cite{4, 5}. The random turns paths naturally appear as
the transition amplitudes over the ferromagnetic states of the $XX0$ Hamiltonian. This approach allows us to interpret the analytical answers of
the auto-correlation functions as a superposition of the nests of self-avoiding directed lattice paths of vicious walkers.

The asymptotical estimates of the dynamical auto-correlation functions in the double scaling limit are found provided that the evolution parameter is large.
The amplitudes of the leading asymptotics are multiples of the numbers of watermelons or, equivalently, of the numbers of boxed plane partitions.

\textit{An outline of the content of this paper goes as follows}.
We start by the deriving
of the Cauchy--Binet type determinantal identities in
Section~\ref{sectdet}. The $q$-binomial determinants are
introduced and their connection with the generating functions of plane partitions is discussed.
The lattice paths configurations of star type are expressed in terms of the Schur functions in Section~\ref{comblatpat}. The generating functions of
the watermelon configurations are expressed in terms of the Cauchy--Binet type identities.
The norm-trace generating function of plane partitions in high box also follows
from the Cauchy--Binet type determinant, Section~\ref{comblatpat}.
The $XX0$ Heisenberg model on the cyclic chain is considered in
Section~\ref{randwalvic}. The Bethe eigen-vectors are expressed by means of the Schur functions, and the dynamical auto-correlation
functions are introduced.
The $N$-particles transition amplitudes over the ferromagnetic states are studied, and their interpretation in terms of the random turns walks is
given in Section~\ref{randfer}.
In Section~\ref{twopoint}, the ``two-time'' correlation function of projector over the ferromagnetic state is considered and interpreted in terms of the random turns paths as well. The correlation functions over
$N$-particles ground state are considered in Section~\ref{npart}. The
visualisation of these correlation functions is presented in terms of superposed random turns and lock steps walks. Section~\ref{asymp} deals with the asymptotics of the auto-correlation
functions. It is shown that the amplitudes of the asymptotical expressions are related with the generating functions of watermelons. We conclude with
a brief discussion in Section~\ref{conc}.

\section[The Schur functions and the Cauchy--Binet type determinantal identities]{The Schur functions and the Cauchy--Binet type\\ determinantal identities}\label{sectdet}

\subsection{Preliminary notations and definitions}

First, some definitions and conventions are in order. For example, boldface notations like ${\bf u}_N$, as well as ${\bf u}$, stand for $N$-tuples $(u_1, u_2, \dots , u_N)$ of $N$ (complex) numbers, and so on. In a real case, elements of a tuple are not necessarily ordered according to their values although might be either increasing or decreasing. We shall also use $N$-tuples like ${\bf M}_N= (M, M, \dots, M)$ or ${\bf k}_N= (k, k, \dots, k)$. The notation
$[N]\equiv \{1, 2, \dots , N\}$ is known and implies that the elements of the set are ordered.

Let a union of zero and of all natural numbers be denoted as $\bar\BN\equiv \{0 \}\cup {\BN}=
\{0, 1, 2, \dots \}$. An~$N$-tuple of strictly decreasing numbers $\mu_i \in \BN$, $1\leq i\leq N-1$, $\mu_N \in \bar\BN$
is called \textit{strict partition} ${\bmu}\equiv (\mu_1, \mu_2, \dots , \mu_N)$. The elements of $\bmu$ are called \textit{parts}, and they respect
\begin{gather}\label{calm}
M\geq\mu_1> \mu_2> \cdots > \mu_N \geq n .
\end{gather}
The \textit{length} of a partition, say, $\bmu$ is equal to the number of its parts, $l(\bmu)=N$. The \textit{weight} $|\bmu|$ of partition is equal to the sum of its parts, $|\bmu|=\sum_{i=1}^N \mu_i$.

An $N$-tuple of weakly decreasing non-negative integers provides another important partition ${\bla}\equiv (\la_1, \la_2, \dots, \la_N)$, where the parts $\lambda_j\in \bar\BN$ respect
\begin{gather}\label{ttt1}
{\cal L} \ge \la_1 \ge \la_2 \ge \dots \ge \la_N \ge n ,\qquad {\cal L}, n \in \bar\BN .
\end{gather}
The relationship
$\bla =\bmu - {\bdl}_N$, where
${\bdl}_N$ is the ``staircase'' partition
\begin{gather}
\label{stair}
{\bdl}_N \equiv (N-1, N-2, \dots, 1, 0) ,
\end{gather}
enables to connect the partitions $\bla$ and $\bmu$ so that ${\cal L}=M-N+1$ in \eqref{ttt1} since $\la_i=\mu_i-N+i$, $1\le i\le N$.

The partitions $\bla$ can be represented by Young diagrams consisting of $N$ rows of cells so that~$\la_i$ is the number of cells of $i^{\rm th}$ row, and cells are aligned to the left. A natural correspondence between the parts of $\bmu$ and $\bla$ is expressed by the Young diagram (see Figure~\ref{fig:f4}).

\begin{figure}[h]
\centering\includegraphics {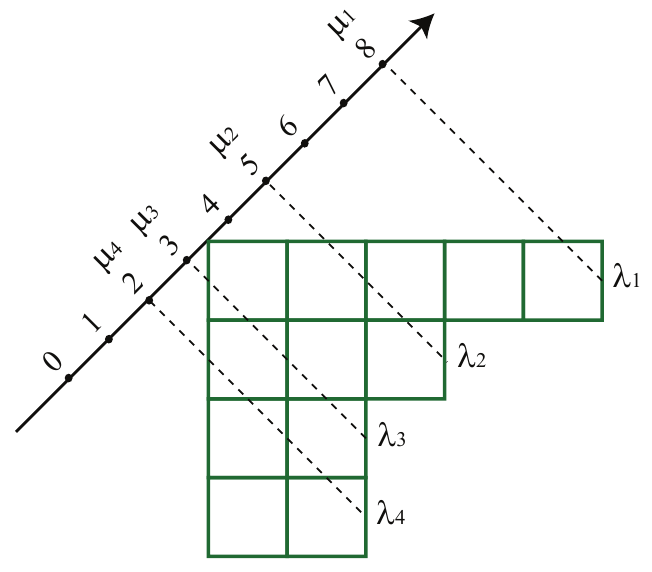}
\caption{Graphical correspondence between the Young diagram of partition $\bla=(5, 3, 2, 2)$ and the parts of strict partition $\bmu=(8, 5, 3, 2)$ for $M=8$, $N=4$.}
\label{fig:f4}
\end{figure}

Let us consider a partition $\bla_{N-k}$ of the length $l(\bla_{N-k})=N-k$, where $k\le N$, $N, k\in\bar\BN$. Proceeding with $\bla_{N-k}$, we shall use the notation $\widehat{\bla}\equiv \widehat{\bla}_N$ for
a partition of the length $l\big(\widehat{\bla}\big)=N$, which can be viewed as $\bla_{N-k}$ ``elongated'' by $k$ zero's as follows:
\begin{gather}\label{qanal14}
\widehat{\bla} \equiv (\la_1, \la_2, \dots , \la_{N-k}, 0, 0, \dots, 0) \equiv (\bla_{N-k}, 0, 0, \dots, 0) .
\end{gather}
It is appropriate to introduce a strict partition $\widehat{\bmu}$, $l(\widehat{\bmu})=N$:\vspace{-1ex}
\begin{gather} \label{strhat}
\widehat{\bmu}\equiv \widehat{\bla}+{\bdl}_N = ( {\bla}_{N-k} + {\bdl}_{N-k} +
{\bf k}_{N-k}, {\bdl}_{k}) =
(\bmu_{N-k} + {\bf k}_{N-k}, {\bdl}_{k}) ,
\end{gather}
where ${\bdl}_{N}$, ${\bdl}_{N-k}$, and ${\bdl}_{k}$ are the staircase partitions \eqref{stair} of the lengths $N$, $N-k$, and $k$, respectively, and ${\bf k}_{N-k}$ is a ``constant'' partition $(k, k, \dots , k)$ of the length $N-k$.

Let us introduce $k$-tuple ${\bf i} \equiv {\bf i}_k \equiv
(i_1, i_2, \dots, i_k)$
consisting of strictly increasing integers $1\le i_1 < i_2 < \cdots < i_k\le N$, $1\le k \le N$. It is appropriate to introduce a relative \textit {complement} of~${{\bf i}}$ in $[N]$ as $(N-k)$-tuple ${\complement_N {\bf i}}$:
\begin{gather}
\label{xindi2}
{\complement_N {\bf i}} \equiv
[N] \backslash {\bf i}
= \big(1, 2, \dots, \check{i}_1, \dots, \check{i}_2, \dots, \check{i}_k, \dots, N\big) ,
\end{gather}
where $\check{i}_l$ implies that the sequence $1, 2, \dots, N$
is missing the element ${i}_l$.

Let $N$-tuple of complex numbers ${\bf x}_N$ be given. Fixing $k$-tuple ${\bf i}$ and
its complement ${\complement_N {\bf i}}$ (\ref{xindi2}),
we introduce $k$-tuple ${\bf x}_{{\bf i}}$ and $(N-k)$-tuple ${\mathbf{x}_{_{\complement_N {\bf i}}}}$:
{\samepage\begin{gather}
\label{xindi}
\mathbf{x}_{{\bf i}} \equiv (x_{i_1}, x_{i_2}, \dots, x_{i_k}) ,\\
\label{xindi3}
{\mathbf{x}_{_{\complement_N {\bf i}}}}
\equiv (x_1, x_2, \dots, \check{x}_{i_1}, \dots,
\check{x}_{i_2}, \dots , \check{x}_{i_k}, \dots , x_N) \equiv
\overline{\mathbf{x}_{\bf i}} ,
\end{gather}}\noindent
where $\Check{x}_{i_l}$ implies that ${x}_{i_l}$ is dropped out of $N$-tuple $\mathbf{x}_{N}$.
The equivalent notation $\overline{\mathbf{x}_{\bf i}}
\equiv \mathbf{x}_{N} \backslash {\bf x}_{{\bf i}}$ (\ref{xindi3})
is to express that ${\mathbf{x}_{_{\complement_N {\bf i}}}}$ is also viewed as a relative complement of $k$-tuple ${\bf x}_{{\bf i}}$ (\ref{xindi}) in $\mathbf{x}_{N}$. Besides, we shall use the following notation:
\begin{gather}
\label{set1}
[N\backslash k]
\equiv [N] \backslash [k] .
\end{gather}
We shall also consider $\mathbf{x}_{[k]}$ and $\mathbf{x}_{[N\backslash k]}$,
as notations at ${\bf i} \equiv [k]$ for particular cases of ${\bf x}_{\bf i}$
and ${\bf x}_{[N]\backslash {\bf i}}$, respectively:
\begin{gather}
\label{xindi1}
\mathbf{x}_{[k]} \equiv (x_1, x_2, \dots, x_{k}) ,\qquad
{\mathbf{x}_{[N\backslash k]}}
\equiv (x_{k+1}, x_{k+2}, \dots, x_N)
\end{gather}
(it turns out that $\mathbf{x}_{[k]}$ and $\mathbf{x}_{k}$ denote the same).

Let us introduce a \textit{plane
partition of shape} $\bla$ as a map $\bpi$: $(i, j)
\rightarrow \pi_{i j}$, $(i, j)\in {\BN}^2$, from the Young diagram
of partition $\bla$ to $\bar\BN$ such that $\pi_{i j}$ is a non-increasing function of $i$ and $j$. The~entries $\pi_{i j}$
are called {\it parts} of the plane partition, and $|\bpi| =\sum_{i, j} \pi_{i j}$ is its {\it volume}.

\textit{Three-dimensional Young diagram} is a stack of unit cubes such that $\pi_{i j}$ is the height of the column with coordinates $(i,j)$. A \textit{box} ${\cal B}(L, N, K)$ of size $L\times N\times K$ is a subset of three-dimensional integer lattice:
\begin{gather*}
{\cal B} (L, N, K) \equiv \bigl\{(l, n, k)\in{\bar\BN}^3
\bigl| 0\leq l\leq L,\, 0\leq n\leq N,\,
0\leq k \leq K\bigr\} .
\end{gather*}
It is said that a plane partition $\bpi$ is contained in ${\cal B} (L, N, K)$ if
$i\leq L$, $j\leq N$, and $\pi_{i j} \leq K$ for all cubes of the Young diagram.

The \textit{generating function} $Z_q(L, N, K)$ of plane partitions $\bpi$ contained in ${\cal B}(L, N, K)$ is
of the form \cite{macd}:
\begin{gather}
Z_q(L, N, K) = \prod\limits_{j=1}^{L}
\prod\limits_{k=1}^{N}
\prod\limits_{i=1}^{K}
\frac{1-q^{i+j+k-1}}{1-q^{i+j+k-2}}
 = \prod\limits_{j=1}^{L}
\prod\limits_{k=1}^{N}
\frac{1-q^{K+j+k-1}}{1-q^{j+k-1}} .
\label{pf1}
\end{gather}
Right-hand side of (\ref{pf1}) gives at $q \to 1$ the number $A (L, N, K)$ of plane partitions in the box ${\cal B}(L,
N, K)$ (MacMahon’s formula):
\begin{gather}
A (L, N, K) = \prod\limits_{j=1}^{L}
\prod\limits_{k=1}^{N}
\prod\limits_{i=1}^{K}
\frac{i+j+k-1}{i+j+k-2}
 = \prod\limits_{j=1}^{L}
\prod\limits_{k=1}^{N}
\frac{K+j+k-1}{j+k-1} . \label{pf11}
\end{gather}

\subsection{The Cauchy--Binet type identities}

The {\it Schur functions}
$S_{\bla} ({\textbf x}_N)$ forming a basis for the ring of symmetric polynomials of $N$ variables are given by the relation (see \cite{fult, macd} for details):
\begin{gather}
S_{\bla} ({\textbf x}_N) \equiv
 S_{\bla} (x_1, x_2, \dots , x_N) \equiv \frac{\det\big(x_j^{\la_k+N-k}\big)_{1\leq
j, k \leq N}}{\CV({\textbf x}_N)},
\label{sch}
\end{gather}
where $\bla$ is a partition, and $\CV ({\textbf x}_N)$ is the Vandermonde determinant,
\begin{gather} \CV ({\textbf x}_N) \equiv
\det\big(x_j^{N-k}\big)_{1\leq j, k\leq N} =
\prod_{1 \leq m< l \leq N}(x_l-x_m) .
\label{spxx1}
\end{gather}

Bearing in mind the notations
(\ref{xindi2}), (\ref{xindi}), (\ref{xindi3}), (\ref{set1}), (\ref{xindi1}), we turn to

\begin{Proposition}\label{proposition1}
Let us choose $k$-tuple ${\bf i} = (i_1, i_2, \dots, i_k)$ and a partition ${\bla}_{N-k}$.
The relation for the Schur function $S_{{\bLam}} (\mathbf{x}_N)$ labelled by partition ${\bLam}$, $l({\bLam})=N$, holds true,
\begin{gather}\label{limprop}
\lim_{{\bf x}_{{\bf i}}\rightarrow 0} S_{{\bLam}} (\mathbf{x}_N) \equiv
\lim_{x_{i_1} \rightarrow 0} \lim_{x_{i_2} \rightarrow 0} \cdots \lim_{x_{i_k} \rightarrow 0} S_{{\bLam}} (\mathbf{x}_N) =
S_{{\bla}_{N-k}} ({\overline{ \mathbf{x}_{\bf i}}}) ,
\end{gather}
provided that $\bLam$ is of the form $\widehat{\bla}$
\eqref{qanal14}, and $\overline{\mathbf{x}_{\bf i}}$ \eqref{xindi3} is used in \eqref{limprop}.
\end{Proposition}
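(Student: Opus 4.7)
My plan is to exploit two fundamental features of the Schur function as given by the bialternant formula~\eqref{sch}: its symmetry in $\mathbf{x}_N$, and a ``stability'' identity describing what happens when one variable is sent to zero, provided the corresponding part of the labelling partition already equals zero.

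First, I would invoke symmetry. Since $S_{\bLam}(\mathbf{x}_N)$ is a symmetric polynomial in $x_1,\dots,x_N$, the $k$ iterated limits in~\eqref{limprop} depend only on the unordered subset $\{i_1,\dots,i_k\}$, and their value is invariant under any permutation of the labels. It is therefore enough to treat the distinguished case $\mathbf{i} = (N-k+1, \dots , N)$, in which the last $k$ variables of $\mathbf{x}_N$ are sent to zero, and to verify that the resulting expression equals $S_{\bla_{N-k}}(\mathbf{x}_{N-k})$.

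Second, I would reduce the claim to a single-variable base step. Because $\widehat{\bla}$ in~\eqref{qanal14} is of the form $(\bla_{N-k},0,\dots,0)$, stripping off one trailing zero preserves that shape. It is therefore enough to establish
\begin{gather*}
\lim_{x_N\to 0} S_{(\bla_{N-1},\,0)}(\mathbf{x}_N) \;=\; S_{\bla_{N-1}}(\mathbf{x}_{N-1})
\end{gather*}
for an arbitrary partition $\bla_{N-1}$ of length at most $N-1$, and then iterate the identity $k$ times. For the base step I would work directly with~\eqref{sch}: the exponents $\Lam_l+N-l$ in the bialternant are strictly positive for $l<N$ and equal zero for $l=N$, so at $x_N=0$ the bottom row of the $N\times N$ numerator matrix collapses to $(0,\dots,0,1)$. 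A cofactor expansion along that row reduces the determinant to the top-left $(N{-}1)\times(N{-}1)$ minor, from which one factor of $x_j$ can be pulled out of each row. The Vandermonde denominator~\eqref{spxx1} at $x_N=0$ splits as $\CV(\mathbf{x}_{N-1})\prod_{m=1}^{N-1}(-x_m)$, so that the common factor $\prod_{j=1}^{N-1}x_j$ cancels between numerator and denominator and the surviving ratio is exactly $S_{\bla_{N-1}}(\mathbf{x}_{N-1})$.

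The only delicate point in the argument is the sign bookkeeping in the base step: the $(-1)^{N-1}$ produced by $\prod_{m<N}(0-x_m)$, the cofactor sign $(-1)^{N+N}=1$ at position $(N,N)$, and the sign conventions implicit in~\eqref{sch}--\eqref{spxx1} must all be tracked to confirm that the overall sign comes out to $+1$. Once this sign audit is carried out the base step closes, and the iteration — followed by the relabelling justified in the first step via symmetry — yields~\eqref{limprop} in the generality stated.
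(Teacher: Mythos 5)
Your argument is essentially the paper's own proof: the authors likewise prove \eqref{limprop} by iterated minor expansion of the bialternant \eqref{sch} as the variables $\mathbf{x}_{\bf i}$ are sent to zero, and your write-up merely makes explicit the symmetry reduction to ${\bf i}=(N-k+1,\dots,N)$, the single-variable base step, and the cancellation of $\prod_j x_j$ against the Vandermonde. The sign audit you defer does close (cleanly, if one uses the determinant form of $\CV$ in both numerator and denominator), so the proposal is correct and matches the paper's approach.
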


The representation (\ref{sch}) enables to prove (\ref{limprop}) provided that expansion of the determinants by minors is used iteratively when sending the elements of ${\bf x}_{{\bf i}}$ to zero.

The present section is concerned with the {\it Cauchy--Binet} type determinantal identity
for the Schur functions \cite{bmumn}: \begin{gather}
\mathcal{P}_{{\cal L} / n}(\textbf{x}_N, \textbf{y}_N)\equiv\sum_{\bla \subseteq
{\{({\cal L} \slash n)^N\}}}S_{\bla}(\textbf{x}_N)
S_{\bla}(\textbf{y}_N) =
\Bigg(\prod_{l=1}^N x_l^n y_l^n\Bigg) \frac{\det T(\textbf{x}_N, \textbf{y}_N) }{\CV (\textbf{x}_N)\CV (\textbf{y}_N)},
\label{scschr}
\end{gather}
where the summation goes over all
partitions $\bla$ of length $N$ with the parts satisfying \eqref{ttt1}.
The matrix $T(\textbf{x}_N, \textbf{y}_N)$ $\equiv$ $(T_{i j}(\textbf{x}_N, \textbf{y}_N))_{1\le i, j\le N}$ in (\ref{scschr}) is given by the
entries
\begin{gather}\label{tt}
T_{i j}(\textbf{x}_N, \textbf{y}_N)\equiv T_{i j}=\frac{1-(x_i y_j)^{N + {\cal L} - n}}{1-x_i y_j} .
\end{gather}
If we put, for shortness, $\mathcal{P}_{{\cal L}/0} (\textbf{x}_N, \textbf{y}_N) \equiv \mathcal{P}_{{\cal L}} (\textbf{x}_N, \textbf{y}_N)$, equation~\eqref{scschr} is re-expressed as
\begin{gather}\label{qanal7}
\mathcal{P}_{{\cal L}/n} (\textbf{x}_N,
\textbf{y}_N) = \mathcal{P}_{{\cal L}-n} (\textbf{x}_N,
\textbf{y}_N) \prod_{l=1}^N (x_l y_l)^n .
\end{gather}

Proposition~\ref{proposition1} allows us to go from the sum $\mathcal{P}_{\cal L} (\textbf{x}_{N}, \textbf{y}_N)$ (\ref{scschr}) to the sum $\mathcal{P}_{\cal L} (\overline{\mathbf{x}_{\bf i}},
\textbf{y}_N)$ one of the arguments of which is $(N-k)$-tuple $\overline{\mathbf{x}_{\bf i}}
\equiv \textbf{x}_{[N]\backslash {\bf i}}$ (\ref{xindi3}):
\begin{align}\label{qanal151}
\mathcal{P}_{\cal L} ( \overline{ \mathbf{x}_{\bf i}}, \textbf{y}_N) \equiv \sum\limits_{\bla \subseteq \{{\cal L}^{N-k}\}}
S_{\bla} (\overline{ \mathbf{x}_{\bf i}}) S_{\widehat\bla} ({\textbf y}_N) ,
\end{align}
where $k$-tuple ${\bf i}$ is fixed, summation is over $\bla$ of length $N-k$, and $\widehat{\bla}$ is given by (\ref{qanal14}). The determinantal identity for $\mathcal{P}_{\cal L} ( \overline{ \mathbf{x}_{\bf i}}, \textbf{y}_N)$, analogous to \eqref{scschr}, is given by

\begin{Theorem}\label{theoremI}
The Cauchy--Binet type identity is valid for $\mathcal{P}_{\cal L} (\overline{ \mathbf{x}_{\bf i}}, \textbf{y}_N)$
\eqref{qanal151}:
\begin{gather}
\mathcal{P}_{\cal L} (\overline{ \mathbf{x}_{\bf i}}, \textbf{y}_N) =
\frac{(-1)^{|{\bf i}| + kN- \frac{k}{2}(k-1)}}{\big(\prod_{l \in [N]\backslash {\bf i}}
x_{l}\big)^k} \,
\frac{\det {\overline T}(\overline{\mathbf{x}_{\bf i}}, \textbf{y}_{N})}{{\CV} (\overline{\mathbf{x}_{\bf i}}) {\CV} ({\textbf y}_N)} ,
\label{field43}
\end{gather}
where the entries of $N\times N$ matrix $\overline T(\overline{\mathbf{x}_{\bf i}}, \textbf{y}_{N}) \equiv (\overline T_{i j})_{1\le i, j\le
N}$ are
\begin{gather}
\nonumber
{\overline T}_{i j} = T^{\rm o}_{i j} ,\qquad i\in {\complement_N {\bf i}} = [N]\backslash {\bf i}, \qquad
1\le j \le N ,
\\
{\overline T}_{i j} = y_j^{k-l} ,\quad i \in \{i_l\}_{1\le l\le k} , \qquad
 1\le j \le N .
\label{field42}
\end{gather}
The entries $T^{\rm o}_{i j}$ are given by $T_{i j}$ \eqref{tt} taken at $n=0$.
\end{Theorem}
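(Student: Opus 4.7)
The plan is to deduce the asserted identity by specializing $k$ of the variables in the Cauchy--Binet identity \eqref{scschr} (at $n=0$) to zero. Formally, apply the iterated limit $x_{i_1}\to 0,\dots,x_{i_k}\to 0$ to both sides of $\mathcal{P}_{\cal L}(\mathbf{x}_N, \mathbf{y}_N) = \det T/(\CV(\mathbf{x}_N)\CV(\mathbf{y}_N))$.

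On the left, Proposition~\ref{proposition1} says that each Schur function $S_{\bLam}(\mathbf{x}_N)$ in the defining sum vanishes in this limit unless $\bLam$ has at most $N-k$ nonzero parts, i.e.\ $\bLam = \widehat{\bla}$ as in \eqref{qanal14}, in which case the limit equals $S_{\bla_{N-k}}(\overline{\mathbf{x}_{\bf i}})$ via \eqref{limprop}. The surviving terms reassemble precisely into $\mathcal{P}_{\cal L}(\overline{\mathbf{x}_{\bf i}},\mathbf{y}_N)$ of \eqref{qanal151}.

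On the right, both $\det T$ and $\CV(\mathbf{x}_N)$ vanish, so I extract the leading orders in the small variables. Expanding each ${\bf i}$-row as $T_{i_l,j} = \sum_{s\ge 0} x_{i_l}^s y_j^s$ and using multilinearity of the determinant, $\det T$ becomes a sum over $k$-tuples $(s_1,\dots,s_k)$; only those with pairwise distinct $s_l$ contribute, and the minimum total power $k(k-1)/2$ is attained by the permutations of $(0,1,\dots,k-1)$. Leibniz summation over these permutations factors out $\CV(\mathbf{x}_{\bf i})$ times a determinant $\det M^{\ast}$ whose ${\bf i}$-rows are $(y_j^{l-1})_j$ and whose remaining rows are $T^{\rm o}_{ij}$. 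Simultaneously, $\CV(\mathbf{x}_N)$ splits into $\CV(\mathbf{x}_{\bf i})\,\CV(\overline{\mathbf{x}_{\bf i}})$ times a cross product that, in the limit, yields $\prod_{j\notin {\bf i}} x_j^k$ weighted by a sign $(-1)^{|{\bf i}|-k(k+1)/2}$ (obtained by counting the pairs $(l,m)$ with $l<m$, $l\notin {\bf i}$, $m\in {\bf i}$). The $\CV(\mathbf{x}_{\bf i})$ factors cancel between numerator and denominator, and a final reversal of the $k$ ${\bf i}$-rows aligning $M^{\ast}$ with the convention $y_j^{k-l}$ in \eqref{field42} contributes an extra sign $(-1)^{k(k-1)/2}$; collecting everything produces the prefactor asserted in \eqref{field43}.

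The main obstacle is the sign bookkeeping: three distinct sources---the cross-pair orientation in the Vandermonde factorization, the row reversal aligning $M^{\ast}$ with $\overline T$, and the Vandermonde sign convention in \eqref{spxx1}---must combine consistently to reproduce the exponent $|{\bf i}|+kN-\tfrac{k}{2}(k-1)$. The remaining computation is a routine application of multilinearity and the Leibniz expansion.
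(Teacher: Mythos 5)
Your proposal follows the same route as the paper: apply the iterated limit $x_{i_1},\dots,x_{i_k}\to 0$ to both sides of the Cauchy--Binet identity \eqref{scschr} at $n=0$, and use Proposition~\ref{proposition1} to see that only the terms with $\bLam=\widehat{\bla}$ survive on the left, reassembling into $\mathcal{P}_{\cal L}(\overline{\mathbf{x}_{\bf i}},\mathbf{y}_N)$. The paper's proof stops essentially there and leaves the right-hand-side limit implicit, whereas you carry out the multilinearity/Vandermonde-factorization computation and the sign bookkeeping explicitly; this is a correct filling-in of the same argument rather than a different approach.
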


\begin{proof}
The overall summation in left-hand side of equation~(\ref{scschr}) taken at $n=0$
splits into two parts:
the sum over $\widehat\bla$ (\ref{qanal14}) (where ${\cal L} \geq \la_1 \geq \la_2 \geq\dots\geq \la_{N-k}\geq 0$) and the remaining sum.
We apply $\lim_{{\bf x}_{{\bf i}}\rightarrow 0}$ \eqref{limprop} in both sides of (\ref{scschr}). The relation $\mathcal{P}_{\cal L} ( \overline{ \mathbf{x}_{\bf i}}, \textbf{y}_N) = \lim_{{\bf x}_{{\bf i}}\rightarrow 0}
\mathcal{P}_{\cal L} (\textbf{x}_{N}, \textbf{y}_N)$ holds since
the only summation surviving in left-hand side of~(\ref{scschr}) is that
over $\widehat\bla$, and we thus obtain (\ref{field43}) due to Proposition~\ref{proposition1}.
\end{proof}

Two limits of the identity \eqref{field43} are of interest in what follows. In the first case, equation~(\ref{limprop}) is specified:
\begin{gather}
\label{limsch1}
\lim_{\mathbf{x}_{[k]} \rightarrow 0} S_{\widehat{\bla}} (\mathbf{x}_N) = S_{{\bla}_{N-k}}\big(\mathbf{x}_{[N\backslash k]}\big) ,
\end{gather}
where the relations
(\ref{xindi1}) are taken into account. In the second one, equation~(\ref{limprop}) reads:
\begin{gather}
\label{limsch}
\lim_{\mathbf{x}_{{\mathbf{N}_k} \backslash{{\bdl}_k}} \rightarrow 0} S_{\widehat{\bla}} (\mathbf{x}_N) = S_{{\bla}_{N-k}} (\mathbf{x}_{N-k}) ,
\end{gather}
where ${\mathbf x}_{\mathbf{N}_k \backslash {{\bdl}_k}} \equiv (x_{N-k+1}, x_{N-k+2}, \dots , x_{N-1}, x_N)$, and
the relation
\begin{gather}
\label{limsch2}
\mathbf{x}_{[N] \backslash ({\mathbf{N}_k} \backslash {{\bdl}_k})} = \mathbf{x}_{N-k}
\end{gather}
is accounted for.

Let us define a $q$-{\it
number} $[n]$ ($n\in\bar\BN$) and $q$-{\it factorial} $[n]! $ \cite{kac}:
\begin{gather}\label{qanal}
[n] \equiv \frac{1-q^n}{1-q} ,\qquad
[n]! \equiv [1] [2] \cdots [n] ,\qquad
[0]!\equiv 1 .
\end{gather}
This definition allows to define the {\it q-binomial coefficient}
$\left[\begin{smallmatrix}N\\r\end{smallmatrix}\right]$,
\begin{gather}\label{qanal1}
\begin{bmatrix}N\\r\end{bmatrix} \equiv
\frac{[N] [N-1] \cdots [N-r+1]}{[r]!} = \frac{[N]!}{[r]! [N-r]!} ,
\end{gather}
as well as the $q$-\textit{binomial determinant} \cite{car}:
\begin{gather}
\begin{pmatrix}
{\bf{ a}}\\
{\bf{ b}}
\end{pmatrix}_q \equiv \begin{pmatrix}
a_1, & a_2, & \cdots & a_S\\
b_1, & b_2, & \cdots & b_S
\end{pmatrix}_q \equiv
\det \begin{pmatrix}
\begin{bmatrix}
a_j\\b_i
\end{bmatrix}
\end{pmatrix}_{1\le i, j \le S},
\label{qanal2}
\end{gather}
where ${\bf{a}}$ and ${\bf{b}}$ are ordered $S$-tuples: $0\le a_1< a_2<\cdots<
a_S$ and $0\le b_1< b_2<\cdots< b_S$. In~the limit $q\to 1$,
the $q$-binomial determinant
(\ref{qanal2}) is transformed to the \textit{binomial determinant} since~(\ref{qanal1}) becomes the binomial coefficient
$\left(\begin{smallmatrix}N\\r\end{smallmatrix}\right)$ \cite{ges}.

The $q$-parametrization
\begin{gather}
\textbf{y}_N={\bf q}_N \equiv \big(q, q^2, \dots,q^N\big) ,\qquad
\textbf{x}_N={\bf q}_N/q = \big(1, q, \dots, q^{N-1}\big) \label{rep21}
\end{gather}
enables to represent the sum $\mathcal{P}_{\cal L} (\overline{ \mathbf{x}_{\bf i}}, \textbf{y}_N)$ (\ref{qanal151})
in the form:
\begin{gather}
\label{qanal4}
\mathcal{P}_{\cal L}
\bigg(\frac{\overline{ \mathbf{q}_{\bf i}}}{q}, {\bf q}_N \bigg)
= \sum\limits_{\bla \subseteq \{{\cal L}^{N-k}\}}
S_{\widehat\bla} ({\textbf q}_N)
S_{\bla} \bigg(\frac{ \overline{\mathbf{q}_{\bf i}}}{q} \bigg) ,
\end{gather}
where
\begin{gather*}
\overline{\mathbf{q}_{\bf i}} \equiv \big(q, q^2, \dots, \underset\smile {{ {q}^{i_{_1}}}}, \dots,
\underset\smile {{ {q}^{i_{_2}}}}, \dots , \underset\smile {{ {q}^{i_{_k}}}}, \dots , q^N\big) ,
\end{gather*}
and the underscored terms are missed. One arrives to the following

\begin{Claim} \label{Claim1} The
Cauchy--Binet type identity \eqref{field43} under the $q$-parametrization \eqref{rep21} is valid for $\mathcal{P}_{\cal L}
\big(\frac{\overline{ \mathbf{q}_{\bf i}}}{q}, {\bf q}_N \big)$ \eqref{qanal4}:
\begin{gather}
\mathcal{P}_{\cal L} \bigg(\frac{ \overline{ \mathbf{q}_{\bf i}}}{q}, {\bf q}_N \bigg)
= \frac{
(-1)^{|{\bf i}| + kN- \frac{k}{2}(k-1)}}
{q^{k (k-|{\bf i}|)+ \frac{k N}{2}(N-1)}}
\,
\frac{\det \overline {\sf T}
\bigl(\frac{\overline{ \mathbf{q}_{\bf i}}}{q}, {\bf q}_N \bigr)}{{\CV} \bigl(\frac{ \overline{\mathbf{q}_{\bf i}}}{q}\bigr) {\CV} ({\bf q}_N)} ,
\label{qanal3}
\end{gather}
where the matrix $\overline {\sf T} \bigl(\frac{\overline{ \mathbf{q}_{\bf i}}}{q}, {\bf q}_N \bigr)$ consists of $q$-parameterized
entries ${\overline{\sf
T}}_{i j}$ \eqref{field42}:
\begin{gather}
\nonumber
\overline{\sf T}_{i j} = \frac{[({\cal L}+N)(j+i-1)]}{[j+i-1]} ,\qquad
i\in [N]\backslash {\bf i}, \qquad 1\le j \le N ,
\\
\overline {\sf T}_{i j} = q^{j(k-l)} ,\qquad
 i \in \{i_l\}_{1\le l\le k} , \qquad 1\le j \le N ,
\label{qanal5}
\end{gather}
and $q$-numbers \eqref{qanal} are used.
\end{Claim}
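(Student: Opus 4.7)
The plan is to obtain Claim \ref{Claim1} as a direct specialization of Theorem \ref{theoremI}: I will substitute the $q$-parametrization \eqref{rep21} on the right-hand side of \eqref{field43} and verify, by elementary manipulation, that the matrix entries reduce to \eqref{qanal5} and that the scalar prefactor collects into the power of $q$ claimed in \eqref{qanal3}. The left-hand side already matches by the definition \eqref{qanal4} of $\mathcal{P}_{\cal L}\bigl(\overline{\mathbf{q}_{\bf i}}/q,{\bf q}_N\bigr)$.

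First, with $x_j=q^{j-1}$ and $y_j=q^j$ one has the key identity $x_i y_j=q^{i+j-1}$. Plugging this into \eqref{tt} at $n=0$ and using the $q$-number notation \eqref{qanal} gives
\begin{gather*}
T^{\rm o}_{ij}\big|_{\eqref{rep21}}=\frac{1-q^{({\cal L}+N)(i+j-1)}}{1-q^{i+j-1}}=\frac{[({\cal L}+N)(i+j-1)]}{[i+j-1]},
\end{gather*}
which is exactly the top line of \eqref{qanal5}. The rows with $i\in\{i_l\}$ of $\overline T$ contain $y_j^{k-l}=q^{j(k-l)}$ after substitution, giving the bottom line of \eqref{qanal5} directly.

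Next I turn to the scalar prefactor in \eqref{field43}. Using $\prod_{l\in[N]\setminus{\bf i}}x_l=q^{\sum_{l\in[N]\setminus{\bf i}}(l-1)}$ and writing $\sum_{l\in[N]\setminus{\bf i}}(l-1)=\frac{N(N-1)}{2}-\bigl(|{\bf i}|-k\bigr)$ (since $|{\bf i}|=\sum_{m=1}^k i_m$ by the weight convention adopted just after \eqref{calm}), I obtain
\begin{gather*}
\Bigl(\textstyle\prod_{l\in[N]\setminus{\bf i}}x_l\Bigr)^{k}=q^{k(k-|{\bf i}|)+\frac{kN}{2}(N-1)}.
\end{gather*}
This reproduces the $q$-power in the denominator of \eqref{qanal3}; the sign factor $(-1)^{|{\bf i}|+kN-\frac{k}{2}(k-1)}$ is carried over from Theorem \ref{theoremI} unchanged.

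Finally the Vandermonde factors $\CV\bigl(\overline{\mathbf{q}_{\bf i}}/q\bigr)$ and $\CV({\bf q}_N)$ in the denominator are simply what \eqref{spxx1} becomes after the substitution, so no further simplification is needed. Putting the three pieces together gives \eqref{qanal3} with matrix entries \eqref{qanal5}. There is no real obstacle: the only item requiring care is the bookkeeping of the exponent of $q$ coming from $\prod_{l\in[N]\setminus{\bf i}}x_l$, which I handled above by splitting the sum $\sum_{l=1}^N(l-1)$ into its contributions from ${\bf i}$ and from its complement ${\complement_N{\bf i}}$.
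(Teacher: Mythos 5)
Your proposal is correct and follows exactly the route the paper intends: the paper presents Claim~\ref{Claim1} as an immediate specialization of Theorem~\ref{theoremI} under the parametrization \eqref{rep21}, without spelling out the bookkeeping. Your verification of the entries \eqref{qanal5} and of the exponent $k(k-|{\bf i}|)+\tfrac{kN}{2}(N-1)$ via $\sum_{l\in[N]\setminus{\bf i}}(l-1)=\tfrac{N(N-1)}{2}-|{\bf i}|+k$ is accurate and supplies the details the paper omits.
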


The parametrisation of the identity (\ref{qanal3}) is concretized as ${\overline {\mathbf{q}_{\bf i}}} = { {\mathbf{q}_{[N\backslash k]}}}$ for ${\bf i}_k = [k]$
in the case \eqref{limsch1},
or ${\overline {\mathbf{q}_{\bf i}}} = {{\mathbf{q}_{N-k}}}$ for ${\bf i}_k = {\bf N}_k \backslash {\bdl}_k$ in the case \eqref{limsch}. The Schur functions $S_{\bla_{N-k}}
\bigl(
\frac{ \overline{ \mathbf{q}_{\bf i}}}{q} \bigr)$ in~these cases are related to each other (see (\ref{limsch2}) for the second case):
\begin{align}
\nonumber
S_{\bla_{N-k}}
\bigg(\frac{{\bf q}_{[N\backslash k]}}{q} \bigg) &\equiv
S_{\bla_{N-k}} (q^k, q^{k+1}, \dots, q^{N-1})
= q^{k|{\bla}|} S_{\bla_{N-k}}\bigg(\frac{{\bf q}_{N-k}}{q} \bigg)
\\
\label{schprop}
&= q^{k|{\bla}|} S_{\bla_{N-k}}
\bigg(\frac{{\bf q}_{[N] \backslash ({\mathbf{N}_k} \backslash {{\bdl}_k})}}{q} \bigg) .
\end{align}

{\sloppy
The Claim~\ref{Claim1} is to stress that equation~(\ref{qanal3}) generalizes the relationship between ${{\CP}}_{\cal L}
\bigl(\frac{{\bf q}_{N-k}}{q}, {\bf q}_N \bigr)$ and $\det \overline {\sf T}
\bigl(\frac{{\bf q}_{N-k}}{q}, {\bf q}_N \bigr)$ found in \cite{bmumn} and given by

\begin{Theorem} \label{theoremII}
The sum ${{\CP}}_{\cal L}
\bigl(\frac{{\bf q}_{N-k}}{q}, {\bf q}_N \bigr)$
expressed by \eqref{qanal4} at ${\bf i}_k = {\bf N}_k \backslash {\bdl}_k$ satisfies the
following identities:
\begin{align}
{\CP}_{\cal L}
\bigg(\frac{{\bf q}_{N-k}}{q}, {\bf q}_N \bigg) & = q^{-\frac k2(N-k-1)(N-k)} \frac{{\det} \overline{\sf T} \bigl(\frac{{\bf q}_{N-k}}{q}, {\bf q}_N\bigr)}{{\CV}({\bf
q}_{N}) {\CV}({\bf q}_{N-k}/q)} 
\label{qanal6}
\\[.3ex]
& = q^{-\frac N2({\cal L}-1) {\cal L}}
\begin{pmatrix}
2N-k, & 2N-k+1, & \cdots & 2N-k+{\cal L}-1\\
N-k, & N-k+1, & \cdots & N-k +{\cal L}-1
\end{pmatrix}_q \label{rep30}
\\
& = \prod_{l=1}^{{\cal L}}\prod_{j=1}^{N-k}\frac{[j+l+N-1]}{[j+l-1]} =Z_q (N-k, N, {\cal L}) .
\label{rep31}
\end{align}
\end{Theorem}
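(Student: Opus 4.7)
For the first equality \eqref{qanal6}, I would specialize the Cauchy--Binet identity of Claim~\ref{Claim1} to the tuple ${\bf i} = {\bf N}_k \backslash {\bdl}_k = (N-k+1, \ldots, N)$, for which $\overline{\mathbf{q}_{\bf i}} = \mathbf{q}_{N-k}$. A direct computation gives $|{\bf i}| = k(N-k) + \binom{k+1}{2} = kN - \binom{k}{2}$, so that the sign $(-1)^{|{\bf i}| + kN - k(k-1)/2}$ reduces to $+1$ and the $q$-exponent $-k(k - |{\bf i}|) - kN(N-1)/2$ simplifies to $-\frac{k}{2}(N-k-1)(N-k)$, yielding \eqref{qanal6}.

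For the second equality \eqref{rep30}, my plan is to factor $\overline{\sf T}\bigl(\frac{\mathbf{q}_{N-k}}{q}, \mathbf{q}_N\bigr) = C \cdot B$ with $B_{m,j} = q^{mj}$ for $0 \leq m \leq N + \mathcal{L} - 1$, $1 \leq j \leq N$. The expansion $[(\mathcal{L}+N)(i+j-1)]/[i+j-1] = \sum_{m=0}^{N+\mathcal{L}-1} q^{m(i+j-1)}$ realises the top $N-k$ rows of $C$ as $C_{i,m} = q^{m(i-1)}$, while the monomial entries $\overline{\sf T}_{N-k+l,j} = q^{j(k-l)}$ give the bottom $k$ rows $C_{N-k+l,m} = \delta_{m,k-l}$. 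Cauchy--Binet then yields
\begin{gather*}
\det \overline{\sf T} = \sum_{\mathbf{m}} \det(C|_{\mathbf{m}}) \det(B|_{\mathbf{m}}),
\end{gather*}
summed over $N$-subsets $\mathbf{m} \subset \{0, 1, \ldots, N+\mathcal{L}-1\}$. The Kronecker deltas in the bottom $k$ rows of $C$ force $\{0, 1, \ldots, k-1\} \subset \mathbf{m}$, collapsing the sum to one over $(N-k)$-subsets $\mathbf{m}' \subset \{k, \ldots, N+\mathcal{L}-1\}$; each surviving summand is a product of two Vandermondes in powers of $q$. Dividing by $\CV(\mathbf{q}_N) \CV(\mathbf{q}_{N-k}/q)$ converts these Vandermonde ratios into $q$-binomial coefficients via standard identities for Vandermondes evaluated at geometric progressions, and the final sum is recognised as the Leibniz expansion of the $q$-binomial determinant with $a_j = 2N-k+j-1$, $b_i = N-k+i-1$.

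For the final equality \eqref{rep31}, I would invoke the classical Lindstr\"om--Gessel--Viennot evaluation: the $q$-binomial determinant equals the generating function of non-intersecting lattice paths, which under the standard bijection enumerate plane partitions in ${\cal B}(N-k, N, \mathcal{L})$ weighted by $q^{|\bpi|}$, yielding $Z_q(N-k, N, \mathcal{L})$ as in \eqref{pf1}. The main obstacle is Step~2, where careful combinatorial bookkeeping of signs, $q$-powers and Vandermonde normalisations is required to match the Cauchy--Binet output against the Leibniz expansion of the $q$-binomial determinant. A useful sanity check is the shift $2N-k+j-1$ in the top row of \eqref{rep30}, which reflects the $q^{\sum\mathbf{m}}$ factor from $\det(B|_{\mathbf{m}})$ together with the re-indexing $\mathbf{m}' \mapsto \mathbf{m}' - k$ needed to interpret $\mathbf{m}'$ as a partition.
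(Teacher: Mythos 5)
Your first step is correct and is exactly how the paper obtains \eqref{qanal6}: one specializes Claim~\ref{Claim1} to ${\bf i}=(N-k+1,\dots,N)$, for which $|{\bf i}|=kN-k(k-1)/2$ kills the sign and collapses the $q$-exponent to $-\frac{k}{2}(N-k-1)(N-k)$; your arithmetic checks out. Your third step is also viable: once \eqref{rep30} is in hand, the double product \eqref{rep31} follows either by a $q$-weighted Lindstr\"om--Gessel--Viennot interpretation, as you propose, or by direct evaluation of the determinant, which is what the paper does (citing earlier work); the only caution is that the $q$-statistic on paths and the prefactor $q^{-\frac{N}{2}({\cal L}-1){\cal L}}$ must be tracked explicitly.

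The genuine gap is in your second step. The factorization $\overline{\sf T}=CB$ with inner index $m\in\{0,\dots,N+{\cal L}-1\}$ is correct, and Cauchy--Binet does collapse the sum to $(N-k)$-subsets ${\bf m}'\subset\{k,\dots,N+{\cal L}-1\}$. But those subsets are in bijection with the partitions $\bla\subseteq\{{\cal L}^{N-k}\}$, and after dividing by ${\CV}({\bf q}_N)\,{\CV}({\bf q}_{N-k}/q)$ each summand is (up to explicit powers of $q$) precisely $S_{\bla}({\bf q}_{N-k}/q)\,S_{\widehat\bla}({\bf q}_N)$, a product of two principally specialized alternant ratios. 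In other words, this expansion merely reverses the derivation of \eqref{qanal6} and lands back at the defining sum \eqref{qanal4}. It has $\binom{N+{\cal L}-k}{N-k}$ terms indexed by subsets, whereas the Leibniz expansion of the ${\cal L}\times{\cal L}$ determinant in \eqref{rep30} has ${\cal L}!$ terms indexed by permutations, each a product of single $q$-binomial coefficients; these two expansions cannot be matched term by term, so no amount of bookkeeping closes the step. The missing idea --- and the one the paper actually invokes --- is to pass to the conjugate shape: expand the Schur functions by the dual Jacobi--Trudi formula in the elementary symmetric functions, whose principal specializations $e_r({\bf q}_\bullet)$ are single $q$-binomial coefficients up to powers of $q$; the resulting determinants have size ${\cal L}$, and a Cauchy--Binet summation in that conjugate direction produces the $q$-binomial determinant of \eqref{rep30}. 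Without this transposition your step does not reach \eqref{rep30}.
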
}

\begin{proof}
Connection between the Schur functions and the elementary symmetric functions \cite{macd} allows one to derive (\ref{rep30}), \cite{bmumn}. Evaluation of the $q$-binomial determinant (\ref{rep30})
results in the double product representation (\ref{rep31}), which is the generating function of plane partitions in~${\cal B}(N-k, N, {\cal L})$ since $Z_q (N-k, {\cal L}, N)$ is equal to $Z_q (N-k, N, {\cal L})$), \cite{bmnph}.
\end{proof}

The $q$-version of (\ref{scschr}) arises from (\ref{rep30}), (\ref{rep31}) at $k=0$ due to (\ref{qanal7}):
\begin{align}\label{scqschr}
{{\CP}}_{{\cal L}/n} ( {\bf q}_{N}/q, {\bf q}_N) &= q^{nN^2} q^{\frac {N({\cal L}-n)}2 (1-{\cal L}+n)}
 \det \left(\displaystyle{
\begin{bmatrix} 2N+i-1 \\ N+j-1\end{bmatrix}}
\right)_{1\leq i, j\leq {\cal L}-n}\\
\label{scqschr12}
& = q^{nN^2} Z_q(N, N, {\cal L}-n) ,
\end{align}
where $Z_q (N, N, {\cal L}-n)$ is
the generating function (\ref{pf1}) of plane partitions in the box ${\cal B}(N, N,\allowbreak {\cal L}-n)$. The number of plane partitions $A (N, N, {\cal L}-n)$ is obtained as (\ref{pf11}):
\begin{gather}\label{npp}
A (N, N, {\cal L}-n)=\prod_{l=1}^{N} \prod_{j=1}^N
\frac{{\cal L}-n+j+l-1}{j+l-1} .
\end{gather}

\section{The Schur functions and watermelons}\label{comblatpat}

A combinatorial interpretation of the Cauchy--Binet type identities \eqref{scschr}, \eqref{field43}, and \eqref{qanal3}, is provided in the present section in terms of the watermelon configurations of self-avoiding lattice paths.

\subsection{The Schur functions and stars}

\subsubsection{Stars without deviation}

Let a set $\{\sf{T}\}$ of \textit{semi-standard Young tableaux ${\sf{T}}$ of shape} ${\bla}={\bla}_N$, \cite{fult}, with the entries taken from the set $[N]$ is given.
Each semi-standard tableau of shape ${\bla}$ may be represented as a \textit{nest of self-avoiding lattice paths} with prescribed starting and ending points. Consider a grid of vertical and horizontal dashed lines enumerated as in Figure \ref{fig:f5}.
\begin{figure}[h]
\centering
\includegraphics
{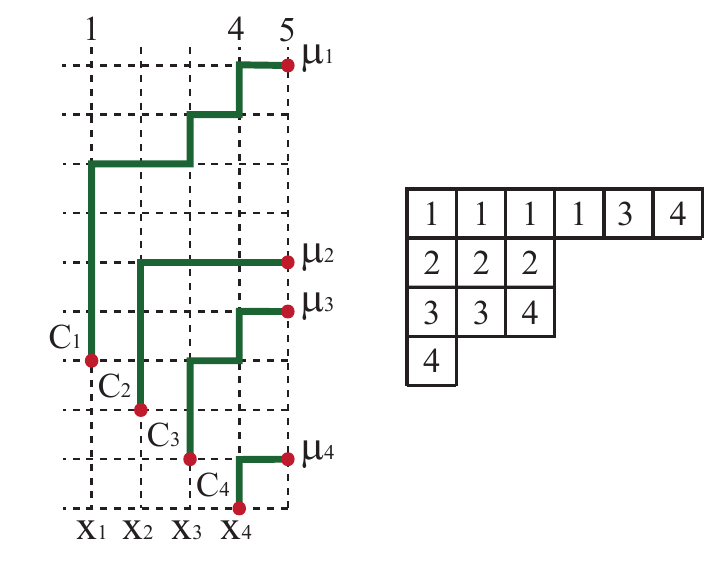}
\caption{The star $\cal{C}$ representing the semistandard tableau of shape $\bla=(6, 3, 3, 1)$.}
\label{fig:f5}
\end{figure}
The \textit{star} \cite{4, 5} is the nest of paths introduced by means of

\begin{Definition}\label{Definition3a}
The star $\cal{C}$, corresponding to the semi-standard Young tableau of shape ${\bla}$, is a~nest of $N$ self-avoiding lattice paths that connect (Figure~\ref{fig:f5}) the equidistantly arranged starting points $C_i=(i, N-i)$ with the non-equidistant ending points $(N, \mu_i)$, where
the parts of a strict partiton $\bmu$ respect \eqref{calm} at $n=0$, while the parts $\la_i=\mu_i-N+i$, $1\le i\le N$, respect \eqref{ttt1} with
${\cal L}={\cal M}$, where ${\cal M}\equiv M-N+1$.
\end{Definition}

The $i^{\rm th}$ row of the tableau ${\sf { T}}$ is encoded by $i^{\rm th}$ path of $\cal{C}$ making $\la_i$ steps upward. The~number of steps along the sites with abscissa $x_j$ is equal to the number of occurrences of $j$ in~the $i^{\rm th}$ row of ${\sf { T}}$. We put ${\bf x}_N\equiv (x_1, x_2, \dots, x_N)$ so that
the Schur function equivalent to \eqref{sch} is defined~\cite{fult}:
\begin{gather}\label{schrepr}
S_{\bla} ({\bf x}_N) = \sum_{\{{\sf { T}}\}} \prod_{i, j} x_{_{{\sf { T}}_{ij}}}= \sum_{\{\cal{C}\}}\prod_{j=1}^{N}
x_{j}^{c_j} ,
\end{gather}
where the middle sum is over all tableaux ${\sf { T}}$ of shape ${\bla}$, and the product of $x_{_{{\sf { T}}_{ij}}}$ is over all entries of ${\sf { T}}$. Summation in right-hand side of (\ref{schrepr}) is over all admissible $\cal{C}$, and $c_j$ is the number of steps upward along the vertical line with abscissa $x_j$. It follows from (\ref{schrepr}) that $S_{\bla} ({\bf 1}_N)$ is the total number of the nests of paths \cite{macd}:
\begin{gather}\label{numbpaths}
S_{\bla} ({\bf 1}_N) = \sum_{\{\cal{C}\}}1=\prod_{1\leq j < k \leq N}\frac{\lambda_j - j - \lambda_k+k}{k-j} .
\end{gather}

Each lattice path belonging to $\cal C$ can be enclosed by a rectangle so that the path's starting point coincides with the lower left vertex. The size of rectangle for $i^{\rm th}$ path is $\la_i\times (N-i)$, $1\le i \le N$. The {\it volume of the path} is the number of cells below the path within the corresponding rectangle. The volume $|{\cal C}|$ of the star $\cal C$ is equal to the sum of the volumes of the paths:
\begin{gather*}
|{\cal C}| = \sum_{j=1}^{N}(N-j)c_j =
N |{\bla}| - \sum_{j=1}^{N} j c_j ,
\end{gather*}
since $\sum_{j=1}^N c_j = |\bla|$ is the weight of ${\bla}$. The partition function of the stars $\cal C$ defined as ${\cal Z}_{\{\cal{C}\}} = \sum_{\{\cal{C}\}} q^{ \mid {\cal C} \mid}$ is expressed through the Schur function (\ref{schrepr}) at ${\bf x}_N = {\bf q}_N/q$:
\begin{gather}
{\cal Z}_{\{\cal{C}\}} = \sum_{\{\cal{C}\}} q^{ \mid {\cal C} \mid} = q^{N |{\bla}|} \sum_{\{\cal{C}\}}
q^{- \sum_{j=1}^{N} j c_j} = q^{N |{\bla}|}S_{\bla} \bigg(\frac{1}{\textbf{q}_N}\bigg)
= S_{\bla} \bigg(\frac{\textbf{q}_N}{q}\bigg)= q^{-|{\bla}|} S_{\bla} (\textbf{q}_N).
\label{schpar1}
\end{gather}
Equation \eqref{schpar1} expresses that $\{\cal{C}\}$
is determined by a choice of $\bla$.

Taking into account (\ref{schpar1}), it is appropriate to introduce the \textit{extended volume} $|{\cal C} |_{\rm w}$ (see Fi\-gure~\ref{fig:f5}):
\begin{gather}\label{dual10}
|{\cal C} |_{\rm w} \equiv
|{\cal C} | + |\bla|
= (N+1) |{\bla}| - \sum_{j=1}^{N} j c_j .
\end{gather}
Although the star in Figure~\ref{fig:f5} is presented for $N=4$, one {\it extra} column of cells is added between the lines with abscissae $N$ and $N+1$. To admit a single rightward step for each path in the nest is the same as to add $|\bla|$ to $|{\cal C}|$ thus obtaining $|{\cal C} |_{\rm w}$. The definition (\ref{dual10}) enables the following identity:
\begin{gather}
\sum_{\{\cal{C}\}} q^{ |{\cal C} |_{\rm w}} = q^{|{\bla}|} S_{\bla} \Bigl( \frac{\textbf{q}_N}{q} \Bigr) =S_{\bla} (\textbf{q}_N) .
\label{dual12}
\end{gather}

Let us introduce the partition ${\BM} \equiv ({\CK}, {\cal M}, \dots, {\CK})$ of the length $l({\BM})=N$, where ${\CK}$ is given by Definition~\ref{Definition3a}, and let us advance

\begin{Definition}
The conjugate star $\cal{B}$, corresponding to the semi-standard \textit{skew Young tableau of shape} ${\BM}\backslash {\bla}$, is a configuration of
$N$ self-avoiding lattice paths (as depicted in Figure \ref{fig:f6}) that connect the non-equidistant
points $(1, \mu_{\al})$, where $\mu_{\al}=\la_{\al}+N-{\al}$, $1\le {\al}\le N$, with the equidistantly arranged points $B_{\al}=({\al}, M+1-{\al})$.
\end{Definition}

\begin{figure}[h]\center
\includegraphics{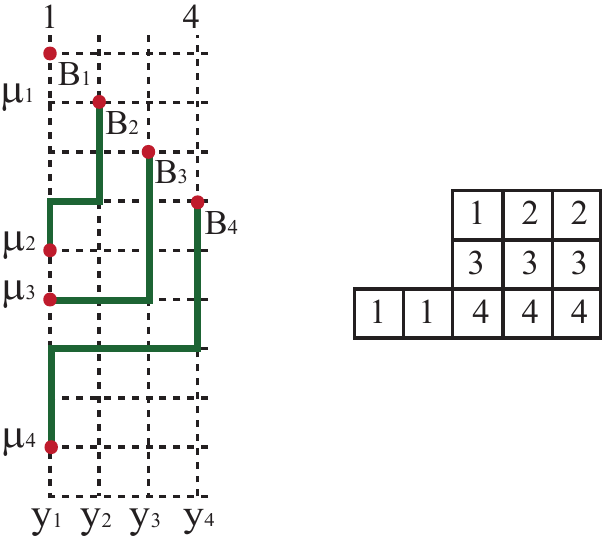}
\caption{Conjugate star ${\cal B}$ and the corresponding skew tableau
${\sf {T_{{\BM}\backslash {\bla}}}}$ of shape ${\BM}\backslash {\bla}$.}
\label{fig:f6}
\end{figure}

Each path in the conjugate star ${\cal B}$ is enclosed by rectangle of the size $({\CM}-\la_{\al})\times ({\al}-1)$. The volume of path in ${\cal B}$ is the number of cells \textit{above}
the path inside the corresponding rectangle (see Figure \ref{fig:f6}). The volume of ${\cal B}$ is $|{\cal B}|= \sum_{{\al}=1}^{N} b_{\al} ({\al}-1)$, where $b_{\al}$ is the number of steps along abscissa $y_{\al}$.

The $\al^{\rm th}$ path of $\cal{B}$ making ${\CM}-\la_{\al}$ upward steps
encodes the $\al^{\rm th}$ row of the skew tableau~${\sf { T_{{\BM}\backslash {\bla}}}}$ of shape ${\BM}\backslash {\bla}$. The number of steps along the sites with abscissa $y_{\be}$ is equal to the number of occurrences of $\be$ in $\al^{\rm th}$ row of ${\sf { T_{{\BM}\backslash {\bla}}}}$ (Figure~\ref{fig:f6}). Since ${\sf { T_{{\BM}\backslash {\bla}}}}$ and ${\sf { T_{ {\bla}}}}$ are in \textit{involution} \cite{step}, the corresponding Schur functions are coinciding. Thus we put ${\bf y}\equiv (y_1, y_2, \dots, y_N)$ and obtain
\begin{gather}\label{dual3}
S_{{\BM}\backslash \bla}({\bf y}) = \displaystyle{ \frac{\det\big(y_{\al}^{{\CM} -\la_{N-{\be}+1} +N-{\be}}\big)_{1 \leq {\al}, {\be} \leq N}}{\det(y_{\al}^{N-{\be}})_{1\leq
{\al}, {\be} \leq N}}} = \sum_{\{\cal{B}\}}\prod_{{\al}=1}^{N}
y_{{\al}}^{b_{\al}} .
\end{gather}

\begin{Proposition}\label{Proposition2}
\textit{The following representation of the Schur function $S_{\bla}({\bf y})$ is valid}:
\begin{gather}
\label{dual4}
S_{\bla} ({\bf y}) = \sum_{\{\cal{B}\}}\prod_{{\al}=1}^{N}
y_{{\al}}^{{\CM}-b_{\al}} .
\end{gather}
\end{Proposition}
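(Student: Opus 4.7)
The identity is equivalent, upon the substitution ${\bf y}\mapsto{\bf y}^{-1}$ in \eqref{dual3}, to the Schur-function identity
\[
S_\bla({\bf y}) \,=\, \Bigl(\prod_{\al=1}^N y_\al\Bigr)^{\!\CM} S_{\BM\backslash\bla}\bigl(y_1^{-1}, y_2^{-1}, \dots, y_N^{-1}\bigr),
\]
because the right-hand side of \eqref{dual4} can be rewritten as $\bigl(\prod_\al y_\al\bigr)^\CM \sum_{\{\cal B\}}\prod_\al y_\al^{-b_\al}$, and the sum equals $S_{\BM\backslash\bla}({\bf y}^{-1})$ by \eqref{dual3}. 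My plan is therefore to verify this rewriting by a direct manipulation of the Jacobi--Trudi-type determinantal representation of $S_{\BM\backslash\bla}$ supplied in \eqref{dual3}.

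First I substitute $y_\al\mapsto y_\al^{-1}$ in \eqref{dual3} and distribute the prefactor $y_\al^\CM$ into the $\al$-th row of the numerator: the row exponent $-(\CM-\la_{N-\be+1}+N-\be)$ becomes $\la_{N-\be+1}+\be-N$, so the numerator takes the form $\det\bigl(y_\al^{\la_{N-\be+1}+\be-N}\bigr)_{\al,\be}$ and the denominator is $\det\bigl(y_\al^{\be-N}\bigr)_{\al,\be}$. Next I reverse the column order $\be\mapsto N+1-\be$ in both determinants; the two sign factors $(-1)^{\lfloor N/2\rfloor}$ cancel in the ratio, and the entries become $y_\al^{\la_\be-\be+1}$ and $y_\al^{1-\be}$, respectively. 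Finally, pulling $y_\al^{1-N}$ out of the $\al$-th row of each determinant (these row-prefactors cancel in the ratio), the numerator turns into the standard Schur numerator $\det\bigl(y_\al^{\la_\be+N-\be}\bigr)$ of \eqref{sch}, and the denominator into the Vandermonde $\det\bigl(y_\al^{N-\be}\bigr)=\CV({\bf y})$ of \eqref{spxx1}. The ratio is $S_\bla({\bf y})$, as required.

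Conceptually, the identity is an instance of the classical ``rectangular complementation'' relation for Schur functions: if $\bla$ is contained in the $N\times\CM$ rectangle, then $S_{\BM\backslash\bla}({\bf y}) = \bigl(\prod_\al y_\al\bigr)^\CM S_\bla({\bf y}^{-1})$. This in turn can be seen combinatorially via the $180^\circ$ rotation of the skew diagram $\BM\backslash\bla$ (together with the relabelling of entries $\al\mapsto N+1-\al$), which bijects its semi-standard skew tableaux with semi-standard tableaux of the complementary partition $(\CM-\la_N,\CM-\la_{N-1},\dots,\CM-\la_1)$, the symmetry of $S$ in its variables absorbing the relabelling. The only point that demands care in the direct proof above is the bookkeeping of the column-reversal signs and the matching of the $y_\al$-prefactors; no essential difficulty arises beyond this.
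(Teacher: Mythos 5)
Your proof is correct and follows essentially the same route as the paper: both reduce \eqref{dual4} to the rectangular-complementation identity $S_{\BM\backslash\bla}({\bf y})=\bigl(\prod_\al y_\al\bigr)^{\CM}S_\bla(1/{\bf y})$ obtained by factoring $\prod_\al y_\al^{\CM}$ out of the determinantal formula \eqref{dual3}. The only difference is that you carry out the column reversal and row rescaling explicitly, whereas the paper delegates that ``some transformation'' to the cited reference.
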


\begin{proof}
Equation \eqref{dual4} is valid since factorizing out $\prod_{{\al}=1}^{N} y_{\al}^{\CM}$ from left-hand side of (\ref{dual3}), after some transformation one obtains \cite{stemb}:
\begin{gather*}
S_{{\BM}\backslash \bla}({\bf y}) = S_{\bla} \bigg(\frac{1}{\bf y}\bigg) \prod_{{\al}=1}^{N}
y_{{\al}}^{{\CM}} . \tag*{\qed}
\end{gather*}
\renewcommand{\qed}{}
\end{proof}

The representation (\ref{dual4}) connects the Schur function labelled by ${\bla}$ to the parameters of the conjugate stars ${\cal B}$.

Let us introduce two volumes of the conjugate star ${\cal B}$, namely, a dual volume $|{\cal B}|_{\rm d}$,
\begin{gather}\label{dual1}
|{\cal B}|_{\rm d} = \sum_{{\al}=1}^{N} ({\CM}-\la_{\al}-b_{\al})({\al}-1) ,
\end{gather}
and a dual extended volume $|{\cal B}|_{\rm w}$,
\begin{gather}\label{dual2}
|{\cal B}|_{\rm w} \equiv |{\cal B}|_{\rm d}+ n({\bla})=
\sum_{{\al}=1}^{N} ({\CM}-b_{\al})({\al}-1) ,\qquad n({\bla}) \equiv \sum_{{\al}=1}^{N} \la_{\al} ({\al}-1) .
\end{gather}
While $|{\cal B}|_{\rm d}$ (\ref{dual1}) ``counts'' the cells {\it below} the paths in ${\cal B}$, the definition (\ref{dual2}) gives $|{\cal B}|_{\rm w}$ as ``shifted''~$|{\cal B}|_{\rm d}$.

The partition function of the star $\cal B$, Figure~\ref{fig:f6},
is obtained from (\ref{dual4}) under the parametrization ${\bf y}={\bf q}_N/q$:
\begin{gather}\label{schpar2}
S_{\bla} \bigg(\frac{\textbf{q}_N}{q}\bigg) = \sum_{\{{\cal B}\}} q^{\sum_{j=1}^{N} (j-1) ({\CM} - b_j)} = \sum_{\{{\cal B}\}} q^{|{\cal B}|_{\rm w}} ,
\end{gather}
where equation~(\ref{dual2}) is used, and summation is over all stars $\cal{B}$.
It is seen from (\ref{dual3}) that $|{\cal B}|$ is connected with the Schur function that corresponds to the skew partition ${\BM} \backslash \bla$.
In the limit $q\rightarrow 1$, the Schur functions (\ref{dual12}) and (\ref{schpar2}) are equal to the number of stars either $\cal{B}$ or $\cal{C}$: $S_{\bla} ({\bf {1}}_N) = \sum_{\{\cal{B}\}} 1 = \sum_{\{\cal{C}\}} 1$.

\subsubsection{Stars with deviation}

Regarding the star ${\cal C}$ above, let us introduce the \textit{star ${\cal C}_k$ with deviation $k$} by means of

\begin{Definition}\label{Definition3b}
The star ${\cal C}_k$ with deviation $k$ is the nest of lattice paths like the nest ${\cal C}$ without upward steps along the lines with abscissae $x_1, x_2, \dots, x_k$.
The star ${\cal C}_k$ is characterized by such semi-standard
Young tableau of shape ${\bla}_{N-k}$ that the number inside the upper left corner cell is greater than $k$ (see Figure~\ref{fig:f555}).
\end{Definition}

The Schur function associated with the star ${\cal C}_k$ is represented like (\ref{schrepr}):
\begin{gather}
\label{schdevi}
S_{{\bla}} (\mathbf{x}_{[N\backslash k]}) = \sum_{\{{\cal C}_k\}}\prod_{j=k+1}^{N} x_{j}^{c_j} ,
\end{gather}
where ${\bla}\equiv {\bla}_{N-k}$ and $\mathbf{x}_{[N\backslash k]}$ (\ref{xindi1}) is used.
The representation (\ref{schdevi}) is in agreement with the representation (\ref{schrepr}) subjected to the limit (\ref{limsch1}) provided that $c_1=c_2=\dots = c_k = 0$.

\begin{figure}[h]\centering
\includegraphics
{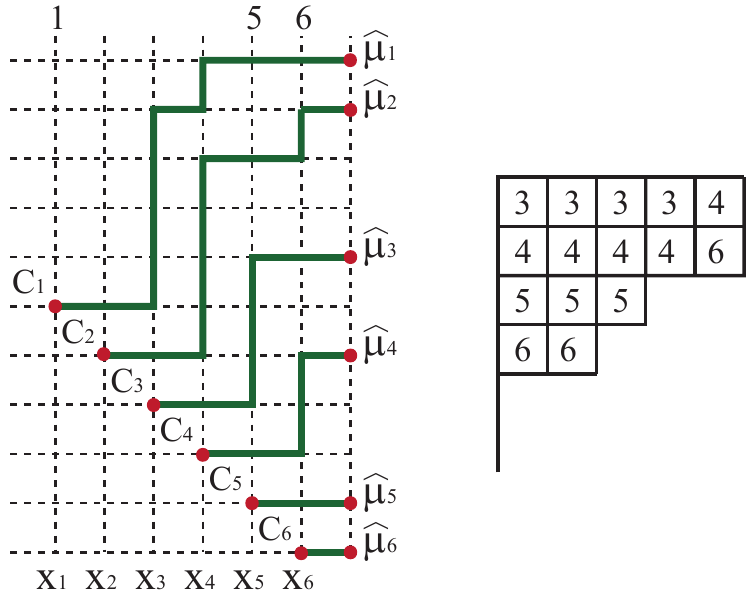}
\caption{The star ${\cal{C}}_k$ with deviation $k=2$ given by ${\widehat\bmu}=(10, 9, 6, 4, 1, 0)$ represents the semi-standard tableau of shape ${\widehat\bla}=(5, 5, 3, 2, 0, 0)$.}\label{fig:f555}
\end{figure}

By analogy with (\ref{dual10}), we introduce the extended volume $|{\cal C}_k |_{\rm w}$ of the star ${\cal C}_k$:
\begin{gather}
\label{dual100}
|{\cal C}_k |_{\rm w} \equiv
(N+1) |{\bla}_{N-k}| - \sum_{j=k+1}^{N} j c_j ,
\end{gather}
where $\sum_{j=k+1}^{N} c_j = |{\bla}_{N-k}|$ since $c_1=c_2=\dots = c_k= 0$, and the corresponding partition function is expressed:
\begin{gather}
{\cal Z}_{\{{\cal C}_k\}} =
\sum_{\{{\cal C}_k\}} q^{ \mid {\cal C}_k \mid_{\rm w}} =
q^{(N+1) |{\bla}_{N-k}|}
S_{{\bla}_{N-k}} \bigg(\frac{1}{\textbf{q}_{ [N\backslash k]}} \bigg) =
S_{{\bla}_{N-k}} ({ \textbf{q}_{N-k}}) .
\label{dual18}
\end{gather}
The modification of the volume in the form
\begin{gather}
\label{dual101}
|{\cal C}_k |_{\overline{\rm w}} \equiv
(N+k+1) |{\bla}_{N-k}| - \sum_{j=k+1}^{N} j c_j
\end{gather}
leads to the partition function
\begin{gather}
\label{dual181}
{\cal Z}_{\{{\cal C}_k\}} =
\sum_{\{{\cal C}_k\}} q^{ \mid {\cal C}_k \mid_{\overline{\rm w}}} = q^{k |{\bla}_{N-k}|}
S_{{\bla}_{N-k}} ({ \textbf{q}_{N-k}})
= S_{{\bla}_{N-k}} \big(\textbf{q}_{[N\backslash k]}\big) .
\end{gather}
The volumes \eqref{dual100} and \eqref{dual101} ensure \eqref{schprop}. Both (\ref{dual18}) and (\ref{dual181}) are reduced at $k=0$ to~(\ref{dual12}).

The conjugated star ${\cal B}_k$ with deviation $k$ is introduced by

\begin{Definition} \label{Definition4b}
The star ${\cal B}_k$ is the nest of lattice paths like ${\cal B}$ supplied with the requirement that the number of upward steps along a vertical line with abscissa $y_\alpha$ for $N-k+1 \leq \alpha \leq N$ is equal to ${\CM}$ (see Figure~\ref{fig:pathconj}).
\end{Definition}
\begin{figure}[h!]\centering
\includegraphics{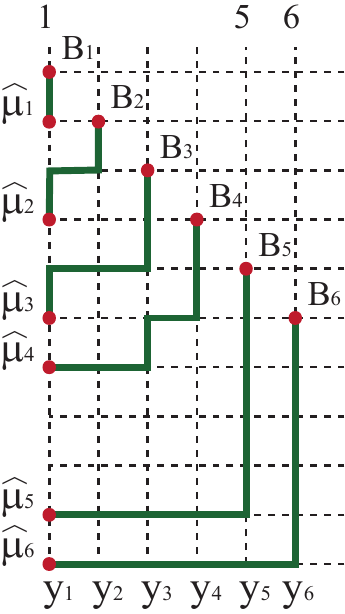}
\caption{The star representing skew tableau ${\BM} \backslash {\widehat{\bla}}$.}
\label{fig:pathconj}
\end{figure}

The limit (\ref{limsch}) results in the Schur function corresponding to a skew tableau ${\BM}\backslash {\widehat{\bla}}$, where~${\widehat{\bla}}$ is defined by (\ref{qanal14}). It is suggestive, regarding (\ref{dual4}), to put
\begin{gather}\label{lim}
S_{{\bla}_{N-k}} ({\bf y}_{N-k})=\sum_{\{{\cal B}_k\}} \prod\limits_{\al=1}^{N-k} y_{\al}^{{\CM}-b_{\al}} .
\end{gather}

Equation \eqref{lim} is expressed under the parametrization ${\bf y}={\bf q}_N/q$:
\begin{gather}
\nonumber
S_{{\bla}_{N-k}}\bigg(\frac{{\bf q}_{N-k}}{q}\bigg)
= \sum_{\{{\cal B}_k\}} q^{\mid {\cal B}_k \mid_{\rm w}} ,\qquad
| {\cal B}_k |_{\rm w} \equiv
\sum_{{\al}=1}^{N-k} ({\CM}-b_{\al})({\al}-1) .
\end{gather}

\subsection{Watermelons and their generating functions}\label{waterm}

\subsubsection{Watermelons without deviation}

The \textit{watermelon} configuration (see Figure~\ref{fig:f7}($a$)) is represented by the nest of self-avoiding lattice paths with equidistantly arranged starting $C_l$ and ending $B_l$ points ($1\le l \le N$). Only upward and rightward steps are allowed, and the nest is characterized by the paths with the total number of upward steps ${\CM}$ and the total number of rightward steps $N$. The $l^{\rm th}$ path in watermelon is contained within the rectangle such that $C_l$ and $B_l$ are its lower left and upper right vertices, respectively ($1\le l \le N$).

The nest of paths corresponding to watermelon results from

\begin{Construction}\label{Construction1}
Watermelon
is the nest of paths obtained by ``gluing'' the stars ${\cal C}$ and ${\cal B}$ along the dissection line determined by the partition $\bmu$ with parts respecting
\eqref{calm} so that the points $(N+1, \mu_i)$ (see Figure~\ref{fig:f5}) and
the points $(1, \mu_i)$ (see Figure~\ref{fig:f6})
are identified (Figure~\ref{fig:f7}).
\end{Construction}
\begin{figure}[h]
\center
\includegraphics {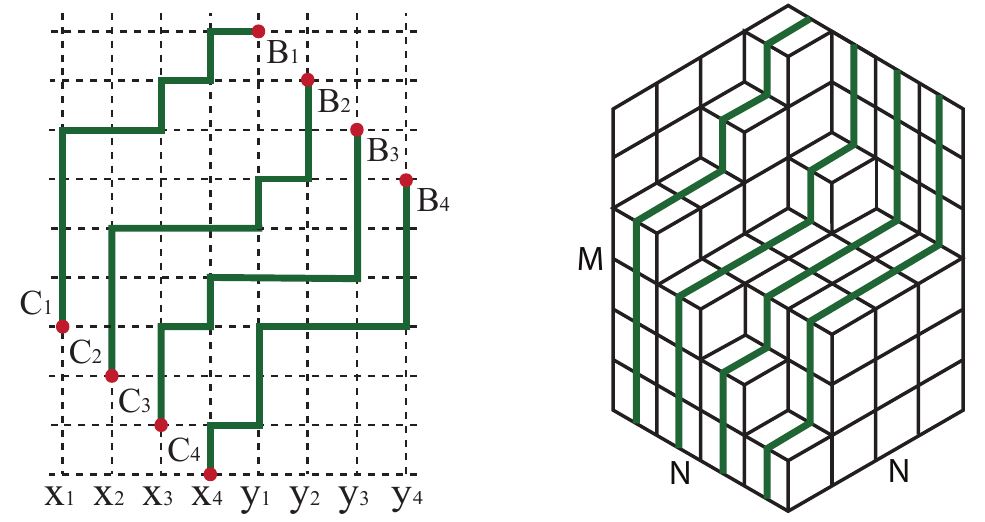}
\put(-220,-15){\makebox(0,0)[lb]{$a$}}
\put(-50,-15){\makebox(0,0)[lb]{$b$}}
\caption{Watermelon configuration of self-avoiding paths ($a$) and related boxed plane partition with \textit{gradient lines} ($b$).}
\label{fig:f7}
\end{figure}

{\sloppy
The representations of the Schur functions (\ref{schrepr}) and (\ref{dual4}) allow us to conclude that $\mathcal{P}_{{\cal L}/n}(\textbf{x}_N, \textbf{y}_N)$~(\ref{scschr}) is related with summation over the nests of paths constituting the water\-melon configurations, i.e.,
\begin{gather}
\label{gener1}
\mathcal{P}_{{\CM}/n} (\textbf{x}_N, \textbf{y}_N)
= \sum_{\bla\subseteq {\{({\CM}\slash n)^N\}}}
\Bigg(\sum_{\{\cal{C}\}} \prod_{j=1}^{N}x_{j}^{c_j}\Bigg)
\Biggl(\sum_{\{\cal{B}\}}\prod_{{\al}=1}^{N} y_{{\al}}^{{\CM}-b_{\al}} \Biggr) ,
\end{gather}
may be considered as the generating function of the watermelon configurations, which are in a~bijective correspondence with boxed plane partitions~\cite{4, 5}.

}

Let us define the \textit{volume of the path} as the number of cells below the path within the corresponding rectangle, and the \textit{volume of watermelon} as the volume of all paths constituting the watermelon (see Figure~\ref{fig:f7}).

\noindent
If so, the numbers of nests of the lattice paths constituting the watermelons are encoded by the \textit{generating function of watermelons} given by

\begin{Definition}\label{Definition5a}
Let ${\bf w}_{N, {\CM}\slash n}$ be a nest of $N$ paths constituting the watermelon. Each path in the nest consists of $N$ steps along the abscissa axis. The subscript ${\CM}/n$ implies that $\mu_N \ge n$ and the number of steps along the ordinate axis is ${\CM}$. The generating function $W_q (N, {\CM}/n)$ of watermelons is defined as the polynomial
\begin{gather}
W_q (N, {\CM}/ n)\equiv \sum_{\{{\bf w}_{N, {\CM}/ n}\}} q^{|{\bf w}_{N, {\CM}/n}|} ,
\label{gfw}
\end{gather}
where summation is over all nests ${{\bf w}_{N, {\CM}/n}}$, and $|{{\bf w}_{N, {\CM}/n}}|$ is the volume of the nest:
\begin{gather}
\label{gfw10}
|{{\bf w}_{N, {\CM}/n}}| \equiv\sum_{j=1}^{2N} (2N-j) m_j
- \frac{{\CM} N}{2} (N-1)+ \frac{nN}{2}(3N-1),
\end{gather}
and $m_j$ is the number of steps along the line with abscissa $x_j$, $1\le j\le 2 N$
(it is assumed at $n=0$ that ${{\bf w}_{N, {\CM}}}\equiv {{\bf w}_{N, {\CM}/0}}$).
\end{Definition}

Let us introduce the partition ${\widehat{\BM}}$ of length $l\big({\widehat{\BM}}\big) = N+L$ and volume $\big|{\widehat{\BM}}\big| = {\CM} N$,
\begin{gather}
\label{dual7}
{\widehat{\BM}} \equiv {\widehat{\BM}}_{N+L} = ({{\BM}}_N,\underbrace{0, 0, \dots, 0}_{L}) ,\qquad {\BM}_N \equiv (\underbrace{{\CM}, {\CM}, \dots, {\CM},}_{N}) .
\end{gather}
The corresponding semi-standard Young tableau of shape ${\widehat{\BM}}$ consists of cells arranged in $N$ rows of length $N$ (and $L$ ``rows'' of zero length). Enumeration of the sets of numbers which ``fill'' the cells is equivalent to enumeration of $2N$-tuples
$(m_1, m_2, \dots, m_{2N})$.
With regard at the watermelon, Figure~\ref{fig:f7}, we put $N=L$ and define the corresponding Schur function $S_{\widehat{\BM}}$ labelled by ${\widehat{\BM}}$ \eqref{dual7}:
\begin{gather}\label{schwat}
S_{\widehat{{\BM}}} ({\bf x}_{2N}) \equiv \sum_{\{{\bf w}_{N, {\CM}}\}}\prod_{j \in [2N]} x_j^{m_j} .
\end{gather}

Taking into account \eqref{schwat}, we arrive at the following

\begin{Proposition}\label{Proposition3}
The generating function $W_q (N, {\CM}/n)$ \eqref{gfw} satisfies the identities:
\begin{align}
\label{volwater3}
W_q (N, {\CM}/n) & = \mathcal{P}_{{\cal M}/n}\bigg(\textbf{q}_N, \frac{\textbf{q}_N}{q}\bigg)\\
\label{gfw5}
& = q^{nN(N-1) - ({\CM}-n) \frac{N}{2}(N+1)} S_{\widehat{\BM}}
(\textbf{q}_{2N}) ,
\end{align}
where $\mathcal{P}_{{\cal M}/n}\bigl(\textbf{q}_N, \textbf{q}_N/q\bigr)$ is the polynomial \eqref{gener1} under $q$-parametrization \eqref{rep21}, and $S_{\widehat{\BM}}
(\textbf{q}_{2N})$ is given by \eqref{schwat} under the $q$-parametrization $\textbf{x}_{2N} = \textbf{q}_{2N}$.
\end{Proposition}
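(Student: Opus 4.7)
The plan is to prove the two identities in Proposition~\ref{Proposition3} by exploiting Construction~\ref{Construction1} together with the generating-function representations of the stars $\mathcal{C}$ and $\mathcal{B}$ obtained in Section~\ref{waterm}, and then the Schur representation~(\ref{schwat}) of the watermelon sum.

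For the identity $W_q(N,\mathcal{M}/n) = \mathcal{P}_{\mathcal{M}/n}(\textbf{q}_N, \textbf{q}_N/q)$, I would argue as follows. By Construction~\ref{Construction1}, every watermelon ${\bf w}_{N,\mathcal{M}/n}$ is obtained by gluing a star $\mathcal{C}$ to a conjugate star $\mathcal{B}$ along a common dissection partition $\bla$ whose parts satisfy $\mathcal{M} \ge \lambda_1 \ge \cdots \ge \lambda_N \ge n$, i.e., $\bla \subseteq \{(\mathcal{M}/n)^N\}$. Hence the sum defining $W_q$ reshapes as a sum over $\bla$ followed by independent sums over the stars $\mathcal{C}$ and $\mathcal{B}$ of shape $\bla$. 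The extended-volume representations~(\ref{dual12}) and~(\ref{schpar2}) convert these inner sums to $S_{\bla}(\textbf{q}_N)$ and $S_{\bla}(\textbf{q}_N/q)$ respectively, and the outer summation over $\bla$ recovers the Cauchy--Binet sum~(\ref{scschr}) at $(\textbf{q}_N, \textbf{q}_N/q)$ with $\mathcal{L} = \mathcal{M}$. The only nontrivial point is that the watermelon volume~(\ref{gfw10}) decomposes as $|\mathcal{C}|_{\rm w} + |\mathcal{B}|_{\rm w}$ up to an additive constant uniform over all watermelons with fixed $(\mathcal{M}, N, n)$; this follows from the linearity of~(\ref{gfw10}) in $m_j = c_j$ for $1\le j \le N$ and $m_j = b_{j-N}$ for $N+1 \le j \le 2N$, combined with the $n$-shift~(\ref{qanal7}) that accounts for $\mu_N \ge n$.

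For the second equality, I would invoke the representation~(\ref{schwat}) of $S_{\widehat{\BM}}(\textbf{x}_{2N})$ as a sum over watermelon nests with the monomial $\prod_j x_j^{m_j}$ recording the upward steps on each vertical line. The specialisation $\textbf{x}_{2N} = \textbf{q}_{2N}$ turns each summand into $q^{\sum_j j m_j}$. Since $\widehat{\BM}$ from~(\ref{dual7}) has fixed weight $|\widehat{\BM}| = \mathcal{M}N$, one has $\sum_j m_j = \mathcal{M}N$ (adjusted by the $n$-shift, as in~(\ref{qanal7})), so that $\sum_j(2N-j)m_j$ appearing in~(\ref{gfw10}) differs from $\sum_j j m_j$ only by the constant $2N \sum_j m_j$. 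Combining this with the additive constants $-\frac{\mathcal{M}N}{2}(N-1) + \frac{nN}{2}(3N-1)$ in~(\ref{gfw10}) and simplifying yields the prefactor $q^{nN(N-1) - (\mathcal{M}-n)\frac{N}{2}(N+1)}$ in~(\ref{gfw5}).

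The hard part will be the careful $q$-exponent bookkeeping, since the three volume definitions involved—the extended volumes $|\mathcal{C}|_{\rm w}$, $|\mathcal{B}|_{\rm w}$ of~(\ref{dual10}), (\ref{dual2}), the watermelon volume~(\ref{gfw10}), and the monomial $\sum_j j m_j$ arising from the principal specialisation of $S_{\widehat{\BM}}$—all carry distinct normalisations. Pinning down the precise constants $nN(N-1)$ and $(\mathcal{M}-n)\frac{N}{2}(N+1)$ requires combining the $n$-shift via~(\ref{qanal7}) with the reindexing $m_j \leftrightarrow c_j, b_{j-N}$ induced by the gluing, and with the consistency of the ``extra column'' built into $|\mathcal{C}|_{\rm w}$ and the geometry of Figure~\ref{fig:f7}.
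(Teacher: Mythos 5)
Your proposal matches the paper's own proof in both steps: the paper likewise establishes \eqref{volwater3} by splitting the watermelon sum into $\sum_{\bla}\sum_{\{\mathcal{C}\}}\sum_{\{\mathcal{B}\}}$ via the exact volume identity $|\mathcal{C}|_{\rm w}+|\mathcal{B}|_{\rm w}=|{\bf w}_{N\CM}|$ together with $n$-shifted volumes and \eqref{qanal7}, and then applies \eqref{dual12} and \eqref{schpar2}; and it obtains \eqref{gfw5} from \eqref{schwat} by exactly the reflection $\sum_j(2N-j)m_j=2N\CM N-\sum_j jm_j$ (phrased there as passing through $S_{\widehat{\BM}}(1/\textbf{q}_{2N})$ and \eqref{schpar1}). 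The approach is correct and essentially identical; only the explicit exponent bookkeeping, which you defer, remains to be carried out.
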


\begin{proof}
{\sloppy
First, let us note that $|{\bf w}_{N {\CM}}|$ defined by (\ref{gfw10}) at $n=0$ satisfies the identity
\begin{gather}
\label{volwater}
|{\cal C}|_{\rm w}^{{\CM}} + |{\cal B}|_{\rm w}^{{\CM}} = |{\bf w}_{N {\CM}}| ,
\end{gather}
where the notations $|{\cal C}|_{\rm w}^{{\CM}}$ and $|{\cal B}|_{\rm w}^{{\CM}}$ are introduced instead of the volumes $|{\cal C}|_{\rm w}$ (\ref{dual10}) and~$|{\cal B}|_{\rm w}$~(\ref{dual2}), respectively, for
notational convenience to remind that $\bla$ satisfies \eqref{ttt1}
(${\cal L}={\CM}$, $n=0$), whereas the ``gluing'' partition $\bmu$ satisfies \eqref{calm} ($n=0$).

}

We also introduce the notations $|{\cal C}|_{\rm w}^{{\CM}/n}$ and $|{\cal B}|_{\rm w}^{{\CM}/n}$ for the extended volumes of the nests of paths characterized by $\bmu$ \eqref{calm} at arbitrary $n$:
\begin{align}\label{volwater4}
& |{\cal C}|_{\rm w}^{{\CM}/n} \equiv
|{\cal C}|_{\rm w}^{{\CM}-n} + \frac{nN}{2}(N+1),
\\
& |{\cal B}|_{\rm w}^{{\CM}/n} \equiv
|{\cal B}|_{\rm w}^{{\CM}-n} + \frac{nN}{2}(N-1).
\label{volwater5}
\end{align}
The volumes $|{\cal C}|_{\rm w}^{{\CM}/n}$ \eqref{volwater4} and $|{\cal B}|_{\rm w}^{{\CM}/n}$ \eqref{volwater5} satisfy the identity
\begin{gather*}
|{\cal C}|_{\rm w}^{{\CM}/n} + |{\cal B}|_{\rm w}^{{\CM}/n} = |{\bf w}_{N, {\CM} - n}| + nN^2 = |{\bf w}_{N, {\CM}/n}| ,
\end{gather*}
where \eqref{volwater} is taken into account.

Left-hand side of \eqref{volwater3} is re-expressed with the use of \eqref{volwater4} and \eqref{volwater5}:
\begin{gather}
W_q (N, {\CM}/n) = \sum_{\bla \subseteq
\{({\cal M}-n)^N \}}
\bigg(q^{\frac{nN}{2}(N+1)} \sum_{\{{\cal C}\}} q^{|{\cal C}|_{\rm w}^{{\cal M}-n}}\biggr)
\biggl(q^{\frac{nN}{2}(N-1)} \sum_{\{{\cal B}\}} q^{|{\cal B}|_{\rm w}^{{\cal M}-n}}\biggr) , \label{volwater1}
\end{gather}
where the summation over the nests ${\bf w}_{N {\cal M}}$ is replaced equivalently by the summation over the stars ${\cal C}$ and ${\cal B}$:
\begin{gather}\label{graph}
\sum_{\{{\bf w}_{N {\cal M}}\}} \Longleftrightarrow \sum_{\bla \subseteq
\{{\cal M}^N \}}
\sum_{\{{\cal C}\}} \sum_{\{{\cal B}\}} .
\end{gather}
Further, we use the representations (\ref{dual12}), (\ref{schpar2}), and
obtain from \eqref{volwater1}:
\begin{gather}
W_q (N, {\CM}/n) = q^{n N^2} \mathcal{P}_{{\cal M}-n}\biggl(\textbf{q}_N, \frac{\textbf{q}_N}{q}\biggr) = \mathcal{P}_{{\cal M}/n}\biggl(\textbf{q}_N, \frac{\textbf{q}_N}{q}\biggr),
\label{volwater2}
\end{gather}
where the $q$-version of the Cauchy--Binet identity (\ref{scqschr}) is accounted for. The identity \eqref{volwater3} is justified.

Furthermore, we obtain from (\ref{gfw}) and (\ref{gfw10}):
\begin{gather*}
W_q (N, {\CM}/n) = q^{(2{\CM}+n)N^2 - \frac{({\CM}-n) N}{2}(N-1)}
S_{\widehat{\BM}}
\bigg(\frac{1}{\textbf{q}_{2N}} \bigg) ,
\end{gather*}
where $S_{\widehat{\BM}}
(1/\textbf{q}_{2N})$ is given by
\eqref{schwat} at $\textbf{x}_N = 1/\textbf{q}_{2N}$.
Taking into account \eqref{schpar1} we obtain:
\begin{gather}
\nonumber
S_{\widehat{\BM}}\bigg(\frac{1}{\textbf{q}_{2N}} \bigg) = q ^{- 2{\CM} N^2}
S_{\widehat{\BM}}
\bigg(\frac{\textbf{q}_{2N}}{q} \bigg) = q ^{- 2{\CM} N^2 - {\CM} N}
S_{\widehat{\BM}}({\textbf{q}_{2N}}) ,
\end{gather}
and therefore \eqref{gfw5} is also valid.
\end{proof}

\subsubsection{Watermelons with deviation}

To proceed further, \textit{watermelon with deviation} is introduced by means of

\begin{Construction}\label{Construction2}
Watermelon with deviation $k$ (see Figure \ref{fig:f10}) is obtained by ``gluing'' the stars~${\cal C}_k$ and ${\cal B}$ (see Figure \ref{fig:f5}), along the dissection line determined by
$\widehat{\bmu}$ (\ref{strhat}).
\end{Construction}
\begin{figure}[h]
\center
\includegraphics{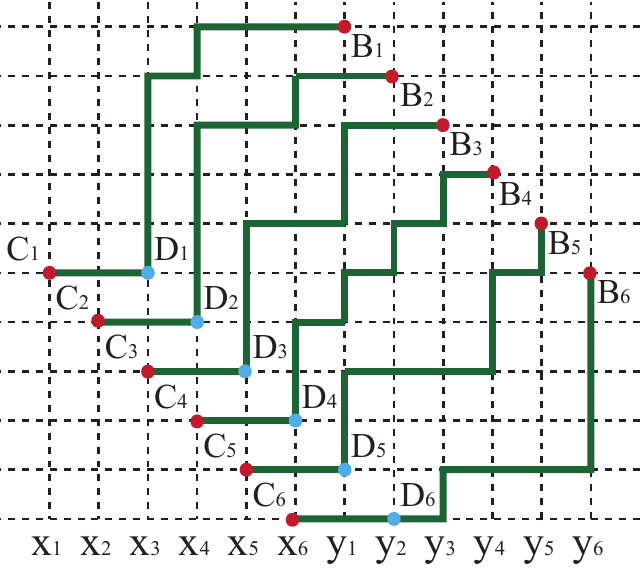}
\caption{Watermelon with deviation $k=2$ and $N=6$.}
\label{fig:f10}
\end{figure}

It follows from Construction~\ref{Construction2} that
the generating function of watermelons with devia\-tion~$k$
is given by the sum:
\begin{align}
\nonumber
\mathcal{P}_{\CK} ({\textbf x}_{N;k}, \textbf{y}_N) & \equiv \sum \limits_{\bla \subseteq \{{\CK}^{N-k}\}} S_{\bla} ({\textbf x}_{N;k})
S_{\widehat\bla} ({\textbf y}_N)
\\
\label{gendev}
&= \sum\limits_{\bla \subseteq \{{\CK}^{N-k}\}}
\Biggl(\sum_{\{{\cal C}_k\}}\prod_{x_{j}\in {\bf x}_{N;k}}x_{j}^{c_j}\Biggr)
\Biggl( \sum_{\{\cal{B}\}} \prod_{{\al}=1}^{N}y_{{\al}}^{{\CM}-b_{\al}} \Biggr),
\end{align}
where ${\bf x}_{N;k}$ implies either ${\bf x}_{N-k}$ or ${\bf x}_{[N\backslash k]}$.
Another representation of the watermelon with deviation may be obtained by gluing the stars ${\cal C}$ and ${\cal B}_k$ (see Figure~\ref{fig:wm}), and the generating function reads:
\begin{align}
\mathcal{P}_{\CK} (\textbf{x}_{N},\textbf{y}_{N-k}) & \equiv \sum\limits_{\bla \subseteq \{{\CK}^{N-k}\}}S_{\widehat\bla} ({\textbf x}_{N})
S_{\bla} ({\textbf y}_{N-k}) \nonumber
\\
&= \sum\limits_{\bla \subseteq \{{\CK}^{N-k}\}}\Biggl(\sum_{\{{\cal C}\}} \prod_{j=1}^{N}x_{j}^{c_j}\Biggr)
\Biggl(\sum_{\{{\cal{B}}_{k}\}}\prod_{{\al}=1}^{N-k}y_{{\al}}^{{\CM}-b_{\al}} \Biggr) .
\label{gendev2}
\end{align}
\begin{figure}[h]
\center
\includegraphics{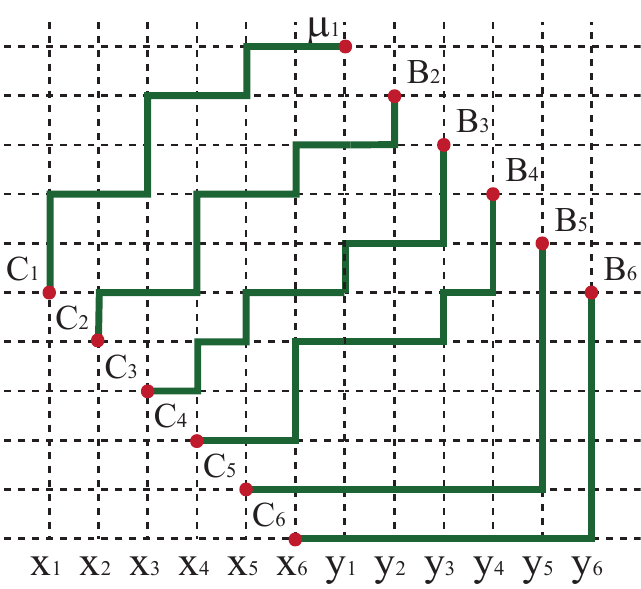}
\caption{Conjugated watermelon with deviation $k=2$ and $N=6$.}
\label{fig:wm}
\end{figure}

We introduce the generating function ${W}_q (N, L, {\CM})$ of the watermelons with deviation $k$ by means of

\begin{Definition}\label{Definition5b}
Let a nest ${{\bf w}_{N L {\CM}}}$ of $N$ lattice paths characterized by
the total numbers $L$ and ${\CM}$ of steps along abscissa and
ordinate axes to constitute the watermelon with deviation $N-L$.
The generating function $W_q (N, L, {\CM})$ of the nest ${{\bf w}_{N L {\CM}}}$ is given by the polynomial
\begin{gather}
W_q (N, L, {\CM}) \equiv \sum_{\{{\bf w}_{N L {\CM}}\}} q^{|{\bf w}_{N L {\CM}}|_{\dl}} ,
\label{gfw1}
\end{gather}
where summation goes over all admissible ${{\bf w}_{N L {\CM}}}$. Let $\dl$ to specify the volume of ${\cal C}_k$ used in Construction~\ref{Construction2}: the choice $\dl= 0$ or $\dl=k$ corresponds to the volume either \eqref{dual100} or \eqref{dual101}. The
corresponding volumes $|{{\bf w}_{N L {\CM}}}|_{\dl}$ are parameterized by $\dl$:
\begin{gather}
\label{wmdvol}
|{\bf w}_{N L {\CM}} |_{\dl} \equiv \sum_{j=k+1}^{2 N}(2N-j) m_j + \dl \sum_{j=k+1}^{N} m_j
-\frac{{\CM} N}{2} (N-1) .
\end{gather}
The numbers of steps along the vertical lines with abscissae $x_j$, $k+1\le j \le 2 N$, respect $\sum_{j=k+1}^{2 N} m_j ={\CM} N$ since \eqref{wmdvol} is right-hand side of \eqref{gfw10} at $m_1=m_2= \dots = m_k =0$ and $k=N-L$.
\end{Definition}

\vskip0.3cm
In the case $N- L=k$, we define the Schur function labelled by $\widehat{\BM}$ \eqref{dual7} in the form similar to \eqref{schwat}:
\begin{gather}
\label{schwatt}
S_{\widehat{{\BM}}}({\textbf x}_{[2N\backslash k]}) \equiv \sum_{\{{\bf w}_{NL{\CM}}\}}\prod_{j=k+1}^{2N} x_j^{m_j} .
\end{gather}
Then the graphical considerations enable to formulate

\begin{Proposition}\label{Proposition4}
The generating function $W_q (N, L, {\CM})$ \eqref{gfw1} of watermelons with deviation $N-L=k$ satisfies the identities:
\begin{align}
\label{volwaterk3}
W_q (N, L, {\CM}) & = \mathcal{P}_{{\cal M}} \bigg(\textbf{q}_{{N; k}}, \frac{\textbf{q}_N}{q}\bigg)
\\
\label{volwaterk31}
&= q^{-\frac{{\CM}N}{2}(N+1)}
S_{\widehat{\BM}} \bigl(\textbf{q}_N, q^{\dl+N+1}, q^{\dl+N+2},
\dots, q^{\dl+2N-k}\bigr) ,
\end{align}
where $\mathcal{P}_{{\cal M}}\bigl(\textbf{q}_{N; k}, \textbf{q}_N/q\bigr)$ and the Schur function $S_{\widehat{\BM}}$ are
given by \eqref{gendev} and \eqref{schwatt}, respectively, under the $q$-parametrization \eqref{rep21}. The notation $\textbf{q}_{{N; k}}$ implies $\textbf{q}_{N - k}$ at $\dl=0$
or $\textbf{q}_{{N; k}} = \textbf{q}_{[N\backslash k]}$ at $\dl= k$.
\end{Proposition}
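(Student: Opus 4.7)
My plan is to parallel the proof of Proposition~\ref{Proposition3}, using the combinatorial decomposition of the watermelon with deviation into a pair of stars, combined with the Cauchy--Binet-type identity underlying \eqref{gendev}. The argument splits into two independent strands, one for each of \eqref{volwaterk3} and \eqref{volwaterk31}.

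For \eqref{volwaterk3}, I would begin by exploiting Construction~\ref{Construction2}: a watermelon with deviation $k = N-L$ arises from gluing a star $\mathcal{C}_k$ to a conjugate star $\mathcal{B}$ along the dissection line determined by $\widehat{\bmu}$ \eqref{strhat}. The step counts $m_j$ from \eqref{wmdvol} split naturally, with $m_{k+1},\dots,m_N$ tracking the $\mathcal{C}_k$-volume and $m_{N+1},\dots,m_{2N}$ the $\mathcal{B}$-volume. Comparing \eqref{wmdvol} with \eqref{dual100}, \eqref{dual101}, and \eqref{dual2} at $n=0$, I expect an additive identity $|\mathbf{w}_{NL\mathcal{M}}|_\delta = |\mathcal{C}_k|_{\bullet} + |\mathcal{B}|_{\mathrm{w}}^{\mathcal{M}}$, where the subscript $\bullet$ is $\mathrm{w}$ when $\delta=0$ and $\overline{\mathrm{w}}$ when $\delta=k$; this is the deviation analog of \eqref{volwater}. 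With this in hand, I replace $\sum_{\{\mathbf{w}\}}$ by $\sum_{\bla \subseteq \{\mathcal{M}^{N-k}\}} \sum_{\{\mathcal{C}_k\}} \sum_{\{\mathcal{B}\}}$, the analog of \eqref{graph}, and apply the partition functions \eqref{dual18} or \eqref{dual181} (depending on $\delta$) together with \eqref{schpar2}. The result is
\[ W_q(N,L,\mathcal{M}) \;=\; \sum_{\bla \subseteq \{\mathcal{M}^{N-k}\}} S_{\bla}(\textbf{q}_{N;k})\, S_{\widehat{\bla}}(\textbf{q}_N/q), \]
which by \eqref{gendev} under the $q$-parametrization \eqref{rep21} is exactly $\mathcal{P}_\mathcal{M}(\textbf{q}_{N;k}, \textbf{q}_N/q)$.

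For \eqref{volwaterk31}, I would work directly from \eqref{gfw1}. Substituting \eqref{wmdvol}, the summand factorizes as
\[ q^{|\mathbf{w}_{NL\mathcal{M}}|_\delta} \;=\; q^{-\mathcal{M}N(N-1)/2}\prod_{j=k+1}^{N}\bigl(q^{2N-j+\delta}\bigr)^{m_j}\prod_{j=N+1}^{2N}\bigl(q^{2N-j}\bigr)^{m_j}. \]
By \eqref{schwatt}, the sum over nests then equals $S_{\widehat{\BM}}$ evaluated on the $2N-k$ arguments $q^{2N-j+\delta}$ for $k+1\le j\le N$ and $q^{2N-j}$ for $N+1\le j\le 2N$. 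Finally, rescaling every argument by $q$ and invoking the homogeneity relation $S_{\widehat{\BM}}(qz_1,\dots,qz_{2N-k}) = q^{\mathcal{M}N}\,S_{\widehat{\BM}}(z_1,\dots,z_{2N-k})$ (since $|\widehat{\BM}| = \mathcal{M}N$) absorbs a factor of $q^{\mathcal{M}N}$ into the prefactor and recasts the arguments as $(\textbf{q}_N, q^{\delta+N+1}, q^{\delta+N+2}, \dots, q^{\delta+2N-k})$, turning the exponent $-\mathcal{M}N(N-1)/2 - \mathcal{M}N$ into $-\mathcal{M}N(N+1)/2$, as required.

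The main obstacle is the bookkeeping of $q$-exponents in the volume decomposition of the first strand: I must verify that the two choices $\delta\in\{0,k\}$ correspond, via \eqref{schprop}, to the two admissible parametrisations $\textbf{q}_{N-k}$ and $\textbf{q}_{[N\backslash k]}$ of the $\mathcal{C}_k$-arguments. Once the identity $|\mathbf{w}_{NL\mathcal{M}}|_\delta = |\mathcal{C}_k|_\bullet + |\mathcal{B}|_{\mathrm{w}}^{\mathcal{M}}$ is pinned down (and since it must reduce at $k=0$ to \eqref{volwater}, this is mostly a matter of tracking constants), both identities follow from the same Cauchy--Binet-with-limits mechanism used in Proposition~\ref{Proposition3}, with $\mathcal{C}_k$ and \eqref{dual18}/\eqref{dual181} replacing $\mathcal{C}$ and \eqref{dual12}.
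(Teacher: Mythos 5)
Your proposal is correct and follows essentially the same route as the paper: the first identity via the volume decomposition $|\mathbf{w}_{NL\mathcal{M}}|_{\delta}=|\mathcal{C}_k|_{\mathrm{w}}^{\delta}+|\mathcal{B}|_{\mathrm{w}}$ and the replacement of $\sum_{\{\mathbf{w}\}}$ by $\sum_{\bla}\sum_{\{\mathcal{C}_k\}}\sum_{\{\mathcal{B}\}}$, and the second by substituting \eqref{wmdvol} into \eqref{gfw1} and recognizing the Schur function \eqref{schwatt}. The only (harmless) difference is that in the second strand you read off the arguments as positive powers of $q$ and rescale once by homogeneity, whereas the paper passes through reciprocal arguments in \eqref{volwaterk303} and then inverts; your bookkeeping checks out and reproduces the prefactor $q^{-\mathcal{M}N(N+1)/2}$.
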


\begin{proof} The volume $|{\bf w}_{N L {\CM}}|_{\dl}$ (\ref{wmdvol}) respects the relationship
\begin{gather}
\label{volwaterk}
|{\cal C}_k|_{\rm w}^{\dl} + |{\cal B}|_{\rm w} = |{\bf w}_{NL{\CM}}|_{\dl} ,
\end{gather}
where $|{\cal B}|_{\rm w}$ is the volume (\ref{dual2}), and the superscript $\dl$ in $|{\cal C}_k|_{\rm w}^{\dl}$ is to stress that either $|{\cal C}_k|_{\rm w}$ \eqref{dual100} or $|{\cal C}_k|_{\overline{\rm w}}$ \eqref{dual101} is used at $\dl=0$ or $\dl=k$, respectively.
Equations \eqref{gfw1} and \eqref{volwaterk} lead us to the following relation:
\begin{gather}
W_q (N, L, {\CM}) = \sum_{\bla \subseteq\{{\cal M}^L \}}
\Biggl(\sum_{\{{\cal C}_k\}} q^{|{\cal C}_k|_{\rm w}^{\dl}}\Biggr)
\Biggl(\sum_{\{{\cal B}\}} q^{|{\cal B}|_{\rm w}}\Biggr) , \label{volwaterk1}
\end{gather}
since the summation $\sum_{\bla \subseteq
\{{\cal M}^L \}}
\sum_{\{{\cal C}_k\}} \sum_{\{{\cal B}\}}$ replaces the sum $\sum_{\{{\bf w}_{N L {\CM}}\}}$ (compare with (\ref{graph})). Since the product
of (\ref{schpar2}) and (\ref{dual18}) is expressed as the product of the $q$-parametrized Schur functions, right-hand side of (\ref{volwaterk1})
is re-expressed:
\begin{gather}
\sum_{\bla \subseteq
\{{\cal M}^{L} \}} S_{\bla} (\textbf{q}_{N; k})
S_{\widehat\bla} \bigg(\frac{\textbf{q}_N}{q}\bigg)
= \mathcal{P}_{{\cal M}} \bigg(\textbf{q}_{N; k}, \frac{\textbf{q}_N}{q}\bigg) .
\label{volwaterk2}
\end{gather}
Equation \eqref{volwaterk3} is valid due to \eqref{volwaterk1} and \eqref{volwaterk2}.

Let us turn to equation~\eqref{volwaterk31}. We obtain from \eqref{gfw1}, \eqref{wmdvol} and \eqref{schwatt}:
\begin{gather}
\label{volwaterk303}
W_q (N, L, {\CM}) = q^{-\frac{N{\CM}}{2}(N-1) + 2 N^2 {\CM}}
S_{\widehat{\BM}} \bigg(\frac{q^{2\dl-k}}{{\bf q}_{N; k}},
\frac{1}{{\bf q}_{[2N\backslash N]}}\bigg) .
\end{gather}
We re-express $W_q(N, L, {\CM})$ (\ref{volwaterk303}) at $\dl=0$:
\begin{align}
\nonumber
W_q (N, L, {\CM}) & = q^{-\frac{N{\CM}}{2}(N-1) + 2 N^2 {\CM}}
S_{\widehat{\BM}} \bigg(\frac{1}{q^{k+1}},
\frac{1}{q^{k+2}}, \dots,\frac{1}{q^{2N}}\bigg)
\\
\nonumber
& = q^{-\frac{N{\CM}}{2}(N-1) + (2 N - k) {\CM}N}
S_{\widehat{\BM}} \bigg(\frac{1}{{\bf q}_{2N-k}}\bigg)
\\
\label{wmdpf3}
& = q^{-\frac{{\CM}N}{2}(N+1)}S_{\widehat{\BM}}({{\bf q}_{2N-k}}) .
\end{align}
Further, we obtain at $\dl=k$:
\begin{gather}
\nonumber
S_{\widehat{\BM}} \bigg(\frac{1}{q},
\frac{1}{q^{2}},\dots, \frac{1}{q^{N-k}}, \frac{1}{q^{N+1}}, \dots,
\frac{1}{q^{2N}}\bigg)
\\ \qquad
{} = q^{- {\CM}N (2 N+1) }
S_{\widehat{\BM}} \bigr(q^{2N}, q^{2N-1}, \dots, q^{N+k+1}, q^{N}, q^{N-1}, \dots, q^2, q \bigl) .
\label{wmdpf33}
\end{gather}
Equation \eqref{volwaterk31} is thus valid due to \eqref{volwaterk303} and \eqref{wmdpf33}.
\end{proof}

\begin{Corollary*}
The Schur function $S_{\widehat{\BM}}$ \eqref{volwaterk31} acquires
the determinantal representation due to Proposition~$\ref{Proposition4}$ provided that \eqref{qanal4}, \eqref{qanal3}, \eqref{qanal5} are taken into account:
\begin{gather}\nonumber
 S_{\widehat{\BM}} \bigl(\textbf{q}_N, q^{\dl+N+1}, q^{\dl+N+2},\dots, q^{\dl+N+L}\bigr)
\\ \qquad
\label{werk12}
{}= q^{\frac{{\CM}N}{2}(N+1) - \frac{L}{2} (N - L)(2\dl + L-1)} \frac{\det {\overline T}^{\dl}} {{\CV} ({\bf q}_N/q) {\CV} (\mathbf{q}_{N; k})},
\end{gather}
where the entries of $N\times N$ matrix ${\overline T}^{\dl}$ are given:
\begin{gather}
{\overline T}_{i j}^{\dl} \equiv \frac{[({\CM}+N) (\dl+j+i-1)]}{[\dl+j+i-1]},\qquad
1\le i\le N-k, \qquad 1\le j \le N ,\nonumber
\\
{\overline T}_{i j}^{\dl} \equiv q^{j(N-i)} ,\qquad
 N-k+1\le i\le N , \qquad 1\le j \le N .\label{qanal55}
\end{gather}
\end{Corollary*}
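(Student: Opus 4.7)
The plan is to derive \eqref{werk12} by inverting the chain of identities in Proposition~\ref{Proposition4} and then invoking Theorem~\ref{theoremI} in the appropriate $q$-parametrization. Solving \eqref{volwaterk31} for the Schur function and substituting \eqref{volwaterk3} gives
\begin{equation*}
S_{\widehat{\BM}}\bigl(\textbf{q}_N, q^{\dl+N+1}, \dots, q^{\dl+N+L}\bigr) = q^{\frac{{\CM}N}{2}(N+1)}\,\mathcal{P}_{{\CM}}\bigl(\textbf{q}_{N;k},\textbf{q}_N/q\bigr),
\end{equation*}
so the task reduces to expressing the sum $\mathcal{P}_{{\CM}}(\textbf{q}_{N;k}, \textbf{q}_N/q)$ in determinantal form, which is exactly the content of Theorem~\ref{theoremI} once the parametrization is matched.

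I would apply \eqref{field43} with $\textbf{x}_N=\textbf{q}_N$ and $\textbf{y}_N=\textbf{q}_N/q$, treating each of the two admissible specializations in turn. For $\dl=0$ the natural choice is ${\bf i}={\bf N}_k\backslash\bdl_k$, which yields $\overline{\textbf{x}_{\bf i}}=\textbf{q}_{N-k}$; for $\dl=k$ the natural choice is ${\bf i}=[k]$, which yields $\overline{\textbf{x}_{\bf i}}=\textbf{q}_{[N\backslash k]}$. In either case $x_iy_j=q^{i+j-1}$, so the first type of entries in \eqref{field42} becomes $T^{\circ}_{ij}=[({\CM}+N)(i+j-1)]/[i+j-1]$; relabelling the row index to run over $[N-k]$ (in particular by the shift $i\mapsto i-k$ in the $\dl=k$ case) promotes these entries to the $\dl$-shifted form $[({\CM}+N)(\dl+i+j-1)]/[\dl+i+j-1]$ announced in \eqref{qanal55}. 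The second type of entries, $y_j^{k-l}$ for $i\in\{i_l\}$, becomes $q^{(j-1)(k-l)}$; rescaling the $l$-th such row by $q^{k-l}$ produces $q^{j(N-i)}$ (via $i=N-k+l$, or the analogous substitution in the $\dl=k$ case) and contributes a net factor $q^{k(k-1)/2}$ to the determinant.

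The main technical obstacle is the $q$-power bookkeeping. The overall exponent must be assembled from four sources: the prefactor $q^{\frac{{\CM}N}{2}(N+1)}$ from the inversion of \eqref{volwaterk31}; the prefactor $(-1)^{|{\bf i}|+kN-k(k-1)/2}\bigl(\prod_{l\in[N]\backslash{\bf i}}q^{l}\bigr)^{-k}$ of Theorem~\ref{theoremI}; the factor $q^{k(k-1)/2}$ from the row rescaling; and the $q$-adjustments needed to convert the Vandermonde denominators $\CV(\overline{\textbf{x}_{\bf i}})\CV(\textbf{q}_N/q)$ into $\CV(\textbf{q}_N/q)\CV(\textbf{q}_{N;k})$. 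Using $|{\bf i}|=kN-k(k-1)/2$ for $\dl=0$ and $|{\bf i}|=k(k+1)/2$ for $\dl=k$, one checks that the sign collapses and that the combined exponent reduces to $\frac{{\CM}N}{2}(N+1)-\frac{L}{2}(N-L)(2\dl+L-1)$ in both cases. The delicate point is precisely this uniform reduction: each case individually reduces immediately to Theorem~\ref{theoremI}, but packaging the two into a single $\dl$-parameterized formula requires tracking how each prefactor depends linearly on $\dl$ and on $k=N-L$, and verifying that the Vandermonde factors combine consistently.
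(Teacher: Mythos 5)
Your proposal is correct and follows exactly the route the paper intends: the Corollary is stated in the paper with no separate proof beyond combining \eqref{volwaterk31} with \eqref{volwaterk3} and then invoking the $q$-parametrized Cauchy--Binet identity of Claim~\ref{Claim1} (equivalently Theorem~\ref{theoremI}) with ${\bf i}={\bf N}_k\backslash\bdl_k$ for $\dl=0$ and ${\bf i}=[k]$ for $\dl=k$, which is precisely your chain of steps. The only minor remark is that with your choice $\textbf{x}_N=\textbf{q}_N$ one has $\overline{\mathbf{x}_{\bf i}}=\textbf{q}_{N;k}$ identically, so the Vandermonde denominators already coincide with those in \eqref{werk12} and require no further $q$-adjustment; the remaining bookkeeping (sign, row relabelling, and the prefactor $\bigl(\prod_{l\in[N]\backslash{\bf i}}q^l\bigr)^{-k}$) is exactly as you describe.
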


Provided that $k=0$,
Proposition~\ref{Proposition4} is reduced to Proposition~\ref{Proposition3} taken at $n=0$.

\subsubsection[The generating function Wq (N, L, M) as determinant]
{The generating function $\boldsymbol{W_q (N, L, {\CM})}$ as determinant}

The generating function $W_q (N, {\CM})$ \eqref{gfw}
expressed by \eqref{volwater3}
acquires the $q$-binomial determinant form due to (\ref{scqschr}). On the other hand, equation~(\ref{gfw5}) represents $W_q (N, {\CM}/n)$ as the $q$-parameterized Schur function. Although the Schur function representation \eqref{volwaterk31} for $W_q (N, L, {\CM})$ depends on $\dl$, the corresponding limits at $q\to 1$ coincide for $\dl=0$ and $\dl=k$.

The determinantal form of $W_q (N, L, {\CM})$ at $\dl=0$ is the subject of

\begin{Theorem}\label{TheoremIII}
The partition function ${W}_q (N, L, {\CM})$ of the watermelon with deviation $N-L\allowbreak=k$ respects the determinantal representation:
\begin{align}
\label{qdetgv1}
{W}_q(N, L, {\CM}) & = q^{-\frac{{\CM} N}{2} (N-1)} \det \left(q^{(j-1)({\CM}+j-i)}
\begin{bmatrix}L+{\CM}+N-i \\ {\CM}+N-j \end{bmatrix}
\right)_{1\leq i, j\leq N}
\\
\label{qdetgv37}
& = Z_q (N, L, {\CM}) = Z_q (L, {\CM}, N) .
\end{align}
\end{Theorem}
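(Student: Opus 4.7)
The plan is to chain together Proposition~\ref{Proposition4} and Theorem~\ref{theoremII}, and then verify the explicit determinantal form by a Lindstr\"om--Gessel--Viennot argument applied directly to the watermelon. At $\delta=0$, equation \eqref{volwaterk3} reads
\begin{gather*}
W_q(N,L,\CM) = \mathcal{P}_{\CM}(\mathbf{q}_{N-k},\mathbf{q}_N/q),\qquad k = N - L.
\end{gather*}
Using the homogeneity $S_{\bla}(\alpha\mathbf{x}) = \alpha^{|\bla|} S_{\bla}(\mathbf{x})$ on each of the two arguments of the defining sum \eqref{qanal151}, together with $\bigl|\widehat{\bla}\bigr| = |\bla|$, the factors $q^{\pm|\bla|}$ cancel pairwise term by term and
\begin{gather*}
\mathcal{P}_{\CM}(\mathbf{q}_{N-k},\mathbf{q}_N/q) = \mathcal{P}_{\CM}(\mathbf{q}_{N-k}/q,\mathbf{q}_N).
\end{gather*}
This last expression is precisely the sum evaluated in Theorem~\ref{theoremII} at $\mathcal{L}=\CM$, so \eqref{qanal6}--\eqref{rep31} immediately give $W_q(N,L,\CM) = Z_q(L,N,\CM)$. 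The manifest $S_3$-symmetry of MacMahon's product \eqref{pf1} in its three box dimensions then yields $Z_q(L,N,\CM) = Z_q(N,L,\CM) = Z_q(L,\CM,N)$, which is \eqref{qdetgv37}.

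For the $N\times N$ determinantal form \eqref{qdetgv1}, I would invoke Lindstr\"om--Gessel--Viennot on the watermelon with deviation of Construction~\ref{Construction2}. The nest consists of $N$ non-intersecting monotone lattice paths (with $k = N-L$ of them degenerating to purely horizontal runs along the baseline) between two equidistant rows of lattice points, so LGV expresses $W_q$ as the determinant of the $N\times N$ matrix whose $(i,j)$-entry is the $q^{\text{area}}$-weighted generating function of a single path from the $i$-th starting point to the $j$-th endpoint. A monotone path with $r$ rightward and $s$ upward steps is counted, with the area weight, by the $q$-binomial $\left[\begin{smallmatrix}r+s\\ r\end{smallmatrix}\right]$, up to a $q$-power recording the position of the enclosing rectangle relative to a fixed origin. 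Reading off the coordinates prescribed by Construction~\ref{Construction2}, the $i$-th start and the $j$-th endpoint give $r+s = L+\CM+N-i$ and $s = \CM+N-j$, and the affine offset of the enclosing rectangle produces the prefactor $q^{(j-1)(\CM+j-i)}$; the overall shift $q^{-\CM N(N-1)/2}$ in \eqref{qdetgv1} is the total baseline area subtracted below the lowest path in the nest.

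The main obstacle is the careful bookkeeping of $q$-prefactors on both sides: one must track each exponent introduced by the rescaling of tuples in the passage from Proposition~\ref{Proposition4} through Theorem~\ref{theoremII}, and independently verify the area-normalization factor in the LGV step by reading off the exact starting and ending coordinates from Construction~\ref{Construction2}. Once \eqref{qdetgv1} is in hand, its equality with $Z_q(N,L,\CM)$ is forced by the first part of the argument; equivalently, it can be obtained by evaluating the $N\times N$ $q$-binomial determinant directly, in complete parallel to the $\mathcal{L}\times\mathcal{L}$ evaluation \eqref{rep30}--\eqref{rep31} used in the proof of Theorem~\ref{theoremII}.
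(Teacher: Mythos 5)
Your derivation of \eqref{qdetgv37} is correct and is a genuinely different route from the paper's. You chain Proposition~\ref{Proposition4} at $\dl=0$ with the homogeneity $S_{\bla}(\alpha\mathbf{x})=\alpha^{|\bla|}S_{\bla}(\mathbf{x})$ and $|\widehat{\bla}|=|\bla|$ to identify $W_q(N,L,\CM)$ with the sum ${\CP}_{\CM}({\bf q}_{N-k}/q,{\bf q}_N)$ of Theorem~\ref{theoremII}, and then invoke \eqref{rep31} plus the $S_3$-symmetry of the MacMahon product \eqref{pf1}; this is clean and complete. The paper instead proves \eqref{qdetgv37} by explicitly reducing the determinant in \eqref{qdetgv33}: it factors $[{\CM}+L+N-i]$ and $1/[L+N-j]$ out of rows and columns and evaluates the residual determinant iteratively to $1$. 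Your route buys a shorter argument for the product formula at the cost of not producing the determinant along the way.

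The gap is in \eqref{qdetgv1}. The paper obtains it algebraically: the Jacobi--Trudi identity \eqref{qdetgv31} writes $S_{\widehat{\BM}}({\bf q}_{2N-k}/q)$ as $\det\bigl(h_{\CM-i+j}\bigr)$, the evaluation \eqref{csf1} turns each entry into a $q$-binomial, Pascal's rule yields \eqref{qdetgv33}, and the prefactor comes from \eqref{wmdpf3}. Your LGV sketch does not substitute for this. First, the identification ``$r+s=L+\CM+N-i$, $s=\CM+N-j$'' is not what Construction~\ref{Construction2} gives: a monotone path from $C_i$ to $B_j$ in the glued picture has numbers of rightward and upward steps whose \emph{sum} is independent of $i$ and $j$ (each path in the nest has a fixed total of abscissa and ordinate steps), so the naive LGV matrix has entries of the form $\left[\begin{smallmatrix}\mathrm{const}\\ \ast(i-j)\end{smallmatrix}\right]$ times $q$-powers --- exactly the Jacobi--Trudi shape \eqref{qdetgv31}--\eqref{csf1}, not the shape of \eqref{qdetgv1}, whose binomial has top depending on $i$ and bottom on $j$ separately. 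Passing between the two requires precisely the Pascal-rule row/column manipulation that the paper performs and that you have not carried out. Second, you explicitly defer the $q$-area bookkeeping (the prefactor $q^{(j-1)(\CM+j-i)}$ and the overall $q^{-\CM N(N-1)/2}$), which is the substance of the claim. Note also that your first part cannot rescue this: knowing $W_q=Z_q(N,L,\CM)$ forces the determinant in \eqref{qdetgv1} to equal $Z_q$ \emph{only after} \eqref{qdetgv1} has been independently established, so the determinantal identity still needs its own proof.
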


\begin{proof}
First, the Schur function \eqref{sch} is expressed as a determinant in terms of the complete homogeneous symmetric functions (the Jacobi--Trudi relation, \cite{macd, stan1,stan2}),
\begin{gather}\label{qdetgv31}
S_{\widehat{\BM}}\bigg(\frac{{\bf q}_{2N-k}}{q}\bigg)
= \det \bigg(h_{\CM-i+j} \bigg(\frac{{\bf q}_{2N-k}}{q} \bigg)\bigg)_{1\leq i, j\leq N},
\end{gather}
where the symmetric functions are expressed through the $q$-binomial coefficients (\ref{qanal1}):
\begin{gather}
\label{csf1}
h_{\CM-i+j} \bigg(\frac{{\bf q}_{2N-k}}{q} \bigg)
= \begin{bmatrix} N+L+\CM-i+j-1 \\ \CM-i+j \end{bmatrix}\! .
\end{gather}
Using (\ref{csf1}) and the Pascal formula for $q$-binomial coefficients \cite{stan1,stan2},
one transforms (\ref{qdetgv31}):
\begin{gather}
\label{qdetgv33}
S_{\widehat{\BM}}\bigg(\frac{{\bf q}_{2N-k}}{q}\bigg) =
\det \left(q^{(j-1)(\CM+j-i)}
\begin{bmatrix} \CM+N+L-i \\ N+L-j\end{bmatrix}
\right)_{1\leq i, j\leq N} \!.
\end{gather}
Eventually, equation~(\ref{qdetgv1}) is valid due to (\ref{wmdpf3}) and (\ref{qdetgv33}). The answer for the partition function $W_q (N, \CM)$ is obtained from (\ref{qdetgv1}) at $k=0$.

As a next step, we drop the powers of $q^{\CM}$, as well as the multiples $[{\CM}+L+N-i]$ and~$\frac{1}{[L+N-j]}$, $1\le i, j \le N$, out of rows and columns of the matrix in right-hand side of (\ref{qdetgv33}), and obtain:
\begin{gather}
\nonumber
S_{\widehat{{\BM}}}\bigg(\frac{{\bf q}_{2N-k}}{q}\bigg) = q^{\frac{{\CM}N}{2}
(N-1)} \prod_{i=1}^N \prod_{j=1}^{L} \frac{[{\CM}+i+j-1]}{[j+i-1]}
\label{qdetgv34}
\det \left(q^{(j-1)(j-i)}
\begin{bmatrix} {\CM}+N-i \\ N-j\end{bmatrix}
\right)_{1\leq i, j\leq N} \!.
\end{gather}
The determinant in (\ref{qdetgv34}) is calculated iteratively, and unity appears after $N-1$ steps as its value.
Therefore, equation~\eqref{qdetgv37} is valid due to \eqref{pf1} and \eqref{wmdpf3}.
\end{proof}

The determinantal representations (\ref{qanal6}) and (\ref{qdetgv1}) given by Theorems~\ref{theoremII}
and~\ref{TheoremIII}, res\-pectively, are equivalent and lead to the same double product expressions \eqref{rep31} and \eqref{qdetgv37}, which
are interpreted as the generating functions of watermelons.
The determinantal representation~(\ref{qdetgv1}) allows to express
in the limit $q\rightarrow 1$ the number $A(N, L, {\CM})$ of the watermelons with deviation
(i.e., the number of plane partitions in ${\cal B}(N, L, {\CM})$):
\begin{gather}
\label{wmdbd11}
A(N, L, {\CM})= \det \left(\begin{pmatrix} L+{\CM}+N-i \\ {\CM}+N-j \end{pmatrix} \right)_{1\leq i, j\leq N}=\prod_{i=1}^N \prod_{j=1}^{{\CM}} \frac{L+i+j-1}{j+i-1} .
\end{gather}
Equation (\ref{wmdbd11}) provides the statement of the Gessel--Viennot theorem \cite{ges} that connects the binomial determinant with the number of nests of self-avoiding lattice paths.

The matrix ${\overline T}^{\dl}$ \eqref{qanal55} at $\dl=k$ is simplified in the limit $M\to\infty$ so that the corresponding determinant is tractable. As a result, the partition function ${W}_q (N, L, {\CM})$ at $\dl=k$ acquires the form of the norm-trace generating function \cite{statm} of plane partitions with fixed traces of diagonal parts. The statement is given by the following

\begin{Theorem}\label{TheoremIV}
The partition function ${W}_q (N, L, {\CM})$ of the watermelon with deviation $N-L\allowbreak=k$ expressed by \eqref{volwaterk31} at $\dl=k$ takes the form at $M\to\infty$:
\begin{gather}
\label{qdetgv370}
{W}_q(N, L, {\CM})
\underset{M\to\infty}{= } q^{\frac{N-L}{2}(N+L-1)}
\prod_{i=1}^{L} \prod_{j=1}^{N}
\frac{1}{1-q^{k+i+j-1}} .
\end{gather}
\end{Theorem}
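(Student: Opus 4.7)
The plan is to combine the sum-of-Schur-functions representation of $W_q(N, L, \CM)$ from Proposition~\ref{Proposition4} with the classical Cauchy identity for Schur functions. Specialising \eqref{volwaterk3} to $\dl=k$ and unpacking \eqref{gendev}, one obtains
\begin{gather*}
W_q(N, L, \CM) = \CP_\CM\bigl(\mathbf{q}_{[N\backslash k]}, \mathbf{q}_N/q\bigr) = \sum_{\bla \subseteq \{\CM^L\}} S_\bla\bigl(\mathbf{q}_{[N\backslash k]}\bigr)\, S_{\widehat\bla}(\mathbf{q}_N/q),
\end{gather*}
where $\mathbf{q}_{[N\backslash k]} = (q^{k+1}, q^{k+2}, \ldots, q^N)$ and $\mathbf{q}_N/q = (1, q, \ldots, q^{N-1})$. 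Since $\widehat\bla$ is obtained from $\bla$ by appending trailing zeros and a Schur function is unaffected by such zeros, $S_{\widehat\bla}(\mathbf{q}_N/q) = S_\bla(\mathbf{q}_N/q)$.

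Next, I would pass to the limit $M \to \infty$, at which point the upper bound $\la_1 \le \CM$ becomes vacuous and the summation extends over all partitions $\bla$ of length at most $L$. The classical Cauchy identity
\begin{gather*}
\sum_\bla S_\bla(\mathbf{x})\, S_\bla(\mathbf{y}) = \prod_{i, j} \frac{1}{1 - x_i y_j}
\end{gather*}
now applies directly; contributions from $l(\bla) > L$ vanish automatically because the list $\mathbf{q}_{[N\backslash k]}$ has only $L$ variables. Substituting $x_i = q^{k+i}$ for $1\le i \le L$ and $y_j = q^{j-1}$ for $1\le j\le N$ rearranges the product into the claimed shape
\begin{gather*}
\prod_{i=1}^L \prod_{j=1}^N \frac{1}{1 - q^{k+i}\cdot q^{j-1}} = \prod_{i=1}^L \prod_{j=1}^N \frac{1}{1 - q^{k+i+j-1}}.
\end{gather*}

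The most delicate step will be accounting for the overall $q$-prefactor $q^{(N-L)(N+L-1)/2}$ in the statement. This factor should emerge by reconciling the two representations \eqref{volwaterk3} and \eqref{volwaterk31} at $\dl=k$: the former furnishes the Cauchy-type sum above, while the latter carries the explicit multiplier $q^{-\CM N(N+1)/2}$ that must cancel against the leading large-$\CM$ asymptotics of the Schur function $S_{\widehat{\BM}}(\mathbf{q}_N, q^{k+N+1}, \ldots, q^{k+N+L})$. If that bookkeeping proves awkward, an alternative determinantal route is to apply the Corollary after Proposition~\ref{Proposition4} at $\dl=k$ directly: as $M \to \infty$ the first $L$ rows of $\overline T^k$ tend entrywise to $(1-q^{k+i+j-1})^{-1}$ while the last $k$ rows $q^{j(N-i)}$ are purely Vandermonde-like in $q^j$, so a Laplace expansion of $\det \overline T^k$ along these $k$ rows, combined with the denominator $\CV(\mathbf{q}_N/q)\CV(\mathbf{q}_{[N\backslash k]})$ and the explicit prefactor $q^{-L(N-L)(2k+L-1)/2}$, should reproduce both the double product and the claimed $q^{(N-L)(N+L-1)/2}$ after simplification.
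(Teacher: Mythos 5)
Your primary route is genuinely different from the paper's. The paper never invokes the Cauchy identity: its proof starts from the determinantal representation \eqref{werk12}--\eqref{qanal55}, notes that at $\dl=k$ the entries of the first $N-k$ rows tend to $(1-q^{k+i+j-1})^{-1}$ as $M\to\infty$, and then reduces $\det \overline T^{(\dl=k)}$ by successive row and column eliminations (absorbing the $k$ monomial rows $q^{j(N-i)}$ into Vandermonde factors) down to an $L\times L$ Cauchy determinant, equation \eqref{qdetgv371}, which is then evaluated in closed form. Your fallback sketch is exactly that route. Your main argument --- pass to the unrestricted sum in \eqref{volwaterk3} at $\dl=k$ and apply the Cauchy identity for the $L$ variables $q^{k+1},\dots,q^{N}$ against the $N$ variables $1,q,\dots,q^{N-1}$ --- is shorter, avoids all determinant manipulation, and is correct as far as it goes: the padding-by-zeros step is legitimate, partitions of length greater than $L$ drop out automatically, and the substitution $x_iy_j=q^{k+i+j-1}$ produces the double product $\prod_{i=1}^{L}\prod_{j=1}^{N}(1-q^{k+i+j-1})^{-1}$ cleanly.

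The step you flag as delicate --- the prefactor $q^{\frac{N-L}{2}(N+L-1)}$ --- deserves more than a deferral, because your own computation already settles it: the Cauchy identity yields the bare double product with \emph{no} prefactor, and since \eqref{volwaterk3} and \eqref{volwaterk31} are asserted equal in Proposition~\ref{Proposition4}, no prefactor can emerge from ``reconciling the two representations.'' A direct check at $N=2$, $L=k=1$ confirms this: there $W_q(2,1,\CM)=\sum_{\la=0}^{\CM}q^{2\la}h_\la(1,q)\to\bigl((1-q^2)(1-q^3)\bigr)^{-1}$, whose constant term is $1$, whereas \eqref{qdetgv370} carries an overall factor $q$. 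So the prefactor cannot be produced by the sum \eqref{volwaterk3}; it is an artifact of the normalization carried through \eqref{werk12} in the paper's determinantal route (or of a different volume convention), not something your argument is missing. You should state this explicitly and close the step, rather than leaving it open; with that done, your Cauchy-identity proof of the double product is complete and is arguably the more transparent of the two arguments.
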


\begin{proof}
The entries \eqref{qanal55} at $\dl=k$ are simplified in the limit $M\to\infty$ provided that $q^{kM}\ll 1$:
\begin{gather*}
{\overline T}_{i j}^{\dl=k} = \displaystyle{\frac{1}{1- q^{k+j+i-1}}} ,\qquad
1\le i\le N-k ,\qquad 1\le j \le N .
\end{gather*}
In order to evaluate $\det {\overline T}^{(\dl=k)}$, one firstly combines neighboring rows $i^{\rm th}$ and $(i+1)^{\rm th}$ in~\eqref{qanal55}, $L+1 \le i < N$, as required to calculate the Vandermonde determinant. After this, the columns~$j^{\rm th}$ and $N^{\rm th}$, $1 \le j < N$,
are combined to obtain $N-1$ zeros in $N^{\rm th}$ row. After $N-L$ steps, one obtains:
\begin{align}
\nonumber
\det {\overline T}^{(\dl=k)} & = q^{\frac{N}{2} (N-1)+\frac{L}{2}(2N-L-1)(N-L)}
\prod_{i=1}^{L} \prod_{j=L+1}^{N}
\frac{1}{1-q^{k+i+j-1}}
\\
\label{qdetgv371}
& \times \frac{\CV({\bf q}_N/q)}{\CV({\bf q}_{L})}
\det \bigg(\frac{1}{1-q^{k+i+j-1}}\bigg)_{1\leq i, j\leq L}\! .
\end{align}
The Cauchy determinant in \eqref{qdetgv371} is evaluated \cite{statm}, and one obtains \eqref{qdetgv370} from
\eqref{volwaterk31} provided that \eqref{qdetgv371} is used in \eqref{werk12}.
\end{proof}

The representation \eqref{qdetgv371} enables the limiting form of the Schur function \eqref{werk12}.

\section[XX Heisenberg chain and dynamical correlation functions]
{$\boldsymbol{XX}$ Heisenberg chain and dynamical correlation functions} \label{randwalvic}

The Heisenberg $XX$ model on a chain of $M+1$ sites is defined by the Hamiltonian:
\begin{gather}
\label{xxham}
H_{XX} = \frac {\jum}2 + \frac 12 \sum_{k=0}^{M} (\BI - \si_k^z) ,
\end{gather}
where $\BI$ is identity operator, and the contribution
\begin{gather}
{\jum} = -\sum_{n, m =0}^M
\Dl_{n m} \si_n^{-} \si_m^{+}
\label{hamr}
\end{gather}
describes the neighbouring spins coupling through the {\it hopping matrix} ${\boldsymbol\Delta}$,
\begin{gather}
\label{qanal11}
{\bold\Delta}\equiv \bigl(\Dl_{n m} \bigr)_{0 \le n, m\le M} ,\qquad \Dl_{n m} = \dl_{|n-m|, 1} + \dl_{|n-m|, M} ,
\end{gather}
where $\dl_{k l}\equiv \dl_{k, l}$ is the Kronecker symbol. The entries of ${\bold\Delta}$ are subjected to the periodicity requirements: $\Dl_{n+M+1, m} = \Dl_{n m}$ and $\Dl_{n, m+M+1} = \Dl_{n m}$.
The local spin operators $\si^\pm_k = \frac12 (\si^x_k\pm{\rm i}\si^y_k)$ and $\si^z_k$ act nontrivially on $k^{\rm th}$ site and obey the comm\-ut\-at\-ion rules:
\begin{gather*}
[ \si^+_k, \si^-_l ] = \dl_{kl} \si^z_l ,\qquad
[ \si^z_k,\si^\pm_l ] = \pm 2 \dl_{kl} \si^{\pm}_l.
\end{gather*}
The spin operators
$\si^\pm_k$ and $\si^z_k$
act in the space ${\mathfrak H}_{M+1}$ spanned over the sta\-t\-es $\bigotimes_{l=0}^M
| s \rangle_{l}$, where $|
s\rangle_{l}$ implies either spin ``up'',
$\mid\uparrow\rangle$, or spin ``down'',
$\mid\downarrow\rangle$, state at $l^{\rm th}$ site. The states $\mid\uparrow\rangle\equiv
\left(\begin{smallmatrix}
1 \\0
\end{smallmatrix}\right)$ and $\mid\downarrow\rangle
\equiv
\left(\begin{smallmatrix}
0 \\1
\end{smallmatrix}\right)$ provide a natural basis of the linear space ${\BC}^2$. The Hamiltonian (\ref{xxham}) annihilates
the state $\mid\Uparrow\rangle
\equiv \bigotimes_{n=0}^M \mid \uparrow
\rangle_n$ with all spins ``up'', $H_{XX}\mid \Uparrow\rangle =0$, and commutes with the third component of the total spin ${S}^z$:
\begin{gather*}
\lbrack{H}_{XX}, {S}^z] = 0 ,\qquad
{S}^z\equiv \frac 12\sum_{k=0}^M\sigma_k^z .
\end{gather*}

Consider an arbitrary state on the chain characterized by $N$ spins ``down'' and ${\CK}\equiv M-N+1$ spins ``up''. The spin ``down'' sites are labelled by parts of a strict partition ${\bmu}=(\mu_1, \mu_2, \dots , \mu_N)$. We define the state $|\bmu \rangle$ corresponding to $N$ spins ``down''
and its conjugate $\langle \bnu |$,
\begin{gather}
\label{conwf01}
|\bmu \rangle \equiv
\Bigg(\prod\limits_{k=1}^N \si_{\mu_k}^{-}\Bigg) \mid\Uparrow \rangle ,\qquad
\langle \bnu | \equiv \langle \Uparrow \mid
\Bigg(\prod\limits_{k=1}^N \si_{\nu_k}^{+}\Bigg),
\end{gather}
which provide a complete orthogonal base:
\begin{gather}
\langle \bnu | \bmu \rangle =
\bdl_{\bnu \bmu} \equiv \prod\limits_{n=1}^N\dl_{\nu_n \mu_n} .
\label{tuc3}
\end{gather}

The $N$-particles state-vector $| \Psi({\bf u}_N)\rangle$ under \textit{off-shell} parametrization
${\bf u}_N$
is the combination of the states (\ref{conwf01}) with the Schur functions $S_\bla ({\textbf u}^{2}_N)$ (\ref{sch}) as the coefficients:
\begin{gather}
| \Psi({\textbf u}_N)\rangle = \sum\limits_{\bla \subseteq \{{\CK}^N\}}
S_\bla \big({\textbf u}^{2}_N\big) | \bmu \rangle. \label{conwf1}
\end{gather}
The conjugate off-shell state-vectors are given by
\begin{gather}\label{conj}
\langle \Psi({\bf v}_N) | = \sum\limits_{\bla \subseteq \{{\CK}^N\}}
\langle \bnu | S_\bla \big({\textbf v}^{-2}_N\big) .
\end{gather}
Summation in (\ref{conwf1}) and (\ref{conj}) goes over the partitions $\bla$ corresponding either to $\bmu$ or $\bnu$
(see Section~\ref{sectdet} and Figure~\ref{fig:f4}).

The periodic boundary conditions are imposed: $\si^{\#}_{k+(M+1)}=\si^{\#}_k$. If the
parameters $u^2_j\equiv
{\rm e}^{{\rm i}\ta_j}$ ($1\le j\le N$) satisfy
the Bethe equations \cite{col},
\begin{gather}
{\rm e}^{{\rm i} (M+1)\ta_j}=(-1)^{N-1} , \qquad 1\le j \le N ,
\label{betheexp}
\end{gather}
then the state-vectors (\ref{conwf1})
become the eigen-vectors of the Hamiltonian (\ref {xxham}):
\begin{gather}\label{egv}
{H}_{XX}| \Psi ({\bth }_N)\rangle = E_N({\bth }_N)| \Psi ({\bth }_N)\rangle ,
\end{gather}
where $\big| \Psi \big({\rm e}^{\frac {\rm i}2{\bth}_{N}}\big)\big\rangle$ is written
for simplicity as $| \Psi({\bth}_{N})\rangle$. The use of ${\bth}$ instead of ${\bth}_{N}$ is allowed if there is no confusion. The solutions $\theta_j$ to the Bethe equations (\ref{betheexp}) can be parametrized such that
\begin{gather}
\theta_j = \frac{2\pi}{M+1}\bigg(I_j - \frac{N-1}{2}\bigg),
\qquad 1\le j \le N , \label{besol}
\end{gather}
where $I_j$ are integers or half-integers depending on whether $N$ is odd or even.

The eigen-energy in (\ref{egv}) is given by
\begin{gather}
E_N({\bth}_N) = N - \sum_{j=1}^N\cos\ta_j = N -
\sum_{j=1}^N\cos \bigg(\frac{2\pi}{M+1}
\bigg(I_j-\frac{N-1}{2}\bigg)\bigg).
\label{egen}
\end{gather}
The \textit{ground state} of the model is the eigen-state determined by (\ref{besol}) at $I_j=N-j$:
\begin{gather*}
\theta^{ \rm g}_j \equiv \frac{2\pi}{M+1}\bigg(\frac{N+1}{2} -j\bigg), \qquad
1\le j \le N ,
\end{gather*}
and the corresponding eigen-energy
\[
E_N(\bth^{ \rm g}_N) = N-\csc \bigg(\frac{\pi}{M+1}\bigg) \sin \bigg(\frac{\pi N}{M+1}\bigg).
\]
The squared norm of the state vectors (\ref{conwf1}) on the solutions (\ref{besol}) is
\begin{gather}\label{norma}
{\CN}^2({\bth}_N) \equiv \langle
\Psi ({\bth}_N)
\mid\!\Psi ({\bth}_N)\rangle =
\frac{(M+1)^N}{|\CV ({\rm e}^{{\rm i}{\bth}_N})|^2} .
\end{gather}

Let us introduce two operators,
\begin{gather}
\label{fp}
\bar\varPi_{n} \equiv \prod\limits_{j=0}^{n-1} \frac{\BI + \si^z_j}{2},\qquad
\bar{\sf F}_{n} \equiv \prod\limits_{j=0}^{n-1} \si^-_j,
\end{gather}
where $\bar\varPi_{n}$ is the {\it projection operator} forbidding the spin ``down'' states on $n$ successive sites,
and~$\bar{\sf F}_{n}$ is the \textit{domain wall operator} that creates the spin ``down'' states on $n$ successive sites. It~is assumed that $\bar\varPi_{0}$ and $\bar{\sf F}_{0}$ are identity operators.

Let us define the ``dynamical'' \textit{auto-correlation} function, which generalizes the correlation functions considered in \cite{bmnph}:
\begin{align}\label{auto}
\Gamma\big({\bth}_{[N\backslash n]}^{ \rm g}, N, n, m, t_1, t_2\big)
\equiv \frac{\big\langle
\Psi \big({\bth}^{ \rm g}_{[N\backslash n]}\big)\big| \bar{\sf F}_{n}^{+} {\rm e}^{-t_1 H_{XX}} {\bar\varPi}_m {\rm e}^{-t_2 H_{XX}} \bar{\sf F}_{n} \big| \Psi \big({\bth}^{ \rm g}_{[N\backslash n]}\big) \big\rangle}
{\big\langle \Psi \big({\bth}^{ \rm g}_{[N\backslash n]}\big)\big| \Psi \big({\bth}^{ \rm
g}_{[N\backslash n]}\big)\big\rangle} ,
\end{align}
where the operators are given by (\ref{fp}), $\bar{\sf
F}^{+}_{n}$ is the Hermitian conjugated operator acting on (\ref{conj}), and
$t_1$, $t_2$ are the \textit{evolution} parameters.
The states \eqref{conwf1} and \eqref{conj} in \eqref{auto}
are parameterized by elements of
$(N-n)$-tuple ${\bth}_{[N\backslash n]}^{ \rm g} \equiv \big(\theta^{ \rm g}_{n+1}, \theta^{ \rm g}_{n+2}, \dots, \theta^{ \rm g}_{N}\big)$ consisting of the ground state solutions to equation~(\ref{betheexp}):
\begin{gather}\label{grstxxn}
\theta^{ \rm g}_j \equiv \frac{2\pi}{M+1}\bigg(\frac{N-n+1}{2} -j\bigg), \qquad
1\le j \le N-n .
\end{gather}
The \textit{persistence of domain wall} correlation function arises
from (\ref{auto}) at $m=0$ and $t=t_1+t_2$:
\begin{gather}
{\cal F} \big({\bth}^{ \rm g}_{[N\backslash n]}, n, t\big) \equiv \frac{\big\langle
\Psi \big({\bth}^{ \rm g}_{[N\backslash n]}\big) \big| \bar{\sf
F}^{+}_{n} {\rm e}^{-t H_{XX}} \bar{\sf F}_{n}
\big| \Psi \big({\bth}^{ \rm g}_{[N\backslash n]}\big)\big\rangle
}{\big\langle \Psi \big({\bth}^{ \rm g}_{[N\backslash n]}\big) \big|
\Psi \big({\bth}^{ \rm g}_{[N\backslash n]}\big)\big\rangle} ,
\label{field0}
\end{gather}
and ${\cal F} \big({\bth}^{ \rm g}_{[N\backslash n]}, 0, t\big)=1$ since $\bar{\sf F}_{0}$ is identity operator.

\section[Random turns vicious walkers and correlations over ferromagnetic state]
{Random turns vicious walkers and correlations\\ over ferromagnetic state}
\label{randfer}

In order to interpret the auto-correlation function (\ref{auto}) as a sum over nests of self-avoiding lattice paths, we shall firstly pay attention to the correlations over the ferromagnetic state.

Let us turn to description of $N$ \textit{vicious walkers} \cite{fish, forr1, 2}, which are located initially on a chain at the positions
$l_1 > l_2> \cdots > l_N$. Two essentially different
types of walks of vicious walkers may be distinguished: \textit{random turns} and \textit{lock step} walks. Occupation of a site by two walkers is forbidden. In the lock step model at each tick of the
clock each walker jumps, with equal probability, to the left or to the right site (see, for instance, Figure~\ref{fig:f7}, where paths connect~$C_i$ to~$B_i$, $1\le i\le 4$). In the random turns model at each tick only a single randomly chosen walker jumps to
right or to left site, whereas the rest are staying (Figure~\ref{fig:f18}). After $K$ steps, the walkers arrive at the positions $j_1>j_2>\cdots>j_N$.
Trajectories
of the random walkers can be viewed as \textit{directed lattice paths} (i.e., the paths
that can not turn back) on the square lattice.
\begin{figure}[h]
\center
\includegraphics[scale=.9]{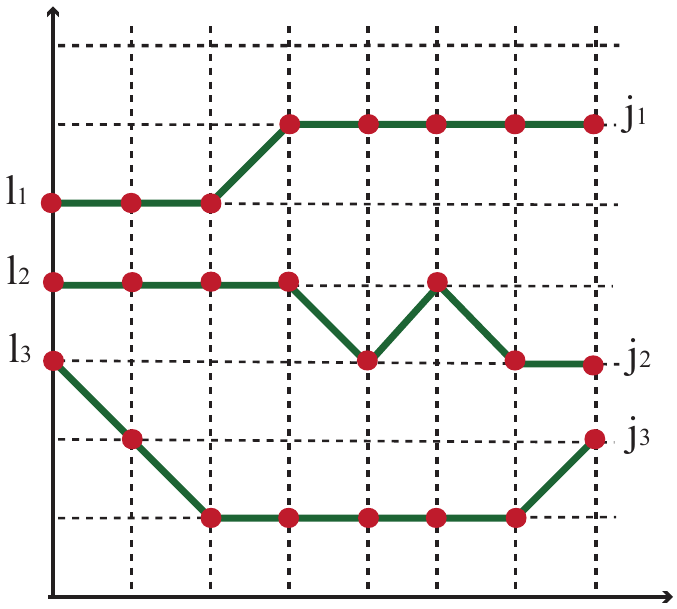}
\put(1,4){\makebox(0,0)[lb]{$t$}}
\caption{Trajectories of $K=7$ steps for $N=3$ random turns vicious walkers.}
\label{fig:f18}
\end{figure}

We define $N$-particles \textit{transition amplitude}:
\begin{gather}\label{mpcf}
{\cal G}({\bf j}; {\bf l} | t) \equiv \big\langle {\bf j} \big| {\rm e}^{- t H_{XX}} \big| {\bf l} \big\rangle ,
\end{gather}
where $H_{XX}$ is the Hamiltonian (\ref{xxham}), and the states $\langle {\bf j} |$ and $| {\bf l} \rangle$ given by (\ref{conwf01}) are respectively labelled by the strict partitions ${\bf j} = (j_1, j_2, \dots, j_N)$ and ${\bf l} = (l_1, l_2, \dots, l_N)$. The generating function of self-avoiding lattice paths of the vicious walkers is enabled by the transition amplitude~(\ref{mpcf}):
\begin{gather}\label{mpgf}
G({\bf j}; {\bf l} | t) \equiv
\big\langle {\bf j} \big| {\rm e}^{- \frac t2 {\jum}} \big| {\bf l}\big \rangle = {\rm e}^{t N} {\cal G}({\bf j}; {\bf l} | t) ,
\end{gather}
where ${\jum}$ is given by (\ref{hamr}). The condition that vicious walkers do not touch each other up to $N$ steps is guaranteed by the property of the Pauli matrices $(\sigma_k^{\pm})^2=0$.

Differentiating $G ({\bf j}; {\bf l} | t)$ (\ref{mpgf}) with respect of $t$ and applying the commutation relation\vspace{-1ex}
\begin{gather}
\big[ {\jum},\si_{l_1}^{-} \si_{l_2}^{-} \cdots \si_{l_N}^{-}\big] = \sum_{k=1}^N \si_{l_1}^{-} \cdots \si_{l_{k-1}}^{-} [{\jum},\si_{l_k}^{-}] \si_{l_{k+1}}^{-} \cdots\si_{l_N}^{-} , \label{comcom}
\end{gather}
we obtain the equation\vspace{-1ex}
\begin{gather}
\frac{\rm d}{{\rm d}t} G ({\bf j}; {\bf l} | t) = \frac 12 \sum_{k=1}^N\bigl(G({\bf j}; l_1, l_2, \dots , l_k+1, \dots, l_N | t)
+ G({\bf j}; l_1, l_2, \dots, l_k-1, \dots, l_N | t)\bigr)
\label{phem}
\end{gather}
{\samepage
(the ``final'' position ${\bf j}$ is fixed), and a similar equation can be found for fixed
${\bf l}$. The commutation relation\vspace{-1ex}
\begin{gather}
\lbrack {\jum},\si_m^{-}]= - \sum_{n=0}^M \Dl_{n m}\si_n^{-} \si_m^z , \label{cr}
\end{gather}
is also used in derivation of (\ref{phem}).

}

The non-intersection condition means that $G ({\bf j}; {\bf l} | t)=0$ if $l_k=l_p$ (or $j_k=j_p$) for any $1\leq
k, p\leq N$. Equation (\ref{phem}) is supplied with the periodicity requirement in each $j_k$, $m_k$ with all other $j_s$, $m_s$ ($k\neq s$) fixed:
\begin{gather*}
 G(j_1, j_2, \dots, j_k+M+1, \dots, j_N; l_1, l_2, \dots, l_N|t)
 \\ \qquad
{}= G(j_1, j_2, \dots, j_N; l_1, l_2, \dots, l_k+M+1, \dots, l_N|t) = G(j_1, j_2, \dots, j_N; l_1, l_2, \dots, l_N|t) .
\end{gather*}
The periodic solution to (\ref{phem}) subjected to the ``initial condition'' $G({\bf j}; {\bf l} | 0)= \bdl_{{\bf j} {\bf l}}$ (see (\ref{tuc3})) is given by

\begin{Proposition}\label{Proposition5}
The
determinantal representation for solution to equation~\eqref{phem} is valid:
\begin{align}
G(j_1, j_2, \dots,j_N; l_1, l_2, \dots, l_N|t) &= \det \bigl( G(j_r, l_s|t)\bigr)_{1\le r,s \le N}
\nonumber
\\
&= \det \bigl(\big({\rm e}^{\frac{t}2 \bold \Delta}\big)_{j_r, l_s} \bigr)_{1\le r,s \le N} , \label{det}
\end{align}
where $G(j, l|t)$ is given by
\begin{gather}
{G} (j, l | t )\equiv \big\langle \Uparrow \big| \si_j^{+} {\rm e}^{- \frac t2 {\jum}} \si_l^{-} \big| \Uparrow\big \rangle
= \bigr({\rm e}^{\frac t2{\bold\Delta}}\bigl)_{j l} .
\label{cfff}
\end{gather}
\end{Proposition}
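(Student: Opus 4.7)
The plan is to verify that the determinant on the right-hand side of \eqref{det} satisfies the same linear evolution equation \eqref{phem} with the same initial condition and periodic boundary conditions as $G({\bf j};{\bf l}|t)$, and then conclude by a uniqueness argument. This is a Karlin--McGregor (or Lindström--Gessel--Viennot, in its discrete-path incarnation) style reduction of the non-intersecting $N$-particle amplitude to a determinant of single-particle amplitudes, reflecting the free-fermionic nature of the $XX0$ chain.

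First I would establish the one-particle identity \eqref{cfff}. The commutator \eqref{cr}, together with $\si^z_l\,\si^-_l\mid\Uparrow\rangle = -\si^-_l\mid\Uparrow\rangle$, shows that on the one-magnon subspace $\jum$ acts as $-{\bold\Delta}$, namely $\jum\,\si^-_l\mid\Uparrow\rangle = -\sum_n \Dl_{nl}\,\si^-_n\mid\Uparrow\rangle$. Exponentiating gives $\mathrm{e}^{-\frac{t}{2}\jum}\si^-_l\mid\Uparrow\rangle = \sum_j (\mathrm{e}^{\frac{t}{2}{\bold\Delta}})_{jl}\,\si^-_j\mid\Uparrow\rangle$, and pairing with $\langle\Uparrow\mid\si^+_j$ yields \eqref{cfff}. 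In particular the one-particle amplitude $G(j,l|t)$ solves the lattice heat equation $\partial_t G(j,l|t) = \tfrac{1}{2}\bigl(G(j,l+1|t)+G(j,l-1|t)\bigr)$ with $G(j,l|0)=\dl_{jl}$ and is $(M{+}1)$-periodic separately in $j$ and $l$.

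Set $\widetilde G({\bf j};{\bf l}|t)\equiv\det\bigl(G(j_r,l_s|t)\bigr)_{1\le r,s\le N}$. Differentiating in $t$ and using linearity of the determinant column-by-column, the $s$-th column is replaced by $\tfrac{1}{2}\bigl(G(j_r,l_s+1|t)+G(j_r,l_s-1|t)\bigr)_r$; splitting the sum and reassembling the determinants gives exactly the right-hand side of \eqref{phem}. The initial condition $\widetilde G({\bf j};{\bf l}|0)=\det(\dl_{j_r l_s})=\bdl_{{\bf j}{\bf l}}$ is immediate because ${\bf j}$ and ${\bf l}$ are strictly decreasing, so only the identity permutation survives. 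Periodicity in each $j_k,l_k$ descends from the single-particle case, and the non-intersection requirement is automatic: if $l_k=l_p$ or $j_k=j_p$ for some $k\ne p$, two columns (respectively, rows) coincide and $\widetilde G$ vanishes.

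Both $G$ and $\widetilde G$ are antisymmetric in each tuple, solve the finite linear system \eqref{phem} with the same initial data, vanish on every collision configuration and satisfy the same periodic boundary conditions. Uniqueness for this Cauchy problem on the finite configuration space then forces $G=\widetilde G$, which is the content of \eqref{det}. The step I expect to require most care is the interplay between the derivation of \eqref{phem} via \eqref{comcom}---which a priori operates on strictly decreasing indices---and the shifts $l_k\to l_k\pm 1$, which can produce coinciding labels. One must check that the extra terms generated at such collisions cancel correctly, either through the Pauli relation $(\si^-_k)^2=0$ on the operator side, or through the vanishing of a determinant with two equal columns on the combinatorial side, so that both $G$ and $\widetilde G$ genuinely satisfy \eqref{phem} on the whole physical domain.
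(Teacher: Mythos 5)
Your strategy --- verify that the determinant satisfies the same linear system \eqref{phem} with the same initial data and periodicity, then invoke uniqueness --- is the natural Karlin--McGregor route, and it is already more than the paper offers (its ``proof'' is a bare citation to \cite{circ}). The one-particle identity \eqref{cfff}, the column-by-column differentiation, and the treatment of collisions are all fine. The genuine gap sits in the sentence ``Both $G$ and $\widetilde G$ are antisymmetric in each tuple'' and in the claim that periodicity simply ``descends from the single-particle case''. The operator-defined amplitude is \emph{symmetric} under permutations of the $l_k$, because the $\si^-_{l_k}$ at distinct sites commute, whereas the determinant is antisymmetric. This matters exactly when a hop in \eqref{phem} wraps around the ring: for $l_1=M$ the shifted tuple $(M+1,l_2,\dots,l_N)\equiv(0,l_2,\dots,l_N)$ is out of order, and restoring the decreasing order costs a cyclic permutation of sign $(-1)^{N-1}$ on the determinant side but nothing on the operator side. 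Hence for even $N$ the two objects obey \emph{different} closed systems on the ordered configurations, and the uniqueness argument does not close. The danger you flagged (coinciding labels) is harmless; the danger you did not flag (the seam of the ring) is fatal.

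This is not a removable technicality: the two sides of \eqref{det} genuinely disagree for even $N$ on the cyclic chain. Take $N=2$, $M+1=3$, ${\bf j}={\bf l}=(1,0)$. Then ${\bold\Delta}$ is the adjacency matrix of the triangle, so $\bigl({\rm e}^{\frac t2{\bold\Delta}}\bigr)_{jl}=\frac13{\rm e}^{t}+\bigl(\dl_{jl}-\frac13\bigr){\rm e}^{-t/2}$ and the $2\times2$ determinant equals $\frac23{\rm e}^{t/2}+\frac13{\rm e}^{-t}$; on the other hand $-\jum$ restricted to the two-magnon sector is again the triangle adjacency matrix (on the three two-site configurations), so the left-hand side of \eqref{det} equals $\frac13{\rm e}^{t}+\frac23{\rm e}^{-t/2}$. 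The two expansions differ already at order $t^{3}$, and the discrepancy is produced precisely by the walks in which a particle crosses the bond between sites $M$ and $0$. To complete the argument one must either work on the open or infinite chain, or restrict to odd $N$, or twist the seam entry of ${\bold\Delta}$ by the parity factor $(-1)^{N-1}$ --- the usual Jordan--Wigner boundary term for the $XX$ chain on a ring. As written, the key step ``both sides satisfy the same Cauchy problem'' fails, and with it the derivation of \eqref{det} in the stated form.
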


\begin{proof}
The representation equation~(\ref{det}) is derived in \cite{circ}.
\end{proof}

The determinantal formulae for transition probabilities of $N$ particles are well-known in probability and combinatorics for a long time \cite{ges,kar1,kar, lind} and still attract attention, e.g., \cite{kat, kat1}.

The unitary matrix $\widehat{\bf u}\equiv(M+1)^{-1/2}\bigl({\rm e}^{{\rm i} l\phi_s}\bigr)_{0\le s, l \le M}$, where $\phi_s=\frac{2\pi}{M+1}\bigl(s-\frac{M}{2} \bigr)$,
diagonali\-zes~${\bold\Delta}$~(\ref{qanal11}). Therefore the entries (\ref{cfff}) are:
\begin{gather}
G(j, l | t) \equiv \frac{1}{M+1} \sum\limits_{s=0}^M
{\rm e}^{{\rm i} \phi_s(j-l)+ t \cos\phi_s}.
\label{ratbe61}
\end{gather}
Using (\ref{cfff}), (\ref{ratbe61}) in (\ref{det}), one obtains another equivalent determinantal solution to (\ref{phem}) given~by

\begin{Proposition}\label{Proposition6}
Solution to \eqref{phem} is of the form:
\begin{gather}
G({\bf j}; {\bf l} | t) =
\frac{{\rm e}^{t N}}{(M+1)^N}\sum\limits_{\{{\bphi}_N\}}
{\rm e}^{- t E_N({\bphi}_N)}\big|{\CV} ({\rm e}^{{\rm i} {\bphi}_N}\big)\big|^2
S_{{\bla^L}}\big({\rm e}^{{\rm i} {\bphi}_N}\big)
S_{{\bla^R}}\big({\rm e}^{-{\rm i} {\bphi}_N}\big),
\label{ratbe7}
\end{gather}
where the sum is over $N$-tuples ${\bphi}_N = (\phi_{k_1}, \phi_{k_2}, \dots, \phi_{k_N})$, $\phi_n = \frac{2\pi}{M+1} \bigl(n-\frac{M}{2}\bigr)$ and $M\ge k_1>k_2\allowbreak > \cdots > k_N\ge 0$. The partitions ${\bla^L}$, ${\bla^R}$ and ${\bf j}$, ${\bf l}$ are related:
${\bla^L}={\bf j}- {\bdl}_N$, ${\bla^R}={\bf l}- {\bdl}_N$, and ${\bdl}_N$ is~\eqref{stair}. Besides, the Vandermonde deteminant ${\CV} ({\rm e}^{{\rm i} {\bphi}_N})$, where ${\rm e}^{\pm{\rm i} {\bphi}_N}\equiv ({\rm e}^{\pm{\rm i}\phi_1}, {\rm e}^{\pm{\rm i}\phi_2}, \dots, {\rm e}^{\pm{\rm i}\phi_N})$, is defined by \eqref{spxx1}, and $E_N({\bphi}_N)$ is defined by \eqref{egen}.
\end{Proposition}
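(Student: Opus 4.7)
The plan is to start from the determinantal representation of Proposition~\ref{Proposition5}, substitute the diagonalised kernel~\eqref{ratbe61}, and apply the Cauchy--Binet expansion to convert the single $N\times N$ determinant into a sum over $N$-subsets of modes, after which each factor can be identified as a Vandermonde determinant times a Schur function by the defining formula~\eqref{sch}.

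More precisely, I would first write, combining \eqref{mpgf} with~\eqref{det} and~\eqref{ratbe61},
\begin{gather*}
G({\bf j};{\bf l}|t) = {\rm e}^{tN} \det \bigg(\frac{1}{M+1}\sum_{k=0}^{M} {\rm e}^{{\rm i}\phi_k (j_r-l_s) + t\cos\phi_k}\bigg)_{1\le r,s\le N}.
\end{gather*}
Setting $A_{r k}= {\rm e}^{{\rm i}\phi_k j_r}$ and $B_{k s}= {\rm e}^{-{\rm i}\phi_k l_s}\,{\rm e}^{t\cos\phi_k}$, the matrix under the determinant is $\frac{1}{M+1} A B$. The Cauchy--Binet formula then yields
\begin{gather*}
G({\bf j};{\bf l}|t) = \frac{{\rm e}^{tN}}{(M+1)^N}\sum_{M\ge k_1>\cdots>k_N\ge 0} \det\bigl({\rm e}^{{\rm i}\phi_{k_i} j_r}\bigr)_{r,i} \det\bigl({\rm e}^{-{\rm i}\phi_{k_i} l_s}\bigr)_{i,s} \prod_{i=1}^{N}{\rm e}^{t\cos\phi_{k_i}}.
\end{gather*}
The product of exponentials is precisely ${\rm e}^{-t(N-\sum_i\cos\phi_{k_i})} \cdot {\rm e}^{tN}\cdot {\rm e}^{-tN} = {\rm e}^{-tE_N({\bphi}_N)}\,{\rm e}^{tN}\cdot {\rm e}^{-tN}$; after cancelling the stray ${\rm e}^{tN}$'s one obtains the factor ${\rm e}^{-tE_N({\bphi}_N)}$ announced in~\eqref{ratbe7}.

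Next I would rewrite each of the two determinants as a Vandermonde times a Schur function. Using the transposition $\det M = \det M^{T}$ and the substitution $j_r=\la^{L}_{r}+N-r$ with $\bla^{L}={\bf j}-\bdl_N$, one has
\begin{gather*}
\det\bigl({\rm e}^{{\rm i}\phi_{k_i} j_r}\bigr)_{r,i} = \det\bigl(({\rm e}^{{\rm i}\phi_{k_r}})^{\la^{L}_{i}+N-i}\bigr)_{r,i} = \CV({\rm e}^{{\rm i}{\bphi}_N})\, S_{\bla^{L}}\bigl({\rm e}^{{\rm i}{\bphi}_N}\bigr),
\end{gather*}
by the defining Schur formula~\eqref{sch}, and analogously $\det\bigl({\rm e}^{-{\rm i}\phi_{k_i}l_s}\bigr)_{i,s}=\CV({\rm e}^{-{\rm i}{\bphi}_N})\, S_{\bla^{R}}({\rm e}^{-{\rm i}{\bphi}_N})$ with $\bla^{R}={\bf l}-\bdl_N$. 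Since the $\phi_k$ are real, $\overline{\CV({\rm e}^{{\rm i}{\bphi}_N})}=\CV({\rm e}^{-{\rm i}{\bphi}_N})$, and the product of the two Vandermondes assembles into $|\CV({\rm e}^{{\rm i}{\bphi}_N})|^{2}$. Collecting everything reproduces formula~\eqref{ratbe7}.

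I do not anticipate a genuine obstacle: the argument is an application of Cauchy--Binet combined with the bialternant definition of $S_{\bla}$. The only bookkeeping care required is (i) matching row/column indices when transposing so that the partition index and the variable index land in the correct slots for~\eqref{sch}, and (ii) verifying that the sign coming from $\bdl_N$ in passing from ${\bf j},{\bf l}$ to $\bla^{L},\bla^{R}$ cancels between the two Vandermonde/Schur factorisations, so that no $(-1)^{\cdot}$ appears in the final answer. Both are routine once the substitution $j_r=\la^{L}_r+N-r$ is made simultaneously on rows and columns.
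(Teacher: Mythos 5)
Your overall strategy is sound and, in fact, more informative than the paper itself: the paper does not prove Proposition~\ref{Proposition6} at all, it simply cites~\cite{bmnph} for the representation~\eqref{ratbe7}. The Cauchy--Binet expansion of $\det\bigl(\frac{1}{M+1}AB\bigr)$ over $N$-subsets of the $M+1$ Fourier modes, followed by the identification of each alternant $\det\bigl(({\rm e}^{{\rm i}\phi_{k_r}})^{\la_i+N-i}\bigr)=\CV\bigl({\rm e}^{{\rm i}\bphi_N}\bigr)S_{\bla}\bigl({\rm e}^{{\rm i}\bphi_N}\bigr)$ via the bialternant formula~\eqref{sch}, is exactly the standard route, and your treatment of the Vandermonde product $\CV({\rm e}^{{\rm i}\bphi_N})\CV({\rm e}^{-{\rm i}\bphi_N})=|\CV({\rm e}^{{\rm i}\bphi_N})|^2$ (using that the $\phi_k$ are real) is correct. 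Your worry~(ii) about a residual sign from the shift by $\bdl_N$ is unfounded: the substitution $j_r=\la^L_r+N-r$ is a relabelling of exponents, not a row permutation, so no sign arises.

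There is, however, one bookkeeping error you should repair. Equation~\eqref{det} expresses $G({\bf j};{\bf l}\,|\,t)$ itself --- not $\mathcal{G}({\bf j};{\bf l}\,|\,t)$ --- as $\det\bigl(G(j_r,l_s|t)\bigr)$, so your opening display should carry no prefactor ${\rm e}^{tN}$. The factor ${\rm e}^{tN}$ appearing in~\eqref{ratbe7} is generated entirely at the last step from
\begin{gather*}
\prod_{i=1}^{N}{\rm e}^{t\cos\phi_{k_i}}={\rm e}^{t\sum_{i}\cos\phi_{k_i}}={\rm e}^{tN}\,{\rm e}^{-tE_N(\bphi_N)},
\end{gather*}
by the definition~\eqref{egen} of $E_N$. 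As written, your spurious initial ${\rm e}^{tN}$ would produce ${\rm e}^{2tN}$, and the phrase ``after cancelling the stray ${\rm e}^{tN}$'s'' papers over this by fiat rather than by computation. Deleting the prefactor from your first two displays makes the exponential accounting close cleanly and the derivation complete.
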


\begin{proof}
The representation equation~(\ref{ratbe7}) is derived in \cite{bmnph}.
\end{proof}

Expanding the representation
(\ref{mpgf}) in powers of $t$, we obtain the series
\begin{gather}\label{avv}
G({\bf j}; {\bf l} | t) = \sum_{K=0}^{\infty}
\frac{(t/2)^K}{K!} \mathfrak{G} ({\bf j}; {\bf l} | K) ,
\end{gather}
where the coefficients are the averages:
\begin{gather}
\label{qanal13}
\mathfrak{G}({\bf j}; {\bf l} | K) \equiv
\big\langle {\bf j} \big| (- {\jum})^K \big| {\bf l} \big\rangle .
\end{gather}
Let us define $N$-tuples ${\bf e}_{k}$, $1\le k\le N$, consisting of zeros except a unity at $k^{\rm th}$ place (say, from left). The commutation relation (\ref{cr}) allows to establish that $\mathfrak{G} ({\bf j}; {\bf l} | K)$ (\ref{qanal13}) satisfies the identity:
\begin{gather}
\mathfrak{G} ({\bf j}; {\bf l} | K+1) = \sum_{k=1}^{N} \bigl(\mathfrak{G} ({\bf j}; {\bf l}+{\bf e}_{k} | K) +
\mathfrak{G} ({\bf j}; {\bf l}-{\bf e}_{k} | K)\bigr) .
\label{mpcf6}
\end{gather}
Solution to (\ref{mpcf6}) takes the form \cite{circ}:
\begin{gather}
\label{kst}
\mathfrak{G} (j_1, j_2, \dots, j_N; l_1, l_2, \dots, l_N|K) = \sum_{|{\bf k}|=K} P({\bf k}) \det \bigl(\big( \bold\Delta^{k_r}\big)_{j_r, l_s}\bigr)_{1 \le r,s \le N} ,
\end{gather}
where $|{\bf k}| \equiv k_1+k_2+\cdots+k_N$, and $P({\bf k})$ is the multinomial coefficient:
\[
P({\bf k}) \equiv \frac{(k_1+k_2+ \cdots + k_N)!}{k_1!k_2!\cdots k_N!} .
\]

Let $|P_K ({\bf l} \rightarrow {\bf j})|$ be the number of $K$-step paths traced by $N$ vicious walkers in the random turns model \cite{fish}. From (\ref{kst}) it is clear, \cite{circ}, that equation~(\ref{qanal13}) provides us the number of configurations of $N$ random turns walkers being initially located on the sites $l_1>l_2> \dots > l_N$ and arrived after $K$ steps at the positions $j_1>j_2> \cdots > j_N$:
\begin{gather}
\label{qanal27}
|P_K ({\bf l} \rightarrow {\bf j})| = \mathfrak{G}({\bf j}; {\bf l} | K) .
\end{gather}

The $N$-particles transition amplitude (\ref{mpcf}) arises from (\ref{mpgf}), where (\ref{det}) or (\ref{ratbe7}) are used.

\section{Two-time correlations over ferromagnetic state}\label{twopoint}

Before studying (\ref{auto}), let us consider, by analogy with ${\cal G} ({\bf j}; {\bf l} | t)$ (\ref{mpcf}) and $G({\bf j}; {\bf l} | t)$ (\ref{mpgf}), ``two-time'' transition amplitude,
\begin{gather}
{\cal G}({\bf j}; {\bf l} | t_1, t_2, m) \equiv \big\langle {\bf j} \big|
{\rm e}^{-{t_1} H_{XX}} {\bar\varPi}_m {\rm e}^{- {t_2} H_{XX}} \big| {\bf l} \big\rangle ,
\label{twop777}
\end{gather}
and ``two-time'' generating function of self-avoiding lattice paths:
\begin{gather}
{G} ({\bf j}; {\bf l} | t_1, t_2, m )\equiv \big\langle {\bf j} \big|
{\rm e}^{-\frac{t_1}2 {\jum}} {\bar\varPi}_m {\rm e}^{- \frac{t_2}2 {\jum}} \big| {\bf l} \big\rangle = {\rm e}^{(t_1+t_2)N} {\cal G} ({\bf j}; {\bf l} | t_1, t_2, m) .
\label{twop7}
\end{gather}
As in Section~\ref{randfer},
differentiating (\ref{twop7}) with respect to $t_1$ and $t_2$ and using the commutation relations (\ref{comcom}), (\ref{cr}) we obtain the difference-differential equation:
{\samepage\begin{gather}
 \frac{\partial^2}{\partial t_1 \partial t_2} {G} ({\bf j}; {\bf l} | t_1, t_2 ,m )
\nonumber
\\ \qquad
{} = \frac{1}{4} \sum_{f, p =1}^{N}
\sum_{s, s^{\prime} = \pm 1} {G} (j_1, \dots, j_f + s, \dots , j_N; l_1, \dots, l_p + s^{\prime}, \dots, l_N | t_1, t_2 ,m ).
\label{twop9}
\end{gather}
Equation (\ref{twop9}) is supplied with the ``initial condition'': $G ({\bf j}; {\bf l} | 0, 0, m) =
\prod_{p=1}^N \dl_{j_p, l_p}$ provided that $j_p, l_p \ge m$, $\forall p$.

}

Let us first consider
the simplest ``two-time''
generating function:
\begin{gather}
{G} (j, l | t_1, t_2, m)\equiv \big\langle \Uparrow \big| \si_j^{+}
{\rm e}^{-\frac{t_1}2 {\jum}} {\bar\varPi}_m {\rm e}^{- \frac{t_2}2 {\jum}} \si_l^{-} \big| \Uparrow \big\rangle .
\label{twop1}
\end{gather}
We derive the
difference-differential equation differentiating (\ref{twop1}) and applying the commutation relation (\ref{cr}):
\begin{align}
\frac{\partial^2}{\partial t_1 \partial t_2} {G} (j, l | t_1, t_2, m ) = {}&\frac{1}{4} \bigl({G} (j+1, l+1 | t_1, t_2, m) + {G} (j+1, l-1 | t_1, t_2, m)
\nonumber
\\
& + {G} (j-1, l+1 | t_1, t_2, m ) + {G} (j-1, l-1 | t_1, t_2, m )\bigr).
\label{twop3}
\end{align}
As far as the ``initial'' conditions are concerned, we obtain from (\ref{twop1}) that ${G} \left(j, l | t_1, t_2, m \right)$ vani\-shes
at $t_1=t_2 =0$ for $j$, $l$ respecting
$0\le j, l < m$. Otherwise, ${G} (j, l | 0, 0, m)=\dl_{j l}$.
The solution to (\ref{twop3}) is of the form
\begin{gather}
\label{twop5}
{G} (j, l | t_1, t_2, m) = \sum_{k=m}^{M}{G} (j, k | t_1) {G} (k, l | t_2),
\end{gather}
where ${G} (j, l | t)$ is the solution (\ref{ratbe61}). We obtain from
(\ref{twop5}) at $m=0$:
\begin{gather*}
{G} (j, l | t_1, t_2, 0 ) =
{G} (j, l | t_1 + t_2 ) ,
\end{gather*}
in agreement with
(\ref{cfff}) and (\ref{twop1}).

Regarding (\ref{twop5}), we arrive at

\begin{Proposition}\label{Proposition7}
Solution to equation~$(\ref{twop9})$ takes the form:
\begin{gather}
{G} ({\bf j}; {\bf l} | t_1, t_2, m ) = \sum_{\bro \subseteq\{({\cal M}\slash m)^N\}}
{G} ({\bf j}; {\bro}+{\bdl}_N | t_1 ){G} ({\bro}+{\bdl}_N; {\bf l} | t_2 ) ,
\label{twop11}
\end{gather}
where ${G} ({\bf j}; {\bro}+{\bdl}_N | t_{1, 2})$ are given by $(\ref{ratbe7})$.
\end{Proposition}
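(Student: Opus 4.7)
The plan is to prove Proposition~\ref{Proposition7} by inserting a resolution of the identity of the $N$-particle sector between $\bar\varPi_m$ and one of the adjacent exponentials, and then reading off the result. Concretely, since $\jum$ commutes with the total spin operator $S^z$ and the initial/final vectors $|{\bf l}\rangle,\langle{\bf j}|$ carry exactly $N$ down-spins, the operator $e^{-\frac{t_2}{2}\jum}|{\bf l}\rangle$ remains in the $N$-particle subspace, on which the basis \eqref{conwf01} is complete and orthonormal by \eqref{tuc3}:
\begin{gather*}
\mathbb{I}_N = \sum_{\bmu:\,M\ge \mu_1>\mu_2>\dots>\mu_N\ge 0} |\bmu\rangle\langle\bmu|.
\end{gather*}

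Next I would analyse the action of $\bar\varPi_m = \prod_{j=0}^{m-1}\frac{\BI+\sigma^z_j}{2}$ on this basis: $\bar\varPi_m|\bmu\rangle = |\bmu\rangle$ if $\mu_N\ge m$, and $\bar\varPi_m|\bmu\rangle = 0$ otherwise, because $|\bmu\rangle$ has spin ``down'' precisely at the sites $\mu_1,\dots,\mu_N$. Hence, restricted to the $N$-particle sector,
\begin{gather*}
\bar\varPi_m = \sum_{\bmu:\,M\ge\mu_1>\dots>\mu_N\ge m}|\bmu\rangle\langle\bmu|.
\end{gather*}
Substituting this into \eqref{twop7} and using the definition \eqref{mpgf} for each of the two factors yields
\begin{gather*}
G({\bf j};{\bf l}|t_1,t_2,m) = \sum_{\bmu:\,\mu_N\ge m} G({\bf j};\bmu|t_1)\,G(\bmu;{\bf l}|t_2).
\end{gather*}

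The final step is the change of summation variable $\bmu = \bro+\bdl_N$ with $\bdl_N$ the staircase \eqref{stair}. This is a bijection between strict partitions $\bmu$ with $M\ge\mu_1>\dots>\mu_N\ge m$ and weakly decreasing partitions $\bro$ obeying $\CM\ge\rho_1\ge\dots\ge\rho_N\ge m$, i.e.\ $\bro\subseteq\{(\CM/m)^N\}$ in the notation of \eqref{ttt1}. Applying Proposition~\ref{Proposition5} (or equivalently Proposition~\ref{Proposition6}) to each factor supplies the representation \eqref{ratbe7} and produces \eqref{twop11}. As a consistency check, one notes that each factor $G({\bf j};\bro+\bdl_N|t_1)$, $G(\bro+\bdl_N;{\bf l}|t_2)$ independently satisfies the single-time evolution equation \eqref{phem}, so the product obeys the factorised mixed partial-derivative equation \eqref{twop9}; moreover at $t_1=t_2=0$ the sum collapses, via completeness, to $\bdl_{{\bf j}{\bf l}}$ restricted to ${\bf j},{\bf l}$ with all parts $\ge m$, matching the initial data.

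The only place that requires care is the identification of $\bar\varPi_m$ restricted to the $N$-particle sector with the indicated diagonal sum: one must check that \emph{no} basis vector with some $\mu_i<m$ survives and that the kept vectors are weighted by exactly unity. This follows from $\frac{\BI+\sigma^z_j}{2}|\uparrow\rangle_j=|\uparrow\rangle_j$ and $\frac{\BI+\sigma^z_j}{2}|\downarrow\rangle_j=0$, applied site by site for $0\le j\le m-1$. Everything else is combinatorial bookkeeping; there is no analytic obstacle, and the statement is effectively a disguised Chapman--Kolmogorov identity for the random-turns transition amplitude with an absorbing wall on the sites $[0,m-1]$.
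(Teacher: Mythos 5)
Your argument is correct, and it reaches \eqref{twop11} by a genuinely different route from the paper. The paper first asserts the determinantal representation \eqref{twop12} for the two-time $N$-particle generating function, observes that each entry \eqref{twop5} is a matrix element of a product of two rectangular matrices (with the intermediate index running over $m\le k\le M$), and then invokes the Cauchy--Binet formula to expand the determinant of that product into the sum over intermediate strict partitions ${\bro}+{\bdl}_N$, identifying each resulting determinant with a single-time amplitude via Proposition~\ref{Proposition5}. You instead work at the operator level: particle-number conservation ($[{\jum},S^z]=0$) keeps everything in the $N$-down-spin sector, the basis \eqref{conwf01} is complete and orthonormal there by \eqref{tuc3}, and $\bar\varPi_m$ acts diagonally on it, keeping exactly the $|\bmu\rangle$ with $\mu_N\ge m$ with unit weight. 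Inserting this spectral decomposition of $\bar\varPi_m$ between the two exponentials gives the Chapman--Kolmogorov sum directly from the definition \eqref{mpgf}, and the substitution $\bmu={\bro}+{\bdl}_N$ converts the range $M\ge\mu_1>\cdots>\mu_N\ge m$ into ${\bro}\subseteq\{({\CM}/m)^N\}$. Your route is more elementary and conceptually transparent --- it needs neither the determinant representation \eqref{twop12} nor Cauchy--Binet, only orthogonality and the diagonal action of the projector --- whereas the paper's route keeps the computation inside the determinantal formalism that the rest of the section relies on (in particular it produces \eqref{twop12} as a by-product, and connects the entries to the explicit one-particle kernel \eqref{ratbe61}). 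Both proofs then appeal to Proposition~\ref{Proposition5}/\ref{Proposition6} only to supply the representation \eqref{ratbe7} for the factors, as the statement requires. Your closing consistency checks (the factorised equation \eqref{twop9} and the initial data) are sound but not needed once the operator identity is established.
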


\begin{proof} It is straightforward to verify that solution to (\ref{twop9}) can be written as
\begin{gather}
{G} ({\bf j}; {\bf l} | t_1, t_2, m) = \det ({G} (j_k, l_n | t_1, t_2, m))_{1\le k, n\le N} ,
\label{twop12}
\end{gather}
where the entries ${G} (j_k, l_n | t_1, t_2, m)$ given by (\ref{twop5}) correspond to the product of two rectangular matrices. Application of the Cauchy--Binet formula for matrices enables to arrive from (\ref{twop12}) to~(\ref{twop11}).
Taking into account that the entries ${G} (j_k, l_n \mid t)$ are of the form (\ref{ratbe61}),
one arrives at~(\ref{twop11}), where ${G} ({\bf j}; {\bf l} \mid t)$ is given by (\ref{ratbe7}).
\end{proof}

Expanding (\ref{twop7}) in double series with respect of $\bigl(\frac{t_1}2 \bigr)^{K_1}$ and $\bigl(\frac{t_2}2 \bigr)^{K_2}$, one obtains the coefficients expressed by the average:
\begin{gather}
\label{twop19}
{\mathfrak G} ({\bf j}; {\bf l} | K_1, K_2, m) \equiv
\big\langle {\bf j} \big|(- {\jum})^{K_1} {\bar\varPi}_m (- {\jum})^{K_2} \big| {\bf l}\big\rangle .
\end{gather}
Applying (\ref{cr}), we obtain that ${\mathfrak G} ({\bf j}; {\bf l} | K_1, K_2, m)$ (\ref{twop19}) satisfies the identity:
\begin{gather}
{\mathfrak G} ({\bf j}; {\bf l} | K_1+1, K_2+1, m) =
\sum_{f, p =1}^{N}
\sum_{s, s^{\prime} = \pm 1} {\mathfrak G} ({\bf j}+ s {\bf e}_f; {\bf l}+ s^{\prime} {\bf e}_p | K_1, K_2, m ) ,
\label{twop21}
\end{gather}
where ${\bf e}_{p, f}$ are the same as in (\ref{mpcf6}).
Using (\ref{avv}) in (\ref{twop11}), we obtain the representation:
{\samepage\begin{gather}
{\mathfrak G} ({\bf j}; {\bf l} | K_1, K_2, m) = \sum_{\bro \subseteq
\{({\cal M}\slash m)^N\}}
{\mathfrak G} ({\bf j}; {\bro}+{\bdl}_N | K_1)
{\mathfrak G} ({\bro}+{\bdl}_N;
{\bf l} | K_2) .
\label{twop111}
\end{gather}
Since ${\mathfrak G} ({\bf j};
{\bf l} | K)$ (\ref{qanal13})
fulfils (\ref{mpcf6}), it is straightforward to check that (\ref{twop111}) fulfils (\ref{twop21}).

}

As in the case of equation~(\ref{qanal13}), the average ${\mathfrak G} ({\bf j}; {\bf l} | K_1, K_2, m)$ (\ref{twop19}) enumerates
the nests of $(K_1 + K_2)$-step paths
connecting the sites constituting ${\bf j}$ and ${\bf l}$:
\begin{gather}
\label{twop20}
\big|P_{K_1 + K_2} ({\bf l}\underset{m} \rightarrow {\bf j} )\big| \equiv {\mathfrak G} ({\bf j} ; {\bf l} | K_1, K_2, m) ,
\end{gather}
where the notation in left-hand side is a modification of
the notation (\ref{qanal27}).
Indeed, equation~(\ref{twop111}) is re-expressed regarding (\ref{qanal27}) and (\ref{twop20}) as follows:
\begin{gather}
\big|P_{K_1 + K_2} ({\bf l} \underset{m} \rightarrow {\bf j} )\big| = \sum_{\bro \subseteq
\{({\cal M}\slash m)^N\}}
|P_{K_1} ({\bf l} \rightarrow {\bro}+{\bdl}_N ) |
|P_{K_2} ({\bro}+{\bdl}_N \rightarrow {\bf j} )| .
\label{twop112}
\end{gather}
The representation (\ref{twop112}) demonstrates that (\ref{twop20}) enumerates the nests of such paths, which are due to ``gluing'' of the directed paths ${\bf l} \rightarrow {\bro}+{\bdl}_N$ to the directed paths ${\bro}+{\bdl}_N \rightarrow {\bf j}$, as
depicted in Figure \ref{fig:f18}. The ``gluing'' is along a ``dissection'' line characterized by the strict partition ${\bro}+{\bdl}_N$ provided that visiting the sites from $0^{\rm th}$ to $(m-1)^{\rm th}$ is forbidden. Therefore, $\big|P_{K_1 + K_2} ({\bf l} \underset{m} \rightarrow {\bf j} )\big|$~(\ref{twop20})
enumerates the random turns walks constrained by a \textit{bottleneck}. The bottleneck is absent if $m=0$, and (\ref{twop111}) is reduced to the coefficient corresponding to $K=K_1+K_2$ of the correlation function (\ref{det}) with $t=t_1+t_2$.

\section[Correlations over N-particles ground state]
{Correlations over $\boldsymbol{N}$-particles ground state}\label{npart}

This section is devoted to $N$-particles correlation functions (\ref{auto}) and (\ref{field0}).

\subsection[Form-factors of the operators bar Pi m and bar F n]
{Form-factors of the operators $\boldsymbol{{\bar\varPi}_{m}}$ and $\boldsymbol{\bar{\sf F}_{n}}$}

Form-factor of the projector $\bar\varPi_{m}$ (\ref{fp}) is defined as the average
\begin{gather}
\CT ({\textbf v_N}, {\textbf u_N}, m) \equiv \big\langle \Psi ({\textbf
v_N}) \big| \bar\varPi_{m} \big| \Psi ({\textbf u_N})\big\rangle , \label{ratio}
\end{gather}
where ${\textbf u_N}$ and ${\textbf v_N}$ are $N$-tuples of arbitrary complex numbers. Equations (\ref{conwf1}) and (\ref{conj}) allow us to calculate (\ref{ratio}),
and the answer is obtained in terms of the Cauchy--Binet identity (\ref{scschr}):
\begin{gather}
\CT ({\textbf v_N}, {\textbf u_N}, m) = \CP_{{\CK}/m}\big({\textbf v}^{-2}_N, {\textbf u}^2_N\big) =
\Bigg(\prod\limits_{l=1}^{N} \frac{u_l^{2m}}{v_l^{2m}}\Bigg)
\frac{\det(T_{k j})_{1\leq k, j\leq N}}{{\CV} ({\textbf u}_N^2){\CV} ({\textbf v}^{-2}_N)} ,
\label{spdfpxx0}
\end{gather}
where $T_{k j}$ are given by (\ref{tt}) with ${\cal L}$ replaced by ${\CK}$.
Equation~(\ref{spdfpxx0}) (together with (\ref{gener1})) demonstrates that (\ref{ratio}) is interpreted as a sum of the watermelons defined by~Construction~\ref{Construction1} (Section~\ref{waterm}). However, the watermelons are \textit{constrained} since summation in $\CP_{{\CK}/m}$ \eqref{spdfpxx0}
is restricted from below,
$\mu_N\ge m$, and the lattice paths are infiltrating above
the column of height~$m$ in the middle of Figure~\ref{fig:f7}($a$). For $m=0$, equation~(\ref{spdfpxx0}) expresses the scalar product.

Under the $q$-parameterization
\begin{gather}
{\bf v}_N^{-2}={\bf q}_N\equiv \big(q, q^2, \dots,q^N\big) ,\qquad
{\bf u}_N^{2}={\bf q}_N/q = \big(1, q, \dots, q^{N-1}\big) , \label{rep2111}
\end{gather}
the form-factor (\ref{ratio}) takes the form:
\begin{align}
\CT \big({\textbf{q}_N^{-\frac{1}{2}}}, (\textbf{q}_N/q)^{\frac{1}{2}}, m\big)
& = \CP_{\CK / m} (\textbf{q}_N, \textbf{q}_N/q)= q^{mN^2}W_q (N, {\cal M}-m)
\nonumber
\\
&= q^{mN^2} Z_q (N, N, {\CK}-m) , \label{teproj}
\end{align}
where the definition (\ref{gfw})
and Proposition~\ref{Proposition3} (see (\ref{volwater1}), (\ref{volwater2})) are used together with (\ref{qanal7}) and (\ref{scqschr12}). It is seen from (\ref{teproj}) that $q$-parametrized form-factor of
$\bar\varPi_{m}$ is the generating function $W_q (N, {\cal M}-m)$ of constrained watermelons or, equivalently, is the generating function $Z_q (N, N, {\CK}-m)$ of boxed plane partitions in ${\cal B}(N, N, {\CK}-m)$ (see \eqref{npp} and Figure \ref{fig:f7}):
\begin{gather}
\lim\limits_{q\to 1} {\CT} \big({\textbf{q}}_N^{-\frac{1}{2}}, (\textbf{q}_N/q)^{\frac{1}{2}}, m\big) = A (N, N, {\CK}-m) .
\label{rep55}
\end{gather}

Let us turn to the form-factor of the domain wall annihilation operator $\bar{\sf F}_{n}^+$ (\ref{fp}):
\begin{gather}
{\cal F} \big(\textbf v_{[N\backslash n]}, \textbf u_{N}, n\big) \equiv \big\langle \Psi \big({\bf v}_{[N\backslash n]}\big) \big| \bar{\sf
F}_{n}^+ \big| \Psi ({\bf u}_{N})\big\rangle , \label{ratiof}
\end{gather}
where (\ref{xindi1}) is taken into account. First, let us act by $\bar{\sf F}_{n}^+$ (\ref{fp}) on the state (\ref{conj}):
\begin{align}
\label{fstv2}
\big\langle \Psi\big({\bf v}_{[N\backslash n]}\big) \big| \bar{\sf F}_{n}^+
&= \sum_{{\bet} \subseteq \{({\CK}+n)^{N-n}\}} \big\langle {\bnu}_{N-n} \big| S_{{\bet}} \big({\textbf v}^{-2}_{[N\backslash n]}\big) \bar{\sf F}_{n}^+
\\
&= \sum_{{\bet} \subseteq \{{(({\CK}+n)/n)}^{N-n}\}} \big\langle {{\bnu}_{N-n}, \bdl_n }\big| S_{{\bet} } \big({\textbf v}^{-2}_{[N\backslash n]}\big)
\label{fstv1} \\
&= \Bigg(\prod\limits_{l=n+1}^{N} v^{-2n}_l \Bigg) \sum_{{\bet} \subseteq \{{\CK}^{N-n}\}}\big\langle {\widehat\bnu} \big| S_{{{\bet}}} \big({\textbf v}^{{-2}}_{[N\backslash n]}\big) ,
\label{fstv}
\end{align}
where ${\widehat\bnu}=
({\bnu}_{N-n}+{\bf n}_{N-n}, \bdl_n)$ (see (\ref{strhat})). The parts of $\bnu_{N-n}$ respect $M\ge\nu_1>\nu_2>\cdots > \nu_{N-n}\ge 0$ in (\ref{fstv2}),
or $M\ge\nu_1>\nu_2>\cdots > \nu_{N-n}\ge n$ in (\ref{fstv1}),
or $M-n\ge\nu_1>\nu_2>\cdots > \nu_{N-n}\ge 0$ in (\ref{fstv});
in all three cases summation is over non-strict partitions $\bet_{N-n}=\bnu_{N-n} -\bdl_{N-n}$. The state in (\ref{fstv}) and its
conjugate are defined:
\begin{gather}
\big\langle {\widehat{\bnu}} \big| \equiv \langle \Uparrow \mid
\Bigg(\prod\limits_{l=1}^{N-n}\si_{n+ \nu_l}^{+}\Bigg)
\prod\limits_{l=0}^{n-1} \si_{l}^{+} ,\qquad
\big| {\widehat{\bmu}} \big\rangle\equiv
\prod\limits_{p=1}^{N-n} \si_{n + \mu_p}^{-}
\Bigg(\prod\limits_{l=0}^{n-1} \si_{l}^{-}\Bigg)\mid \Uparrow \rangle . \label{ratbe2}
\end{gather}

One obtains from (\ref{fstv}) the transition amplitude interpreted as a sum of the stars with deviation $n$ (Figure \ref{fig:f555}):
\begin{gather}
\label{mel}
\big\langle \Psi\big({\bf v}_{[N\backslash n]}\big) \big| \bar{\sf F}_{n}^+ | \widehat{\bmu} \rangle =
\Bigg(\prod \limits_{l=n+1}^{N} v^{-2n}_l \Bigg) S_{{\bla}}\big({\bf v}^{-2}_{[N\backslash n]}\big) =
\sum_{\{{\cal C}_n \}} \prod_{j=n+1}^{N} v^{-2 ( c_j +n)}_{j} ,
\end{gather}
where (\ref{schdevi}) is taken into account, and ${\bla}\equiv {\bla}_{N-n}$ is given by $\widehat\bla$ in accord with (\ref{qanal14}), (\ref{strhat}).
The transition amplitude of the domain wall creation operator
$\bar{\sf F}_{n}$ calculated with the help of (\ref{limsch}) and (\ref{lim}) is a sum of the stars in Figure \ref{fig:pathconj}:
\begin{gather}
\label{mel2}
\big\langle \widehat{\bmu} \big| \bar{\sf F}_{n} \big| \Psi({\bf u}_{N-n}) \big\rangle =
\Bigg( \prod\limits_{l=1}^{N-n} u^{2n}_l \Bigg) S_{{\bla}} \big(\mathbf{u}_{N-n}^2\big) =
\sum_{\{{\cal B}_n \}}\prod_{l=1}^{N-n} u_l^{2(b_l+n)} ,
\end{gather}
where ${\bla}$ and $\widehat{\bmu}$ are the same as in (\ref{mel}).

 The averages (\ref{mel}) and (\ref{mel2}) are specified
for $|\widehat{\bmu} \rangle =
\big|\widehat{\bdl}_N \big\rangle$, $\langle \widehat{\bmu} | = \big\langle \widehat{\bdl}_N \big|$:
\begin{gather}
\big\langle \Psi\big({\bf v}_{[N\backslash n]}\big) \big| \bar{\sf F}_{n}^+ \big| \widehat{\bdl}_N \big\rangle = \prod_{j=n+1}^{N} v^{-2 n}_{j} , \qquad
\big\langle \widehat{\bdl}_N \big| \bar{\sf F}_{n} \big| \Psi({\bf u}_{N-n}) \big\rangle =
\prod_{l=1}^{N-n} u_l^{2n} .
\label{mel31}
\end{gather}
The stars, Figure \ref{fig:f555} or Figure \ref{fig:pathconj}, correspond to ratios of (\ref{mel})
or (\ref{mel2}) to appropriately chosen average (\ref{mel31}).

Using the definitions of the state-vectors
(\ref{conwf1}) and (\ref{conj}) we obtain that the form-factor (\ref{ratiof}),\vspace{-1ex}
\begin{align}
\nonumber
{\cal F} \big(\textbf v_{[N\backslash n]}, \textbf u_{N}, n\big)&=
\Bigg(\prod \limits_{l=n+1}^{N}v^{-2n}_l\Bigg)
\sum\limits_{\bla \subseteq \{{\CK}^{N-n}\}}S_{\bla} \big({\textbf v}^{-2}_{[N\backslash n]}\big)
S_{\widehat\bla} \big({\textbf u}_N^{2}\big)
\\
\label{field41}
&= \Bigg(\prod\limits_{l=n+1}^{N}v^{-2n}_l\Bigg)
\mathcal{P}_{\CK}\big(\textbf{v}_{[N\backslash n]}^{-2}, \textbf{u}_N^{2}\big) ,
\end{align}
is equal, up to pre-factor, to the generator of watermelons with deviation (\ref{gendev}).
Analogously,\vspace{-1ex}
\begin{gather}
{\cal F} (\textbf v_N, \textbf u_{N-n}, n) = \Bigg(\prod\limits_{l=1}^{N-n} u^{2n}_l \Bigg)
\mathcal{P}_{\CK} \big({\textbf v}^{-2}_N,{\textbf u}^2_{N-n}\big),
\label{fied41}
\end{gather}
where $\mathcal{P}_{\CK} \big({\textbf v}^{-2}_N,
{\textbf u}^2_{N-n}\big)$ corresponds to (\ref{gener1}).
Application of Theorem~\ref{theoremI} leads to the determinantal representations of the form-factors \eqref{field41} and \eqref{fied41}.

According to Theorem~\ref{theoremII},
the representation, for instance, (\ref{fied41}) taken in the $q$-para\-metri\-za\-tion (\ref{rep2111}) is the generating function of plane partitions in the box ${\cal B} (N-n, N, {\CK})$ (see \eqref{rep31}):\vspace{-1ex}
\begin{align*}
{\cal F} \bigl({\bf q}_{N}^{-\frac 12},({\bf q}_{N-n}/q)^{\frac 12}, n\bigr) & = q^{\frac{n}2(N-n)(N-n-1)}
{{\CP}_{\CK}} ({\bf q}_{N-n}/q, {\bf q}_N)
\\
& = q^{\frac
n2(N-n)(N-n-1)} Z_{q}^{{\rm}}(N-n, N,
{\CK}) .
\end{align*}
The number of plane partitions confined in this box (i.e., of watermelons with deviation) is gi\-ven~by\vspace{-1ex}
\begin{gather}
\lim\limits_{q\to 1} {\cal F} \bigl({\bf q}_{N}^{ -\frac 12},({\bf q}_{N-n}/q)^{\frac 12}, n\bigr) = A(N-n, N, {\CK}) .
\label{rep5}
\end{gather}

\subsection{Persistence of domain wall}

Let us turn to the persistence of domain wall (\ref{field0}):\vspace{-1ex}
\begin{gather}
\label{field0n}
{\cal F} \big({\bth}^{ \rm g}_{[N\backslash n]}, n, t\big) = {\rm e}^{-tN} \frac{\big\langle
\Psi \big({\bth}^{ \rm g}_{[N\backslash n]}\big) \big| \bar{\sf
F}^{+}_{n} {\rm e}^{- \frac t2 {\jum}} \bar{\sf F}_{n}
\big| \Psi \big({\bth}^{ \rm g}_{[N\backslash n]}\big)\big\rangle
}{\big\langle \Psi \big({\bth}^{ \rm g}_{[N\backslash n]}\big) \big|
\Psi \big({\bth}^{ \rm g}_{[N\backslash n]}\big)\big\rangle} ,
\end{gather}
where $N$ is the number of spins ``down'', ${\bth}^{ \rm g}_{[N\backslash n]}$ is $(N-n)$-tuple of solutions (\ref{grstxxn}), and ${\jum}$ is given by (\ref{hamr}).

The combinatorial properties we are interested in are expressed by the average in the numerator of (\ref{field0n}) taken under off-shell parametrization. Due to orthogonality of the states $| \bmu_N \rangle$ at fixed $N$, equations~(\ref{mel}) and (\ref{mel2})
lead us to the representation \cite{bmumn}:\vspace{-1ex}
\begin{gather}
\big\langle \Psi \big({\bf v}_{[N\backslash n]}\big) \big| \bar{\sf
F}_{n}^{+} {\rm e}^{- \frac t2 {\jum}} \bar{\sf F}_{n}
\big| \Psi ({\bf u}_{N-n})\big\rangle
\nonumber
\\ \qquad
{} =\Bigg(\prod\limits_{l=1}^{N-n}\frac{u^{2n}_l}{v^{2n}_{l+n}}\Bigg)
\sum\limits_{{\bla^{L, R}}\subseteq \{{\CK}^{N-n}\}}
S_{{\bla^L}}\big({\bf v}_{[N\backslash n]}^{-2}\big) S_{{\bla^R}}\big({\textbf u}_{N-n}^2\big) G \big({\widehat{\bmu}^L}; {\widehat{\bmu}^R} \big| t\big) , \label{rcl1}
\end{gather}
where
${\textbf u}^2_{N-n}$ and ${\textbf v}^2_{[N\backslash n]}$ stand for off-shell pa\-ra\-mete\-rization, and summations go over partitions~${\bla}^{R, L}$, $l({\bla}^{R, L}) = N-n$. The corresponding strict partitions ${\bmu}^{R, L} = {\bmu}^{R, L}_{N-n} = {\bla}^{R, L}_{N-n}+{\bdl}_{N-n}$ respect
\begin{gather}
\label{ratbe222}
M-n\ge \mu_1^{R, L}> \mu_2^{R, L} > \cdots >
\mu_{N-n}^{R, L} \ge 0 .
\end{gather}
The correlation function $G \big({\widehat{\bmu}^L}; {\widehat{\bmu}^R} \big| t\big)$ in (\ref{rcl1}) is defined by
(\ref{mpgf}),
\begin{gather*}
G \big({\widehat{\bmu}^L}; {\widehat{\bmu}^R} \big| t\big) =
\big\langle {\widehat{\bmu}^L} \big| {\rm e}^{- \frac t2 {\jum}} \big| {\widehat{\bmu}^R} \big\rangle ,
\end{gather*}
where $\big\langle {\widehat{\bmu}^L} \big|$ and $\big| {\widehat{\bmu}^R} \big\rangle$ are given by (\ref{ratbe2}). The partitions ${\widehat{\bmu}^{L, R}}$ are of length $l\big({\widehat{\bmu}^{L, R}}\big)=N$, and their parts
are related with those of ${\bmu}^{R, L}$, which respect (\ref{ratbe222}) (see (\ref{qanal14}) and (\ref{strhat})):
\begin{align*}
\big({\widehat\mu_1^{L, R}}, {\widehat\mu^{L, R}}_2, \dots, {\widehat\mu^{L, R}}_N\big)
& \equiv \big({\bmu}_{N-n}^{L, R} + {\bf n}_{N-n}, {\bdl}_n\big)
\\
& \equiv \big({\mu_1^{L, R}+n}, {\mu_2^{L, R}+n}, \dots, {\mu_{N-n}^{L, R}+n}, n-1, n-2, \dots, 1, 0\big). 
\end{align*}

Expanding the exponential in (\ref{rcl1}), we obtain:
\begin{gather*}
\big\langle \Psi \big({\bf v}_{[N\backslash n]}\big) \big| \bar{\sf
F}_{n}^{+} {\rm e}^{- \frac t2 {\jum}} \bar{\sf F}_{n}\big| \Psi ({\bf u}_{N-n})\big\rangle
= \sum_{K=0}^{\infty}\frac{(t/2)^K}{K!}
\big\langle \Psi \big({\bf v}_{[N\backslash n]}\big) \big| \bar{\sf
F}_{n}^{+}{(- {\jum})^K} \bar{\sf F}_{n}\big| \Psi ({\bf u}_{N-n})\big\rangle ,
\end{gather*}
where
\begin{gather}
 \big\langle \Psi \big({\bf v}_{[N\backslash n]}\big) \big| \bar{\sf
F}_{n}^{+}{(- {\jum})^K} \bar{\sf F}_{n}\big| \Psi ({\bf u}_{N-n})\big\rangle\nonumber
\\ \qquad
{} =\Bigg(\prod \limits_{l=1}^{N-n}\frac{u^{2n}_l}{v^{2n}_{l+n}} \Bigg)
\sum \limits_{{{\bla}^{L, R}}
\subseteq \{{\CK}^{N-n}\}}
S_{{{\bla}^L}}\big({\bf v}_{[N\backslash n]}^{-2}\big)
S_{{{\bla}^R}}\big({\textbf u}_{N-n}^2\big) \mathfrak{G}
\big({\widehat{\bmu}^L}; {\widehat{\bmu}^R} \big| K\big) .
\label{qanal020}
\end{gather}
The coefficient $\mathfrak{G}\big({\widehat{\bmu}^L}; {\widehat{\bmu}^R} \big| K\big)$ (see (\ref{qanal13})) gives
the number of nests of $K$-step paths between the sites labeled by $\widehat{\bmu}^L$ and $\widehat{\bmu}^R$:
\begin{gather}
\mathfrak{G}
\big({\widehat{\bmu}^L}; {\widehat{\bmu}^R} \big| K\big) =
\big\langle {\widehat{\bmu}^L} \big| (- {\jum})^K \big| {\widehat{\bmu}^R} \big\rangle .
\label{qanal20}
\end{gather}
By analogy with (\ref{qanal13}) and (\ref{qanal27}), we apply the commutation relations (\ref{comcom}) and (\ref{cr}) to right-hand side of (\ref{qanal20}) and obtain the number of nests of $K$-step trajectories
connecting the sites constituting ${\widehat\bmu^L}$ and ${\widehat\bmu^R}$:
\begin{gather*}
\big|P_K \big({\widehat{\bmu}^L} \rightarrow {\widehat{\bmu}^R}\big)\big|
= \mathfrak{G} \big({\widehat{\bmu}^L}; {\widehat{\bmu}^R} \big| K\big) .
\end{gather*}
The picture of transitions ${\widehat{\bmu}^L} \rightarrow {\widehat{\bmu}^R}$ is similar to that in
Figure~\ref{fig:f18}.

The transition amplitude $\big\langle \Psi \big({\bf v}_{[N\backslash n]}\big) \big| \bar{\sf F}_{n}^{+}{(- {\jum})^K} \bar{\sf F}_{n} \big| \Psi \big({\bf u}_{N-n})\big\rangle$ (\ref{qanal020}) is interpreted in terms of nests of self-avoiding lattice paths made by $N$ vicious walkers. The walkers initially occupying the points $C_i$ move by the lock steps rules to the sites $\widehat{\bmu}^L$ (see Definition~\ref{Definition3b} and equation~\eqref{schdevi}). The transition between $\widehat{\bmu}^L$ and
$\widehat{\bmu}^R$
occurs in $K$ steps in accordance with the random turns model (Section~\ref{randfer}). After this, they move from $\widehat{\bmu}^R$ to points $B_i$ by the lock steps rules (see Definition~\ref{Definition4b} and equation~\eqref{lim}).
A typical nest is presented in Figure \ref{fig:f9}.
\begin{figure}[h]
\center
\includegraphics [scale=.9] {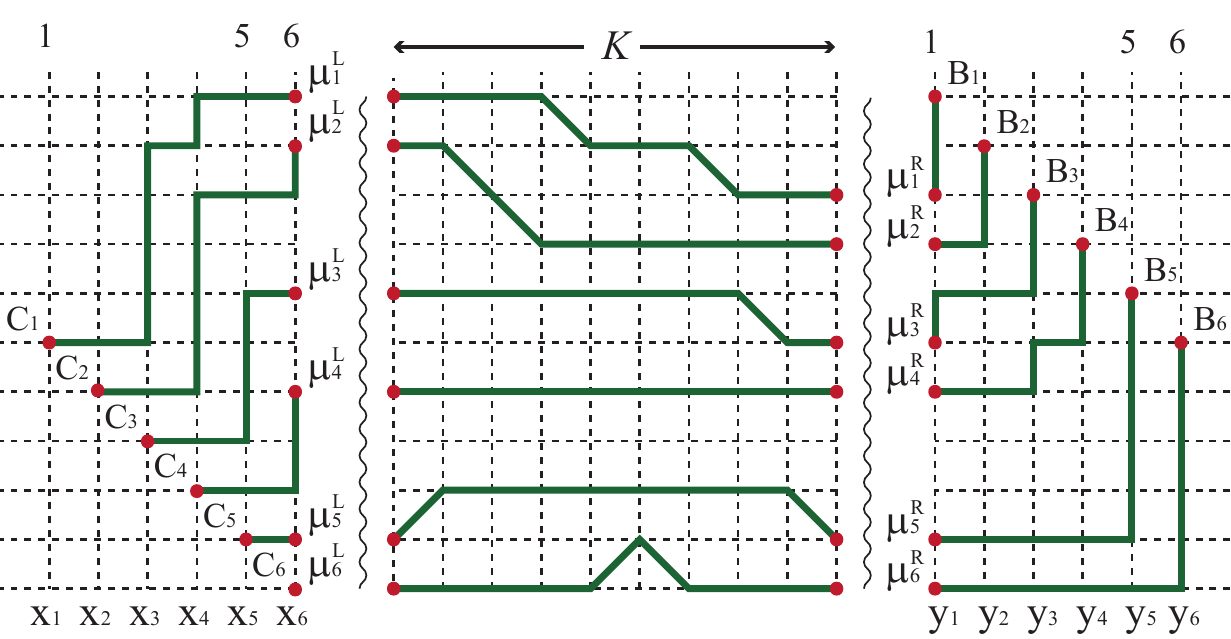}
\caption{Nest of paths contributing to
$\big\langle \Psi \big({\bf v}_{[N\backslash n]}\big) \big| \bar{\sf
F}_{n}^{+}{(- {\jum})^K} \bar{\sf F}_{n}
\big| \Psi ({\bf u}_{N-n})\big\rangle$ at $N=6$, $n=2$ and $K=9$ (the gluing is along the wavy lines, where ${\widehat\bmu^{L, R}}$ are shown without the hats).}
\label{fig:f9}
\end{figure}

Equation which governs $G \big({\widehat\bmu^L}; {\widehat\bmu^R} | t\big)$ (\ref{ratbe222}) is the same as (\ref{phem}), and its solution is given either by \eqref{det} or by (\ref{ratbe7}),
\begin{gather}
 G \big({\widehat\bmu^L}; {\widehat\bmu^R} \big| t\big)
 = \frac{{\rm e}^{t N}}{(M+1)^N}\sum\limits_{\{{\bphi}_N\}}
{\rm e}^{- t E_N({\bphi}_N)}\big|{\CV} \big({\rm e}^{{\rm i} {\bphi}_N}\big)\big|^2
S_{{\widehat\bla}^L}\big({\rm e}^{{\rm i} {\bphi}_N}\big)
S_{{\widehat\bla^R}}\big({\rm e}^{-{\rm i} {\bphi}_N}\big) ,
 \label{qanal24}
 \end{gather}
where ${\widehat\bla}^{L, R} =
\widehat \bmu^{L, R} - \bdl_N$.
The average (\ref{rcl1}) is re-expressed with regard at \eqref{qanal24}:
\begin{align}
\big\langle \Psi \big({\bf v}_{[N\backslash n]}\big) \big| \bar{\sf F}_{n}^{+} {\rm e}^{- \frac t2 {\jum}} \bar{\sf F}_{n}\big| \Psi ({\bf u}_{N-n})\big\rangle ={}&
\Bigg(\prod\limits_{l=1}^{N-n}
\frac{u^{2n}_l}{v^{2n}_{l+n}}\Bigg) \frac{{\rm e}^{t N}}{(M+1)^N}
\sum\limits_{\{{\bphi}_N\}}
{\rm e}^{- t E_N({\bphi}_N)} \big|{\CV} \big({\rm e}^{{\rm i} {\bphi}_N}\big)\big|^2\nonumber
\\
&\times {\cal P}_{\cal M}\big({\bf v}_{[N\backslash n]}^{-2}, {\rm e}^{{\rm i} {\bphi}_N}\big)
{\cal P}_{\cal M} \big({\rm e}^{-{\rm i} {\bphi}_N}, {\textbf u}_{N-n}^2\big), \label{field81}
\end{align}
where ${\cal P}_{\cal M}\big({\bf v}_{[N\backslash n]}^{-2}, {\rm e}^{{\rm i} {\bphi}_N}\big)$ and ${\cal P}_{\cal M} \big({\rm e}^{-{\rm i} {\bphi}_N}, {\textbf u}_{N-n}^2\big)$ are the generators of watermelons with deviation~$n$ parameterized by the numbers of steps according to (\ref{gendev}) and (\ref{gendev2}). This representation is useful in studying the asymptotical behaviour of \eqref{field0n}.

Eventually, the representations (\ref{norma}), (\ref{mpgf}), and (\ref{field81}) allow us to express the persistence of domain wall (\ref{field0}) (or \eqref{field0n}) parameterized by \eqref{grstxxn}:
\begin{align}
{\cal F}\big({\bth}_{[N \backslash n]}^{ \rm g}, n, t \big)
={}& \frac{\big|\CV \big({\rm e}^{{\rm i}{\bth}_{[N\backslash n]}^g}\big)\big|^2}{ (M+1)^{2N}}\sum\limits_{\{{\bphi}_N\}}{\rm e}^{- t E_N({\bphi}_N)}\nonumber
\\
&\times
 \big|{\CV} ({\rm e}^{{\rm i} {\bphi}_N})\big|^2 {\cal P}_{\cal M}\big({\rm e}^{- i {\bth}_{[N\backslash n]}^{ \rm g}}, {\rm e}^{{\rm i} {\bphi}_N}\big){\cal P}_{\cal M} \big({\rm e}^{-{\rm i} {\bphi}_N}, {\rm e}^{{\rm i} {\bth}_{[N\backslash n]}^{ \rm g}}\big) .
\label{field181}
\end{align}

\subsection{Auto-correlation function}

Let us turn to the auto-correlation function (\ref{auto}). Analogously to \eqref{rcl1}, we obtain the average under off-shell parametrization:
\begin{gather}
 \big\langle \Psi \big({\bf v}_{[N\backslash n]}\big) \big| \bar{\sf
F}_{n}^{+} {\rm e}^{- \frac{t_1}2 {\jum}} {\bar\varPi}_m {\rm e}^{- \frac{t_2}2 {\jum}} \bar{\sf F}_{n} \big| \Psi ({\bf u}_{N - n})\big\rangle \nonumber
\\ \qquad
{} =
\Bigg(\prod\limits_{l=1}^{N-n}\frac{u^{2n}_{l}}{v^{2n}_{l+n}}\Bigg)
\sum\limits_{{\bla^{L, R}}\subseteq \{{\CK}^{N-n}\}}
S_{{\widehat{\bla}^L}}\big({\bf v}_{[N\backslash n]}^{-2}\big) S_{{\widehat{\bla}^R}}\big({\textbf u}_{N - n}^2\big) G \big({\widehat{\bmu}^L}; {\widehat{\bmu}^R}\big| t_1, t_2, m\big) ,
\label{twop13}
\end{gather}
where summation is similar to that in \eqref{rcl1}, and ${\textbf u}^2_{N - n}$, ${\textbf v}^2_{[N\backslash n]}$ stand for off-shell par\-am\-etr\-ization (see \eqref{xindi1}). The generating
function $G \big({\widehat{\bmu}^L}; {\widehat{\bmu}^R} \big| t_1, t_2, m\big)$ is given by (\ref{twop7}):
\begin{gather}
G \big({\widehat{\bmu}^L}; {\widehat{\bmu}^R} \big| t_1, t_2, m\big) =
\big\langle {\widehat{\bmu}^L} \big| {\rm e}^{ - \frac{t_1}2 {\jum}} {\bar\varPi}_m {\rm e}^{- \frac{t_2}2 {\jum}} \big| {\widehat{\bmu}^R} \big\rangle . \label{twop14}
\end{gather}

Expanding (\ref{twop14}) with respect of $\bigl(\frac{t_1}2\bigr)^{K_1}$ and $\bigl(\frac{t_2}2\bigr)^{K_2}$, we obtain a double series characterized by the coefficients (\ref{twop19}) as follows:
\begin{gather}
{\mathfrak G} \big({\widehat{\bmu}^L};{\widehat{\bmu}^R} \big| K_1, K_2, m\big) =
\big\langle {\widehat{\bmu}^L} \big| (-{\jum})^{K_1} {\bar\varPi}_m (-{\jum})^{K_2} \big| {\widehat{\bmu}^R} \big\rangle . \label{twop15}
\end{gather}
Equation \eqref{twop21}, as well as its solution \eqref{twop111}, are valid for the coefficients (\ref{twop15}), which enumerate the nests of $(K_1 + K_2)$-step random turns walks
connecting the sites ${\widehat{\bmu}^L}$ and ${\widehat{\bmu}^R}$. Equation~(\ref{twop112}) is also modified and takes the form:
\begin{align}
\nonumber
\big|P_{K_1 + K_2} \big({\widehat{\bmu}^L} \underset{m}\rightarrow {\widehat{\bmu}^R}\big)\big|
& = {\mathfrak G} \big({\widehat{\bmu}^L}; {\widehat{\bmu}^R} \big| K_1, K_2, m\big)
\\
& = \sum_{\bro \subseteq\{({\cal M}\slash m)^N\}}
\big|P_{K_1} \big({\widehat{\bmu}^L} \rightarrow {\bro}+{\bdl}_N \big) \big|
\big|P_{K_2} \big({\bro}+{\bdl}_N \rightarrow {\widehat{\bmu}^R} \big)\big|.
\label{twop17}
\end{align}
Equation (\ref{twop17}) enumerates self-avoiding lattice paths of random turns vicious walkers going from ${\widehat{\bmu}^L}$ to ${\widehat{\bmu}^R}$ and infiltrating
through a bottleneck: the projector forbids ``visiting'' the sites from $0^{\rm th}$ to $(m-1)^{\rm th}$. The correlation function (\ref{twop13}) corresponds to summation over the nests with all possible $K_1$ and $K_2$ and over all ${{\bmu}^L}$ and ${{\bmu}^R}$.

The transition amplitude (\ref{twop13}) admits a representation of the same type as \eqref{field81}, which includes the generating functions of watermelons ${\cal P}_{{\cal M}/m}$ (\ref{gener1}) and the generating functions of watermelons with deviation ${\cal P}_{\cal M}$, (\ref{gendev}) and (\ref{gendev2}). With regard at
\eqref{twop7} and \eqref{twop13},
\eqref{twop14}, the auto-correlation function
(\ref{auto}) takes the form:
\begin{gather*}
 \Gamma\big({\bth}_{[N\backslash n]}^{ \rm g}, N, n, m, t_1, t_2\big)
 \\ \qquad
{}= \frac{\big|\CV \big({\rm e}^{{\rm i}{\bth}_{[N\backslash n]}^g}\big)\big|^2}{(M+1)^{N}}
\sum\limits_{{\bla^{L, R}}\subseteq \{{\CK}^{N-n}\}}
S_{{\widehat{\bla}^L}}\big({\rm e}^{-{\rm i} {\bth}_{[N\backslash n]}^g}\big) S_{{\widehat{\bla}^R}}\big({\rm e}^{{\rm i} {\bth}_{[N\backslash n]}^g}\big) {\cal G} \big({\widehat{\bmu}^L}; {\widehat{\bmu}^R} \big| t_1, t_2, m\big) , 
\end{gather*}
or, equivalently,
\begin{gather}
 \Gamma({\bth}_{[N\backslash n]}^{ \rm g}, N, n, m, t_1, t_2)\nonumber
 \\ \quad
{}= \frac{|\CV ({\rm e}^{{\rm i}{\bth}_{[N\backslash n]}^g})|^2}{(M+1)^{3 N}}
\!\sum\limits_{\{{\bphi}_N\}}\!\sum\limits_{\{{\bphi}^{\prime}_N\}}\!
{\rm e}^{-t_1 E_N ({\bphi}^{\prime}_N)- t_2 E_N ({\bphi}_N)}
\big|{\CV} \big({\rm e}^{{\rm i} {\bphi}_N}\big)\big|^2 \big|{\CV} \big({\rm e}^{{\rm i} {\bphi}^{\prime}_N}\big)\big|^2
{\cal P}_{\cal M}\big( {\rm e}^{-{\rm i} {\bth}_{[N\backslash n]}^g}, {\rm e}^{{\rm i} {\bphi}_N}\big)\nonumber
\\ \quad\hphantom{= \frac{|\CV ({\rm e}^{{\rm i}{\bth}_{[N\backslash n]}^g})|^2}{(M+1)^{3 N}}
\!\sum\limits_{\{{\bphi}_N\}}\!\sum\limits_{\{{\bphi}^{\prime}_N\}}\!}
{} \times
{\cal P}_{{\cal M}/m} \big({\rm e}^{-{\rm i} {\bphi}_N}, {\rm e}^{{\rm i} {\bphi}^{\prime}_N}\big) {\cal P}_{\cal M} \big({\rm e}^{-{\rm i} {\bphi}^{\prime}_N}, {\rm e}^{{\rm i} {\bth}_{[N\backslash n]}^g}\big) .
\label{auansw}
\end{gather}
The representation \eqref{auansw}
is reduced at $m=0$ to \eqref{field181}.

\section{Asymptotics of dynamical correlation functions}\label{asymp}

The representations (\ref{field181}) and (\ref{auansw}) are related with the corresponding generating functions of watermelons given by
equations~\eqref{volwater3}, \eqref{gfw5} (Proposition~\ref{Proposition3}) and \eqref{volwaterk3}, \eqref{volwaterk31} (Proposition~\ref{Proposition4}). In turn, the asymptotics of these representations are related with the numbers of plane partitions~(\ref{rep55}) in ${\cal B}(N, N, {\CK}-m)$
and the numbers of plane partitions~(\ref{rep5}) in ${\cal B}(N-n, N, {\CK})$.

Indeed, let us obtain the
leading estimates for the correlation functions at large evolution parameter $t\gg 1$ in the case of long enough chain, $M \gg 1$, and moderate $N$, $1\ll N \ll M$ (double scaling limit). The discrete parameters of summation in (\ref{field181}) and (\ref{auansw}), the same as in \eqref{ratbe7}, fill at large enough $M$ the segment $[ -\pi, \pi ]\ni\phi$. The corresponding summations are replaced approximately by multiple integration, and the matrix integrals arise for (\ref{field181}) and~(\ref{auansw}). The leading approximation for the integrals is provided by steepest descent \cite{din} due to $\cos\phi\simeq 1$ at $t\gg 1$.

Firstly, the transition amplitude \eqref{mpcf} given by \eqref{mpgf} and by the integral expressing
\eqref{qanal24} behaves at $t\gg 1$ in leading approximation:
\begin{gather}\label{aszero}
{\cal G} \big(\widehat{\bmu}^L; \widehat{\bmu}^R \big| t\big)\simeq S_{\widehat{\bla}^L}({\textbf 1}) S_{{\widehat{\bla}^R}} ({\textbf 1}) \frac{{\cal I}_N}{t^{{N^2}/{2}}} ,
\end{gather}
where $S_{\widehat{\bla}^{L, R}} ({\textbf 1})$ are given by (\ref{numbpaths}), and ${\cal I}_N$ is a special form of {\it Mehta's integral} connected with the partition function of Gaussian unitary ensemble \cite{meh}:
\begin{gather}
\displaystyle{{\cal I}_N \equiv
\frac{1}{N!}\int \limits_{-\infty}^{\infty}\!\int\limits_{-\infty}^{\infty}
\cdots\int\limits_{-\infty}^{\infty} {\rm e}^{-\frac{1}2\sum\limits_{l=1}^N
x^2_l}\!\!\prod_{1\leq k<l\leq N} \bigl|x_k-
x_l\bigr|^2 \prod_{k=1}^{N} \frac{{\rm d} x_k}{2\pi} } = \prod_{l=0}^{N-1} \frac{l!}{(2\pi)^{1/2}} .
\label{spdfpxx60}
\end{gather}
Provided that right-hand side of \eqref{spdfpxx60} is expressed through the Barnes $G$-function \cite{barn},
the following estimate arises \cite{bmnph}:
\begin{gather}
\log {\cal I}_N = \frac{N^2}{2} \log N -\frac{3 N^2}{4} + {\cal O}(\log N) ,\qquad N\gg 1 . \label{spdfpxx6}
\end{gather}

Analogously to \eqref{aszero}, the persistence of domain wall ${\cal F} \big({\bth}_{[N\backslash n]}^{ \rm g},n, t\big)$ (\ref{field181}) behaves at $t\gg 1$:
\begin{gather}\label{asp}
{\cal F} \big({\bth}_{[N\backslash n]}^{ \rm g},n, t\big) \simeq \frac{\CA(N, n)}{t^{{N^2}/{2}}} ,\\
\label{asp1}
{\CA}(N, n) \equiv A^2 (N-n, N, M-N+1)
\bigg(\frac{2\pi}{M+1}\bigg)^{N^2} {\cal I}_N^3 ,
\end{gather}
where the estimate
\begin{gather*}
\frac{1}{{\CN}^2 \big({\bth}^{ \rm g}_{[N\backslash n]}\big)}
\simeq \frac{(2\pi)^{(N-n)(N-n-1)}}{(M+1)^{(N-n)^2}} \prod_{1\leq r<s\leq N-n}
|r-s|^2 \underset{N\gg n}\approx\bigg(\frac{2\pi}{M+1}\bigg)^{N^2} {\cal I}_N^{2}
\end{gather*}
is taken into account \cite{bmumn}.

The coefficient ${\CA}(N, n)$ \eqref{asp1} includes $A^2(N-n, N, M-N+1)$, where
\begin{gather}
A(N-n, N, M-N+1) = \lim_{q\to 1} W_{q} (N-n, N, {\CK}) ,
\label{spdfpxx61}
\end{gather}
and $W_{q} (N-n, N, {\CK})$
is the generating function of watermelons with deviation \eqref{volwaterk3} or \eqref{volwaterk31}.
Equivalently, $A(N-n, N, M-N+1)$
is the number of plane partitions (\ref{wmdbd11}) in a box \mbox{${\mathcal{B}}(N-n, N, M-N+1)$} of the height $M-N+1$ and $(N-n)\times N$ bottom.

The asymptotics (\ref{aszero}) of the transition amplitude (\ref{mpcf}) and the asymptotics (\ref{asp}) of the domain wall persistence (\ref{field0})
demonstrate the power law decay governed by the same critical exponent $N^2/2$, however different combinatorial factors
characterize the amplitudes.

Combining \eqref{twop7},
\eqref{twop11} and \eqref{qanal24}, we estimate the transition amplitude
${\cal G} \big({\widehat{\bmu}^L}; {\widehat{\bmu}^R} \big| t_1, t_2, m\big)$
(see also \eqref{twop777}):
\begin{gather}
{\cal G} \big({\widehat{\bmu}^L}; {\widehat{\bmu}^R} \big| t_1, t_2, m\big) \simeq
S_{\widehat{\bla}^L} ({\textbf 1}) S_{\widehat{\bla}^R} ({\textbf 1})
 A(N, N, M-m-N+1)\frac{{\cal I}^2_N}{(t_1 t_2)^{{N^2}/{2}}}.
\label{autoas4}
\end{gather}
The coefficient $A(N, N, M-m-N+1)$ is given:
\begin{gather*}
A(N, N, M-m-N+1) = \lim_{q\to 1} W_{q} (N, {\CK} / m) ,
\end{gather*}
where $W_{q} (N, {\CK} / m)$
is the generating function of the watermelons without deviation \eqref{volwater3} or~\eqref{gfw5}. Equivalently, $A(N, N, M-m-N+1)$ is the number \eqref{npp} of plane partitions in the box $\mathcal{B} (N, N, M-m-N+1)$ of the height $M-m-N+1$ with square $N\times N$ bottom.

Eventually, the representation (\ref{auansw}) allows us to estimate the auto-correlation function (\ref{auto})
when $1\ll N \ll M$ and $t_1, t_2 \gg 1$:
\begin{gather}
\label{autoas1}
 \Gamma\big({\bth}_{[N\backslash n]}^{ \rm g}, N, n, m, t_1, t_2\big) \simeq
 \frac{\CA(N, n, m)}{(t_1 t_2)^{{N^2}/{2}}},
\\
\label{autoas2}
{\CA}(N, n, m) \equiv A(N, N, M-m-N+1)A^2 (N-n, N, M-N+1)
\bigg(\frac{2\pi}{M+1}\bigg)^{N^2} {\cal I}_N^4 .
\end{gather}
The critical exponent is the same in \eqref{autoas4} and \eqref{autoas1}, whereas the combinatorial amplitudes are different.

The representation of the generating function $W_{q} (N-n, N, {\CK})$ of the watermelons with deviation is due to Theorem~\ref{TheoremIV} provided that $\dl=n$ and $M\to\infty$. Therefore it is appropriate to use \eqref{qdetgv370} in the limiting formula \eqref{spdfpxx61} for studying \eqref{asp} or \eqref{autoas1} for the chain of infinite length.
Equation \eqref{qdetgv370}, as the norm-trace generating function, enables enumeration of the plane partitions with fixed traces of diagonal entries \cite{statm}.

We have shown that both asymptotics, \eqref{asp} or \eqref{autoas1}, are related with enumeration of the random walks of the lock step type. The asymptotical behavior of the combinatorial contributions $A(N, N, M-m-N+1)$
and $A (N-n, N, M-N+1)$ in the double scaling limit is provided
by \cite{bmumn}. Together with \eqref{spdfpxx6}, the limiting behavior of \eqref{asp} and \eqref{autoas1} is completely described in the double scaling limit.

\section{Conclusion}\label{conc}

In this paper we have dealt with the calculation of the auto-correlation functions of
the $XX0$ Heisenberg spin chain. We have presented the exact expressions of the correlation functions in terms of different types of nests of
self-avoiding lattice walks, namely in terms of lock steps and random turns walks. In particular, the representation of the persistence of domain wall correlation function admits a visualization in terms of the lattice paths infiltrating through a bottleneck.
Such a description is an essential result of our study.
The asymptotical estimates
in the double scaling limit at large value of the evolution parameter demonstrate
that the correlation functions in question are multiples of the numbers of the watermelons
or, equivalently, of the numbers of the plane partitions distributed in a high box.

\subsection*{Acknowledgments}
This work was supported by the Russian Science Foundation (Grant 18-11-00297).
We are grateful to the referees for their comments and suggestions which enabled to improve the paper.

\pdfbookmark[1]{References}{ref}
\LastPageEnding

\end{document}